\documentclass[11pt,english]{article}
\usepackage{color}
\usepackage[T1]{fontenc}
\usepackage[latin9]{inputenc}
\usepackage[a4paper]{geometry}
\geometry{verbose,tmargin=3cm,bmargin=3cm,lmargin=3cm,rmargin=2.5cm,headsep=1cm,footskip=1cm,columnsep=1cm}
\usepackage{textcomp}
\usepackage{amsthm}
\usepackage{amsmath}
\usepackage{amssymb}
\usepackage{esint}
\usepackage{bbm}
\usepackage{hyperref}
\usepackage{babel}

\usepackage{amscd}

\usepackage{amsfonts}
\usepackage{amsthm}
\usepackage{mathrsfs}
\usepackage{dsfont}

\usepackage{mathtools}

\usepackage{enumerate}
\usepackage{bm}
\usepackage{bbm}
\usepackage{verbatim}

\usepackage{enumitem}
\newlist{abbrv}{itemize}{1}
\setlist[abbrv,1]{label=,labelwidth=1.2in,align=parleft,itemsep=0.1\baselineskip,leftmargin=!}

\makeatletter

\DeclareFontEncoding{LGR}{}{}
\DeclareTextSymbol{\~}{LGR}{126}

\newcommand{\tr}{\text{Tr}}
\newcommand{\vA}{\boldsymbol{\hat{A}}_{\kappa}}
\newcommand{\vAp}{\boldsymbol{\hat{A}}^{+}_{\kappa}}
\newcommand{\vAm}{\boldsymbol{\hat{A}}^{-}_{\kappa}}
\newcommand{\vAc}{\boldsymbol{A}_{\kappa}}
\newcommand{\vApc}{\boldsymbol{A}^{+}_{\kappa}}
\newcommand{\vAmc}{\boldsymbol{A}^{-}_{\kappa}}
\newcommand{\vAd}{\boldsymbol{\mathcal{A}}}
\newcommand{\vAdm}{\boldsymbol{\mathcal{A}}^{-}}
\newcommand{\vAdp}{\boldsymbol{\mathcal{A}}^{+}}
\newcommand{\vE}{\boldsymbol{\hat{E}}_{\kappa}}
\newcommand{\vEp}{\boldsymbol{\hat{E}}^{+}_{\kappa}}
\newcommand{\vEm}{\boldsymbol{\hat{E}}^{-}_{\kappa}}
\newcommand{\vEc}{\boldsymbol{E}_{\kappa}}
\newcommand{\vEpc}{\boldsymbol{E}^{+}_{\kappa}}
\newcommand{\vEmc}{\boldsymbol{E}^{-}_{\kappa}}
\newcommand{\vEdm}{\boldsymbol{\mathcal{E}}^{-}}
\newcommand{\vEdp}{\boldsymbol{\mathcal{E}}^{+}}
\newcommand{\vEd}{\boldsymbol{\mathcal{E}}}

\newcommand{\vG}{\boldsymbol{\hat{G}}}

\newcommand{\vf}{\boldsymbol{f}}
\newcommand{\vgp}{\gamma^{\perp \Lambda}}
\newcommand{\vep}{\boldsymbol{\epsilon}}

\newcommand{\abs}[1]{| #1 |}

\newcommand{\vj}{\boldsymbol{j}}

\newcommand{\scp}[2]{\big\langle #1 , #2 \big\rangle}

\newcommand{\SCP}[2]{\big\langle #1 , #2 \big\rangle}

\newcommand{\bra}[1]{\langle #1 |}

\newcommand{\ket}[1]{| #1 \rangle}

\newcommand{\norm}[1]{\left|\left| #1 \right|\right|}

\renewcommand{\Re}{\mathrm{Re}}
\renewcommand{\Im}{\mathrm{Im}}

\newcommand{\id}{\mathbbm{1}}

\newcommand{\op}{\mathrm{op}}

\newcommand{\be}{\begin{equation}}
\newcommand{\ee}{\end{equation}}

\newtheorem{theorem}{Theorem}[section]
\newtheorem{lemma}[theorem]{Lemma}

\newtheorem{corollary}[theorem]  {Corollary}
\newtheorem{remark}[theorem]  {Remark}
\newtheorem{definition}[theorem] {Definition}

\allowdisplaybreaks[1]

\begin{document}

\title{Derivation of the Maxwell-Schr\"odinger Equations from the Pauli-Fierz Hamiltonian}

\author{
Nikolai Leopold\footnote{
IST Austria (Institute of Science and Technology Austria), Am Campus 1, 3400 Klosterneuburg, Austria. E-mail: {\tt nikolai.leopold@ist.ac.at}} \ and 
Peter Pickl\footnote{Duke Kunshan University, Duke Avenue 8, 215316 Kunshan, China.
\newline
 E-mail: {\tt peter.pickl@dukekunshan.edu.cn}} \footnote{Ludwig-Maximilians-Universit\"at M\"unchen, Theresienstra\ss e 39, {80333} M\"unchen, Germany.
 \newline E-mail: {\tt pickl@math.lmu.de}}
}

\maketitle

\begin{abstract}
\noindent
We consider the spinless Pauli-Fierz Hamiltonian which describes a quantum system of non-relativistic identical particles coupled to the quantized electromagnetic field. We study the time evolution in a mean-field limit where the number $N$ of charged particles gets large while the coupling to the radiation field is rescaled by $1/\sqrt{N}$. At time zero we assume that almost all charged particles are in the same one-body state (a Bose-Einstein condensate) and we assume also the photons to be close to a coherent state. We show that at later times and in the limit $N \rightarrow \infty$ the charged particles as well as the photons exhibit condensation, with the time evolution approximately described by the Maxwell-Schr\"odinger system, which models the coupling of a non-relativistic particle to the classical electromagnetic field.
 Our result is obtained by an extension of the "method of counting", introduced in \cite{pickl1}, to condensates of charged particles in interaction with their radiation field.
\end{abstract}

\noindent
\textbf{MSC class:} 35Q40, 81Q05, 81V10, 82C10   \\
\textbf{Keywords:} mean-field limit, Pauli-Fierz Hamiltonian, Maxwell-Schr\"odinger equations

\section{Setting of the problem}\label{sec:Introduction}
The existence of light quanta, later named photons, was first postulated by Albert Einstein in his renowned paper "On a heuristic point of view about the creation and conversion of light" \cite{einstein}. This led to the invention of Quantum Electrodynamics and supplemented the nature of light, which was formerly described as a wave in classical electromagnetism, with a particle interpretation.
During the last decades the predictions of Quantum Electrodynamics has been tested up to highest accuracy.
Nevertheless, in a lot of situations the corpuscular character of light is subordinate and the second-quantized electromagnetic field can be approximated by a classical field satisfying Maxwell's equations. In this paper, the validity of such an approximation is justified in the mean-field regime. More explicitly, we derive the Maxwell-Schr\"odinger equations from the spinless Pauli-Fierz Hamiltonian. Such a derivation is of great interest to fundamental physics. Moreover, since the applied mean-field approximation reduces the degrees of freedom of the original system tremendously explicit error bounds might also be of interest for numerical simulations.
We consider a system, described by a wave function $\Psi_{N,t} \in \mathcal{H}^{(N)}$, of N identical charged bosons in interaction with a photon field. Here,
\begin{align}
\mathcal{H}^{(N)}  \coloneqq L^2\left( \mathbb{R}^{3N} \right) \otimes \mathcal{F}_p ,
\end{align}
where the photon field is represented by elements of the Fock space 
\begin{align}
 \mathcal{F}_p  \coloneqq  \bigoplus_{n \geq 0} \left[ L^2(\mathbb{R}^3) \otimes \mathbb{C}^2 \right]^{\otimes_s^n} .
\end{align} 
 The subscript $s$ indicates symmetry under interchange of variables. The Hilbert space $\mathfrak{h} \coloneqq L^2(\mathbb{R}^3) \otimes \mathbb{C}^2$ consists of wave functions $f(k,\lambda)$, with wave number $k \in \mathbb{R}^3$ and  helicity $\lambda=1,2$. It is equipped with the inner product 
\begin{align}
\scp{f}{g}_{\mathfrak{h}} \coloneqq \sum_{\lambda=1,2} \int d^3k \, f^*(k,\lambda) g(k,\lambda).
\end{align}
The time evolution of $\Psi_{N,t}$ is governed by the Schr\"odinger equation
 \begin{align}
\label{eq: Pauli Schroedinger equation microscopic}
 i \partial_t \Psi_{N,t} = H_N \Psi_{N,t}, 
 \end{align}
where 
\begin{align}
\label{eq: Pauli-Fierz Hamiltonian}
H_N =& \sum_{j=1}^N  \left( - i  \nabla_j -  \frac{\vA(x_j)}{\sqrt{N}} \right)^2 
+ \frac{1}{N} \sum_{1\leq j < k \leq N} v(x_j - x_k) +
 H_f
\end{align}
denotes the Pauli-Fierz Hamiltonian and
\begin{align}
\vA(x) = \sum_{\lambda=1,2} \int d^3k \,  \tilde{\kappa}(k)
\frac{1}{\sqrt{2 \abs{k}}} \vep_{\lambda}(k) 
\left( e^{ikx} a(k,\lambda) + e^{-ikx} a^*(k,\lambda)  \right)
\end{align}
 the quantized transverse vector potential. The function 
\begin{equation}
\label{eq: Pauli cut off function}
\tilde{\kappa}(k) = (2 \pi)^{- 3/2} \ \id_{\abs{k}\leq \Lambda}(k), \quad
\text{with} \;  \id_{\abs{k}\leq \Lambda}(k) =
\begin{cases} 
1 &\text{if } \abs{k} \leq \Lambda , \\
0 &\text{otherwise}, 
\end{cases}
\end{equation}
cuts off the high frequency modes of the radiation field.
There are two real  polarization vectors $\vep_1(k)$ and $\vep_2(k)$  with
\begin{align}
\label{eq: polarization vectors}
\abs{\vep_1(k)} =  \abs{\vep_2(k)} = 1 , \;
\vep_1(k) \cdot k = \vep_2(k) \cdot k 
= \vep_1(k) \cdot \vep_2(k) = 0.
\end{align}
 The operator valued distributions $a(k,\lambda)$ and $a^*(k,\lambda)$ $(k \in \mathbb{R}^3, \lambda \in \{1,2\})$ are the usual pointwise annihilation and creation operators in $\mathcal{F}_p$, satisfying
\begin{align}
\label{eq: canonical commutation relation}
[a(k,\lambda), a^*(l,\mu) ] &= \delta_{\lambda,\mu} \delta(k-l), \quad
[a(k,\lambda), a(l,\mu) ] = 
[a^*(k,\lambda), a^*(l,\mu) ] = 0.
\end{align}
The energy of the photon field is given by
\begin{align}
H_f &= \sum_{\lambda=1,2} \int d^3k \, \abs{k} a^*(k,\lambda) a(k,\lambda)
\end{align}
and the potential $v$ describes a direct interaction between the charged particles.\\

\newpage

\noindent
We assume:
\begin{itemize}
\item[(A1)] 
The (repulsive) interaction potential $v$ is a positive, real, and even function satisfying 
\begin{align}
\norm{v}_{L^2+L^{\infty}} = 
\inf_{v=v_1 + v_2} \{ \norm{v_1}_{L^2(\mathbb{R}^3)}
+ \norm{v_2}_{L^{\infty}(\mathbb{R}^3)}  \} < \infty
\end{align}
such that the Pauli-Fierz Hamiltonian $H_N$ is self-adjoint on the domain 
$ \mathcal{D}(H_N) \coloneqq \mathcal{D}(\sum_{i=1}^N - \Delta_i + H_f ) $  (see \cite{hiroshima} and \cite[p.164]{spohn}).
\end{itemize}
The mean-field scaling 1/N in front of the interaction potential and the scaling $1/\sqrt{N}$ in front of the vector potential ensure that the kinetic and potential energy of  $H_N$ are of the same order. 
For simplicity, we are first interested in the evolution of initial states of the product form
\begin{align}
\label{eq: Pauli initial product state}
\varphi_0^{\otimes N} \otimes W(\sqrt{N} \alpha_0) \Omega.
\end{align}
Here, $\Omega$ denotes the vacuum in $\mathcal{F}_p$ and $W(f)$ (with $f \in \mathfrak{h}$) is the unitary Weyl operator
\begin{align}
\label{eq: Pauli Weyl operator}
W(f) \coloneqq \exp \Big( \sum_{\lambda=1,2} \int d^3k \, f(k,\lambda) a^*(k,\lambda) - f^*(k,\lambda) a(k,\lambda) \Big).
\end{align} 
This choice of initial data corresponds to situations in which both the charged particles and the photons exhibit condensation.
 Due to different types of interactions, correlations take place and the time evolved state will no longer have an exact product structure. However, for large $N$ and times of order one it can be approximated, in a sense specified below, by a state of the product form $\varphi_t^{\otimes N} \otimes W(\sqrt{N} \alpha_t) \Omega$, where 
\begin{align}
\label{eq: Pauli classical mode function}
\abs{k}^{1/2} \alpha_t(k,\lambda) \coloneqq \frac{1}{\sqrt{2}}  \vep_{\lambda}(k) \cdot \left( \abs{k} \mathcal{FT}[ \boldsymbol{A}](k,t) - i \mathcal{FT}[\boldsymbol{E}](k,t) \right)
\end{align}
and $(\varphi_t,\boldsymbol{A}(t),\boldsymbol{E}(t))$
solve the Maxwell-Schr\"odinger system\footnote{ Hereby, $(\kappa * \boldsymbol{A})(x,t) = \int d^3k \, e^{ikx} \tilde{\kappa}(k) \boldsymbol{A}(k,t)$ and $\tilde{f}$ as well as $\mathcal{FT}[f]$ are used to denote the Fourier transform of a function $f$.}
\begin{align}
\label{eq: Pauli Hartree-Maxwell system 2}
\begin{cases}
i \partial_t \varphi_t(x) &= \left( \left(-i \nabla - (\kappa * \boldsymbol{A})(x,t) \right)^2 + (v * \abs{\varphi_t}^2)(x)  \right) \varphi_t(x) ,    \\
\nabla \cdot \boldsymbol{A}(x,t) &= 0 ,  \\ 
\partial_t \boldsymbol{A}(x,t) &= -  \boldsymbol{E}(x,t),   \\
\partial_t \boldsymbol{E}(x,t) &= \left( - \Delta \boldsymbol{A}\right) (x,t) -    \left( 1 - \nabla \text{div} \Delta^{-1} \right) \left( \kappa * \vj_t \right)(x) , \\
\vj_t(x) &= 2 \left(  \Im(\varphi_t^* \nabla \varphi_t)(x) - \abs{\varphi_t}^2(x)  (\kappa * \boldsymbol{A})(x,t) \right)
\end{cases}
\end{align}
with initial datum
\begin{align}
\label{eq: Pauli initial data Hartree Maxwell}
\begin{cases}
&\varphi_0 , \\
&\boldsymbol{A}(x,0) =  (2 \pi)^{-3/2}   \sum_{\lambda=1,2} \int d^3k \, \frac{1}{\sqrt{2 \abs{k}}} \vep_{\lambda}(k) 
\left( e^{ikx} \alpha_0(k,\lambda) + e^{-ikx} \alpha_0^{*}(k,\lambda)  \right)  ,  \\
&\boldsymbol{E}(x,0) =  (2 \pi)^{-3/2}   \sum_{\lambda=1,2} \int d^3k  \,  \sqrt{\frac{\abs{k}}{2}} \vep_{\lambda}(k)  i
\left( e^{ikx} \alpha_0(k,\lambda) - e^{-ikx} \alpha_0^{*}(k,\lambda)   \right)  .
 \end{cases}
\end{align}
These equations determine the time evolution of  a single quantum particle interacting with the classical electromagnetic field it generates. The solution theory of this system is well studied, see \cite{nakamurawada} and references therein.

\section{Main result}
The physical situation we are interested in is the dynamical description of a Bose-Einstein condensate of charged particles. We start with an initial wave function of product form \eqref{eq: Pauli initial product state} (a condition that will be relaxed later) and show that the condensate is stable over time, i.e. correlations are small at later times.
Let $\Psi_{N,t} \in \left(L^2_s \left( \mathbb{R}^{3N} \right) \otimes \mathcal{F}_p \right) \cap  \mathcal{H}^{(N)}$ with  $\norm{\Psi_{N,t}}=1$. On the Hilbert space $L^2(\mathbb{R}^3)$, define the "one-particle reduced density matrix of the charged particles"  by
\begin{align}
\label{eq: definition reduced one-particle matrix charged particles}
\gamma_{N,t}^{(1,0)} \coloneqq \tr_{2,\ldots, N} \otimes \tr_{\mathcal{F}} \ket{\Psi_{N,t}} \bra{\Psi_{N,t}},
\end{align}
where $\tr_{2,\ldots, N}$  denotes the partial trace over the coordinates $x_2,\ldots, x_N$ and $\tr_{\mathcal{F}}$ the trace over Fock space. The charged particles of the many-body state $\Psi_{N,t}$ are said to exhibit complete asymptotic Bose-Einstein condensation at time $t$, if there exists 
$\varphi_t \in L^2(\mathbb{R}^3)$ with $\norm{\varphi_t}=1$, such that
\begin{align}
\label{eq: convergence reduced one-particle matrix charged particles}
\tr_{L^2(\mathbb{R}^3)} \abs{\gamma_{N,t}^{(1,0)} - \ket{\varphi_t} \bra{\varphi_t}}  \rightarrow 0,
\end{align}
as $N \rightarrow \infty$. Such $\varphi_t$ is called the condensate wave function. For other indicators of condensation and their relation we refer to \cite{michelangeli2}. Given  $\Psi_{N,t} \in \mathcal{D}(H_f)$ with  $\norm{\Psi_{N,t}}=1$, we introduce the "one-particle reduced energy matrix of the photons"  with kernel
\begin{align}
\label{eq: definition reduced one-particle matrix photon}
\gamma_{N,t}^{(0,1)}(k,\lambda;k',\lambda') \coloneqq N^{-1} \abs{k}^{1/2} \abs{k'}^{1/2} \scp{\Psi_{N,t}}{ a^*(k',\lambda')  a(k,\lambda) \Psi_{N,t}}_{\mathcal{H}^{(N)}}  .
\end{align}
$\gamma_{N,t}^{(0,1)}$ is a positive trace class operator on $\mathfrak{h}$ with $\tr_{\mathfrak{h}}(\gamma_{N,t}^{(0,1)})= N^{-1} \scp{\Psi_{N,t}}{H_f \Psi_{N,t}}_{\mathcal{H}^{(N)}}$.
It is important to note, that \eqref{eq: definition reduced one-particle matrix photon} differs from the usual definition (e.g. \cite[p.8]{rodnianskischlein}) by the weight factor  $\abs{k}^{1/2} \abs{k'}^{1/2} \scp{\Psi_{N,t}}{\mathcal{N} \Psi_{N,t}}_{\mathcal{H}^{(N)}}/N$ with $\mathcal{N}$ being the number of photons operator. Our choice ensures that we neglect photons with small energies and measure only deviations from the photon field that are at least of order $N$. This is reasonable because due to the scaled coupling many photon states with a mean particle number smaller than of order $N$ only have a subleading effect on the dynamics of the charged particles. We say the photons exhibit asymptotic Bose-Einstein condensation, if there exists a state $u_t \in \mathfrak{h}$, such that
\begin{align}
\label{eq: convergence reduced one-particle matrix photon}
\tr_{\mathfrak{h}} \abs{\gamma_{N,t}^{(0,1)} - \ket{u_t} \bra{u_t}} \rightarrow 0,
\end{align}
 as $N \rightarrow \infty$. \\
In order to prove our main theorem, we have to require regularity assumptions on the solutions of the Maxwell-Schr\"odinger system.
\begin{definition}
\label{def: Pauli assumptions on the solutions of the Maxwell-Schroedinger system}
Let $m \in \mathbb{N}$, and $H^m(\mathbb{R}^3, \mathbb{C}^k)$ denote the Sobolev space of order $m$ with norm
$\norm{f}_{H^m(\mathbb{R}^3,\mathbb{C}^k)}^2 = \sum_{i=1}^k \int d^3k \, (1 + \abs{k}^2)^m \abs{\mathcal{FT}[f^i](k)}^2 $. We define the following set of solutions of the Maxwell-Schr\"odinger equations:
\begin{align}
\label{eq: Pauli assumptions on the solutions of the Maxwell-Schroedinger system}
(\varphi_t,\boldsymbol{A}(t),\boldsymbol{E}(t)) \in \mathcal{G} \Leftrightarrow  &(a) \quad (\varphi_0,\boldsymbol{A}(0),\boldsymbol{E}(0)) \; \text{is given by} \; \eqref{eq: Pauli initial data Hartree Maxwell} \; \text{with} \; \alpha_0 \in \mathfrak{h} \; \text{and} \;  \varphi_0 \in L^2(\mathbb{R}^3)
\nonumber \\
&(b) \quad (\varphi_t,\boldsymbol{A}(t),\boldsymbol{E}(t))  \; \text{is a $L^2 \oplus L^2 \oplus L^2$ solution of \eqref{eq: Pauli Hartree-Maxwell system 2}} 
\nonumber \\
  &(c) \quad (\varphi_t,\boldsymbol{A}(t),\boldsymbol{E}(t)) \in H^2(\mathbb{R}^3,\mathbb{C}) \oplus H^2(\mathbb{R}^3,\mathbb{C}^3) \oplus H^1(\mathbb{R}^3,\mathbb{C}^3)
  \nonumber \\
  &(d) \quad \norm{\varphi_t}_{L^2(\mathbb{R}^3)} =1
  \nonumber \\
 & \qquad  \; \,  \text{for all $t \geq 0$. }
\end{align}
\end{definition}
\noindent
We expect these assumptions to follow from appropriately chosen initial data. In the absence of a cutoff function and $v$ being the Coulomb potential it has for example been shown in \cite{nakamurawada} that the Maxwell-Schr\"odinger system is globally well-posed in  the space\footnote{The direct sum of the Sobolev spaces refers to $(\varphi_t, \boldsymbol{A}(t), \boldsymbol{E}(t))$.} $C(\mathbb{R}_t, H^2(\mathbb{R}^3,\mathbb{C}) \oplus H^2(\mathbb{R}^3,\mathbb{C}^3) \oplus H^1(\mathbb{R}^3,\mathbb{C}^3))$.
\\
\noindent 
Moreover, we have to introduce the energy mode function of the electromagnetic field\footnote{
The name energy mode function is motivated from the fact that $\norm{u_t}_{\mathfrak{h}}^2 = \scp{u_t}{u_t}_{\mathfrak{h}}$ is the energy of the electromagnetic field.
It should be noted that $u_t \in \mathfrak{h}$ and $\mathcal{E}_M[\varphi_t,u_t] < \infty$ follow from $(\varphi_t,\boldsymbol{A}(t),\boldsymbol{E}(t)) \in \mathcal{G}$ (see \eqref{eq: Pauli regularity 1} and \eqref{eq: Pauli regularity 2}).} 
\begin{align}
\label{eq: Pauli energy mode function}
u_t(k,\lambda) \coloneqq  \abs{k}^{1/2} \alpha(k, \lambda)  \coloneqq \frac{1}{\sqrt{2}} \vep_{\lambda}(k) \cdot \left( \abs{k} \mathcal{FT}[\boldsymbol{A}](k,t) - i  \mathcal{FT}[\boldsymbol{E}](k,t) \right).
\end{align}
and the energy functional of the Maxwell-Schr\"odinger system
\begin{align}
\label{eq: Pauli energy functional of the HM system}
\mathcal{E}_M\left[\varphi_t, u_t \right]
&\coloneqq \norm{\left(- i \nabla - \vAc(t) \right) \varphi_t}^2 + 1/2 \scp{\varphi_t}{\left( v * \abs{\varphi_t}^2 \right) \varphi_t}  + \norm{u_t}_{\mathfrak{h}}^2. 
\end{align}

\begin{theorem}
\label{theorem: Pauli main theorem}
Let $v$ satisfy (A1),  $\varphi_0 \in L^2(\mathbb{R}^3)$ with $\norm{\varphi_0} =1$,   $u_0 \in \mathfrak{h}$ so that $\alpha_0 = \abs{k}^{-1/2}  u_0 \in \mathfrak{h}$ 
and $\Psi_{N,0} \in \mathcal{D}(H_N)\cap \left(L_s^2(\mathbb{R}^{3N})\otimes \mathcal{F}_p \right)$ such that
\begin{align}
a_N &\coloneqq \text{Tr}_{L^2(\mathbb{R}^3)} \abs{\gamma_{N,0}^{(1,0)} - \ket{\varphi_0} \bra{\varphi_0}} \rightarrow 0, \\
b_N &\coloneqq N^{-1} \scp{W^{-1}(\sqrt{N} \alpha_0)\Psi_{N,0}}{H_f W^{-1}(\sqrt{N} \alpha_0)\Psi_{N,0}}_{\mathcal{H}^{(N)}}
\rightarrow 0 \; \text{and}  \\
\label{eq: Pauli initial condition in main theorem variance of the energy}
c_N  &\coloneqq \norm{\left(N^{-1} H_N - \mathcal{E}_M\left[\varphi_0, u_0  \right] \right) \Psi_{N,0} }_{\mathcal{H}^{(N)}}^2 \rightarrow 0
\end{align}
as $N \rightarrow \infty$.
Let $\Psi_{N,t}$ be the unique solution of \eqref{eq: Pauli Schroedinger equation microscopic}, $(\varphi_t,\boldsymbol{A}(t),\boldsymbol{E}(t)) \in \mathcal{G}$ and $u_t$ be defined by \eqref{eq: Pauli energy mode function}.
Then, there exists a monotone increasing function $C(s)$ of the norms  $\norm{\varphi_{s}}_{H^2(\mathbb{R}^2, \mathbb{C})}$, $\norm{\boldsymbol{A}(s)}_{H^2(\mathbb{R}^3, \mathbb{C}^3)}$ and $\norm{\boldsymbol{E}(s)}_{L^2(\mathbb{R}^3, \mathbb{C}^3)}$ such that
\begin{align}
 \label{eq: main theorem 1}
\text{Tr}_{L^2(\mathbb{R}^3)} \abs{\gamma_{N,t}^{(1,0)} - \ket{\varphi_t} \bra{\varphi_t}} &\leq
\sqrt{a_N + b_N + c_N + N ^{-1} } \, \Lambda e^{\Lambda^4 \int_0^t ds \, C(s)} , \\
\label{eq: main theorem 2}
\text{Tr}_{\mathfrak{h}} \abs{\gamma_{N,t}^{(0,1)} - \ket{u_t}\bra{u_t}}  &\leq  
\sqrt{a_N + b_N + c_N + N^{-1} } \,  \Lambda C(s) e^{\Lambda^4 \int_0^t ds \, C(s)} .
\end{align}
for any $t \geq 0$.
In particular, for $\Psi_{N,0} = \varphi_{0}^{\otimes N} \otimes  W(\sqrt{N} \alpha_0) \Omega$ one obtains
\begin{align}
 \label{eq: main theorem 3}
\text{Tr}_{L^2(\mathbb{R}^3)} \abs{\gamma_{N,t}^{(1,0)} - \ket{\varphi_t} \bra{\varphi_t}} &\leq
 N^{-1/2}  \Lambda^2 e^{\Lambda^4 \int_0^t ds \, C(s)}, \\
\label{eq: main theorem 4}
\text{Tr}_{\mathfrak{h}} \abs{\gamma_{N,t}^{(0,1)} - \ket{u_t}\bra{u_t}}  &\leq
 N^{-1/2}  \Lambda^2 C(s) e^{\Lambda^4 \int_0^t ds \, C(s)}.
\end{align}
\end{theorem}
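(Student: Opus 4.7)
The plan is to adapt Pickl's ``method of counting'' to the coupled particle-photon system by introducing a single Lyapunov-type functional
\[
\beta(t) \;:=\; \alpha(t) + b(t) + c(t) + N^{-1}
\]
that simultaneously controls both types of condensation, and deriving a Grönwall inequality $\dot\beta(t)\le \Lambda^{4} C(t)\beta(t)$ for it. Here $\alpha(t) := \langle \Psi_{N,t}, \hat q^{\varphi_t}_1 \Psi_{N,t}\rangle$ with $\hat q^{\varphi_t} := \id - |\varphi_t\rangle\langle\varphi_t|$ counts the charged particles outside the condensate; $b(t) := N^{-1}\langle \widetilde\Psi_{N,t}, H_f \widetilde\Psi_{N,t}\rangle$ with $\widetilde\Psi_{N,t} := W^{*}(\sqrt N \alpha_t)\Psi_{N,t}$ measures the photon fluctuations around the coherent state in the Weyl-shifted frame; and $c(t) := \norm{(N^{-1}H_N - \mathcal E_M[\varphi_t,u_t])\Psi_{N,t}}^2$ is the energy-variance appearing in \eqref{eq: Pauli initial condition in main theorem variance of the energy}, which is in fact \emph{conserved} along the joint flow because $H_N$ commutes with $e^{-itH_N}$ and $\mathcal E_M$ is a first integral of the Maxwell-Schrödinger system. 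At $t=0$ one has $\alpha(0)\le a_N$ (the easy direction of the counting/trace-norm comparison), $b(0)=b_N$, and $c(t)\equiv c_N$.

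The core of the argument is the computation of $\dot\alpha(t)$ and $\dot b(t)$. For $\dot\alpha(t)$, the standard splitting $i\langle\Psi_{N,t},[H_N,\hat q^{\varphi_t}_1]\Psi_{N,t}\rangle + \langle\Psi_{N,t},\partial_t\hat q^{\varphi_t}_1\Psi_{N,t}\rangle$, combined with substitution of \eqref{eq: Pauli Hartree-Maxwell system 2} for $\partial_t\varphi_t$, produces systematic cancellations between the microscopic and the effective dynamics; what survives consists of (i) pair-interaction fluctuations $v(x_1-x_2)-(v*|\varphi_t|^2)(x_1)$ and (ii) vector-potential fluctuations $N^{-1/2}\vA(x_1)-(\kappa*\boldsymbol A)(x_1,t)$. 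Terms of type (i) are handled by the classical weighted-projector trick (inserting $\id = p_1p_2 + \cdots$, exploiting bosonic symmetry and the decomposition $\norm{v}_{L^2+L^\infty}$), giving a bound $\lesssim \alpha(t)+N^{-1}$. Terms of type (ii) are the genuinely new ingredient: conjugating by $W(\sqrt N\alpha_t)$ shifts the classical field back to the vacuum, so the fluctuation becomes the pure quantum field $N^{-1/2}\vA(x_1)$ acting on $\widetilde\Psi_{N,t}$, whose contribution can be bounded via the standard $\norm{a(f)\Phi}\le \norm{|k|^{-1/2}f}_{\mathfrak h}\norm{H_f^{1/2}\Phi}$ and its $a^*$ analogue, producing an upper bound of order $\sqrt{b(t)}+N^{-1}$ with constants polynomial in $\Lambda$. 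The derivative $\dot b(t)$ is computed via the Weyl-conjugated effective Hamiltonian $W^{*}(\sqrt N\alpha_t)H_N W(\sqrt N\alpha_t)+iW^{*}(\partial_t W)$ acting on $\widetilde\Psi_{N,t}$; the Maxwell equations in \eqref{eq: Pauli Hartree-Maxwell system 2} are tailored precisely so that all $O(N)$ and $O(\sqrt N)$ terms in this operator cancel, and the remaining commutator with $H_f$ yields a polynomial in $a,a^{*}$ contracted against the density fluctuation $|\varphi_t|^2 - N^{-1}\sum_j\delta(\cdot - x_j)$ and smooth kernels built from $\kappa$ and $(\kappa*\boldsymbol A)$, again bounded by $\alpha(t)+b(t)+N^{-1}$. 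All constants depend polynomially on $\norm{\varphi_t}_{H^2}$, $\norm{\boldsymbol A(t)}_{H^2}$, $\norm{\boldsymbol E(t)}_{H^1}$, and on $\Lambda$, which matches the shape of $C(s)$ and the $\Lambda^4$ in the exponent.

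Grönwall's lemma then gives $\beta(t)\le(a_N+b_N+c_N+N^{-1})\Lambda^{2}\exp(\Lambda^{4}\int_0^t C(s)\,ds)$, and the trace-norm bounds \eqref{eq: main theorem 1}--\eqref{eq: main theorem 2} follow from the standard inequality $\tr|\gamma^{(1,0)}_{N,t}-|\varphi_t\rangle\langle\varphi_t||\le C\sqrt{\alpha(t)}$ together with a parallel estimate for the photon reduced density in the Weyl-shifted frame (relating $\tr|\gamma^{(0,1)}_{N,t}-|u_t\rangle\langle u_t||$ to $\sqrt{b(t)}$ up to factors of $C(s)$). The product-coherent initial datum \eqref{eq: Pauli initial product state} is the special case $a_N=b_N=0$ and $c_N=O(N^{-1})$, the latter by direct computation exploiting $a(k,\lambda)W(\sqrt N\alpha_0)\Omega=\sqrt N\alpha_0(k,\lambda)W(\sqrt N\alpha_0)\Omega$, whence \eqref{eq: main theorem 3}--\eqref{eq: main theorem 4}. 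The main obstacle I anticipate is the cross terms in $\dot\alpha$ and $\dot b$ where $-i\nabla_j$ multiplies the creation/annihilation part of $\vA(x_j)$: operators such as $a(k,\lambda)\nabla_j$ are not bounded by $H_f^{1/2}$ alone, so closing the estimate will require reintroducing the kinetic energy and absorbing it through the conserved variance $c(t)\equiv c_N$, at the price of further factors of $\Lambda$. It is precisely this interplay, together with the tailored Weyl shift, that forces a joint functional containing all three of $\alpha,b,c$ and explains why the hypothesis of the theorem must involve all three smallness parameters $a_N, b_N, c_N$.
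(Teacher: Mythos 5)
Your proposal is correct and follows essentially the same route as the paper: the same three-part counting functional (your Weyl-shifted $b(t)=N^{-1}\langle W^{*}(\sqrt{N}\alpha_t)\Psi_{N,t},H_f\,W^{*}(\sqrt{N}\alpha_t)\Psi_{N,t}\rangle$ is unitarily identical to the paper's $\beta^b(t)=\sum_{\lambda}\int d^3k\,|k|\,\|(N^{-1/2}a(k,\lambda)-\alpha_t(k,\lambda))\Psi_{N,t}\|^2$, an identity the paper itself records when treating the initial data), the same conservation of the energy variance $\beta^c$, the same Gr\"onwall scheme, and crucially the same device of absorbing the unbounded minimal-coupling terms through the conserved variance. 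The only cosmetic difference is that the paper estimates the field fluctuations directly via the differences $N^{-1/2}\boldsymbol{\hat{A}}_{\kappa}-\boldsymbol{A}_{\kappa}$ and the electric-field identity of Lemma~\ref{lemma: Pauli auxiliary fields}, rather than conjugating the dynamics by the Weyl operator.
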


\begin{remark}
Assumption (A1) allows to consider the Coulomb potential $v(x) =  \abs{x}^{-1}$. The requirements on the interaction potential can easily be relaxed  because our estimates only rely on the finiteness of $\norm{v * \abs{\varphi_t}^2}_{L^{\infty}(\mathbb{R}^3)}$ and  $\norm{v^2 * \abs{\varphi_t}^2}_{L^{\infty}(\mathbb{R}^3)}$. This is captured by (A1) and $\varphi_t \in H^2(\mathbb{R}^3)$ but also by other means.
\end{remark}

\begin{remark}
For simplicity we apply the mean-field scaling $1/N$ in front of the direct interaction. Using techniques from \cite{pickl2} and \cite{picklgp3d} it seems possible to treat the direct interaction also in the NLS or Gross-Pitaevskii regime.
\end{remark}

\begin{remark}
The minimal coupling term in the Pauli-Fierz Hamiltonian leads to an interaction between the charges and the radiation field which is more singular than for example in the Nelson model. This makes it difficult to control the number of photons with small energies during the time evolution and causes technical problems in the estimates.
To overcome these difficulties we we neglect contributions from photons with small energies and restrict our initial data to a subspace of many-body states whose energy per particle only fluctuates little around the energy of the effective system. This is the reason why we consider the one-particle reduced energy matrix instead of the one-particle reduced density matrix of the photons. Moreover, it explains the appearance of condition \eqref{eq: Pauli initial condition in main theorem variance of the energy}.
\end{remark}

\begin{remark}
The ultraviolet cutoff is essential in our derivation but can be chosen N-dependent.
\end{remark}

\section{Comparison with the literature}  
  
Derivations of classical field equations from Many-body Quantum Dynamics has been established in a series of works:
In \cite{ginibrenironivelo}, Ginibre, Nironi and Velo derived the Schr\"odinger-Klein-Gordon system of equations from the Nelson model with cutoff. They considered a mean-field limit where a finite number of charged particles interacts with a coherent state of gauge bosons whose particle number goes to infinity. Falconi \cite{falconi}  derived  the Schr\"odinger-Klein-Gordon system of equations in a mean-field limit where both the number of the charged particles and the gauge bosons go to infinity. 
Making use of a Wigner measure approach Ammari and Falconi \cite{ammarifalconi} were able to establish the classical limit of the renormalized Nelson model without cutoff.
The replacement of quantized radiation fields by classical interactions has also been justified in other limits.
Teufel \cite{teufel} considered the adiabatic limit of the Nelson model and showed that the interaction mediated by the quantized radiation field is well approximated by a direct Coulomb interaction. In \cite{frankschlein} and \cite{frankgang}, Frank, Gang and Schlein showed that in the strong coupling limit the dynamics of a polaron is described by an effective equation, in which the phonon field is treated as a classical field.
Knowles \cite{knowles2} analyzed a finite number of heavy particles in a strong radiation field and derived the Newton-Maxwell equations from the Pauli Fierz Hamiltonian.
In \cite{sindelka} it is shown  that the semiclassical set of coupled Maxwell-Schr\"odinger equations is obtained by neglecting certain terms of the Pauli-Fierz Hamiltonian. 
To our best knowledge, this is the first rigorous result concerning a mean-field limit of the Pauli-Fierz Hamiltonian. This work continues the master thesis \cite{vytas}.

\section{Notations}
We set Planck's constant $\hbar$, the speed of light $c$, the charge $e$, and twice the mass of the particles $2m$ equal to one.
Except in definitions, results, and where confusion might be possible, we refrain from indicating the explicit dependence of a quantity on the time $t$. We use the notations $\varphi(t)$ and $\varphi_t$ interchangeably to denote a quantity $\varphi$ at time t. The symbol $C$ is used as a generic positive constant independent of $t$, $N$ and $\Lambda$. We use expressions like $C(\norm{\varphi}_{H^2(\mathbb{R}^3)}, \norm{\boldsymbol{A}}_{L^2(\mathbb{R}^3, \mathbb{C}^3)})$ to denote positive monotone increasing function of the norms indicated.
Both $\tilde{f}$ and $\mathcal{FT}[f]$ stand for the Fourier transform of $f$. 
With a slight abuse of notation $\boldsymbol{A}$ and $\boldsymbol{E}$  denote the vector potential and the electric field, but also their respective Fourier transforms. If we write $\boldsymbol{A}(t)$ or $\boldsymbol{E}(t)$,  we always refer to the coordinate representation of the electromagnetic fields.
Furthermore, we apply the shorthand notation $\vAc(x,t) = \left( \kappa * \boldsymbol{A} \right)(x,t)$. 
$H^m(\mathbb{R}^3, \mathbb{C}^k)$ stand for the Sobolev space of order m with norm
$\norm{f}_{H^m(\mathbb{R}^3,\mathbb{C}^k)}^2 = \sum_{i=1}^k \int d^3k \, (1 + \abs{k}^2)^m \abs{\mathcal{FT}[f^i](k)}^2 $.
To simplify the notation we use $H^m(\mathbb{R}^3)$ for $H^m(\mathbb{R}^3, \mathbb{C}^k)$ with $k \in \{1,2,3 \}$.
 $\norm{A}_{HS} = \sqrt{Tr A^* A}$ stands for the Hilbert-Schmidt norm and $\SCP{\cdot}{\cdot}$ denotes the scalar products on $\mathcal{H}^{(N)}$, $L^2(\mathbb{R}^3)$ and $\mathfrak{h}$. Furthermore, we use the shorthand notation $\SCP{\cdot}{\cdot}_{;y} = \int d^3y \, \SCP{\cdot}{\cdot}$ and $\norm{\cdot}_{;y}^2 = \int d^3y \, \SCP{\cdot}{\cdot}$. In order to stress the connection between the annihilation operator and the mode function of the electromagnetic field we mostly write $\abs{k}^{1/2} \alpha_t(k,\lambda)$ instead of $u_t(k,\lambda)$.

\section{Organization of the proof}
\label{section: Pauli organization of the proof}

Our result is obtained by an extension of the "method of counting", introduced in \cite{pickl1}, to condensates of charged particles in interaction with their radiation field \cite{leopold}. The key idea is not to prove condensation in terms of reduced density matrices but to consider a different indicator of condensation. More specific, we introduce a  functional $\beta(t): \mathcal{D}(H_N) \times H^2(\mathbb{R}^3) \times \mathfrak{h}  \rightarrow \mathbb{R}_0^+ $ with the properties:
\begin{itemize}
\item[(a)] $\beta(0) \coloneqq \beta\left[\Psi_0, \varphi_0, u_0\right] \rightarrow 0$ as $N \rightarrow 0$ for appropriately chosen initial data.
\item[(b)]  For each $t \in \mathbb{R}_0^+$ we are able to control the time-derivative of the functional by $\abs{d_t \beta(t)} \leq C (\beta(t) + N^{-1})$. Then, $\beta(t) \leq e^{Ct} (\beta(t) + N^{-1})$ follows by Gronwall's Lemma. 
\item[(c)] $\beta(t) \rightarrow 0$ as $N \rightarrow \infty$ implies \eqref{eq: convergence reduced one-particle matrix charged particles} and \eqref{eq: convergence reduced one-particle matrix photon}.
\end{itemize}

\noindent
The proof is organized as follows:
\begin{itemize}

\item[(a)] In Section~\ref{section: Pauli the functional} we define the counting functional. Afterwards, we show that  convergence of the functional to zero in the limit $N \rightarrow \infty$ implies condensation in terms of reduced density matrices.

\item[(b)]  In section~\ref{section: Pauli estimates on the time derivative} we control the growth of $\beta$ by means of a Gronwall estimate. To this end, we provide preliminary estimates and control the time derivative of $\beta$.

\item[(c)]  Then, we relate the value of the functional at time zero to the initial data of Theorem~\ref{theorem: Pauli main theorem}. 

\end{itemize}

\noindent
Subsequently, we require $(\varphi_t,\boldsymbol{A}(t),\boldsymbol{E}(t)) \in \mathcal{G}$. This implies
\begin{align}
\label{eq: Pauli regularity 1}
\quad  \norm{\nabla \varphi_t} <& \infty ,
\quad  \norm{\Delta \varphi_t} < \infty ,
\quad  \norm{\varphi_t}_{\infty} < \infty ,
\quad \norm{\vAc(t)}_{\infty} < \infty ,  \\
\label{eq: Pauli regularity 2}
\norm{u_t}_{\mathfrak{h}} <& \infty  , 
\quad ||\abs{\cdot}^{1/2} u_t ||_{\mathfrak{h}} =  \big( \sum_{\lambda=1,2} \int \, d^3k \abs{k}^2  \abs{\alpha_t(k,\lambda)}^2 \big)^{1/2} < \infty.
\end{align}
\begin{proof}
Assuming  that $(\varphi_t,\boldsymbol{A}(t),\boldsymbol{E}(t)) \in \mathcal{G}$ we have $\varphi_t \in H^2(\mathbb{R}^3)$ and the first three relations follow from Sobolev inequalities. In order to show the finiteness of $\norm{\vAc(t)}_{\infty}$ we define the functions
\begin{align}
\tilde{\kappa}_{<}(k) \coloneqq (2 \pi)^{-3/2} \id_{{\abs{k} \leq 1}}(k), 
\qquad
\tilde{\kappa}_{>}(k) \coloneqq  (2 \pi)^{-3/2} \abs{k}^{-2} \id_{{1 \leq \abs{k} \leq \Lambda}}(k)
\end{align}
and continue with 
\begin{align}
\vAc(x,t) &= (2 \pi)^{-3/2} \int d^3k \, e^{ikx} \id_{\abs{k} \leq \Lambda}(k) \boldsymbol{A}(k,t)  
= (2 \pi)^{-3/2} \int d^3k \, e^{ikx} \id_{\abs{k} \leq 1}(k) \boldsymbol{A}(k,t) 
\nonumber  \\
&+ (2 \pi)^{-3/2} \int d^3k \, e^{ikx} \abs{k}^{-2} \id_{1 \leq \abs{k} \leq \Lambda}(k)  \abs{k}^2 \boldsymbol{A}(k,t) 
\nonumber \\
&= \left( \kappa_{<} * \boldsymbol{A} \right)(x,t)
- \left( \kappa_{>}  *  \Delta \boldsymbol{A} \right)(x,t).
\end{align}
So if we use Young's inequality, $\norm{\kappa_{<}}_2^2 = (6 \pi^2)^{-1}$ and $\norm{\kappa_{>}}_2^2 \leq (4 \pi^2)^{-1}$, we get 
\begin{align}
\label{eq: Pauli infitinity norm of the classical vector potential with cutoff}
\norm{\vAc(t)}_{\infty} 
&\leq \norm{\left( \kappa_{<} * \boldsymbol{A} \right)(t)}_{\infty}
+  \norm{\left( \kappa_{>} * \Delta \boldsymbol{A} \right)(t)}_{\infty} 
\nonumber \\
&\leq \norm{\kappa_{<}} \norm{\boldsymbol{A}(t)}
+ \norm{\kappa_{>}} \norm{\Delta \boldsymbol{A}(t)}
< \norm{\boldsymbol{A}(t)}_{H^2(\mathbb{R}^3)}.
\end{align}
By means of 
\begin{align}
\sum_{\lambda=1,2} \vep_{\lambda}^i(k) \vep_{\lambda}^j(k) = \delta_{ij} - \frac{k_i k_j}{\abs{k}^2}
\end{align}
one easily shows
\begin{align}
\label{eq: Pauli dependence of the field energy on A and E}
\norm{u_t}_{\mathfrak{h}}^2
&= 1/2 \int d^3k \, \left( \abs{k}^2  \boldsymbol{A}^2(k,t) + \boldsymbol{E}^2(k,t)  \right) 
\leq  \norm{\boldsymbol{A}(t)}_{H^1(\mathbb{R}^3)}^2 +  \norm{\boldsymbol{E}(t)}^2 ,
\nonumber \\
||\abs{\cdot}^{1/2} u_t ||_{\mathfrak{h}}^2
&= 1/2 \int d^3k \,  \left( \abs{k}^3 \boldsymbol{A}^2(k,t) + \abs{k} \boldsymbol{E}^2(k,t)  \right) 
\nonumber \\
&\leq \norm{\boldsymbol{A}(t)}_{H^1(\mathbb{R}^3)} \norm{ \boldsymbol{A}(t)}_{H^2(\mathbb{R}^3)} +  \norm{\boldsymbol{E}(t)} \norm{ \boldsymbol{E}(t)}_{H^1(\mathbb{R}^3)} .
\end{align}
\end{proof}

\section{The counting functional}
\label{section: Pauli the functional}
In this section, we introduce a new indicator of condensation referred to as the "counting functional". Our system under consideration \eqref{eq: Pauli Schroedinger equation microscopic} describes the interaction between charged particles and the quantized electromagnetic field.
Initially, we assume the charges and photons to exhibit condensation and we would like to show that both condensates are stable over time. 
In case of the charges, this is done by means of a functional, denoted by $\beta^a$, which counts for each time $t$ the relative number of charges which are not in the state $\varphi_t$. 
Under suitable conditions on the photon field it is then possible to show that the rate of particles which leave the condensate is small, if initially almost all particles are in the state $\varphi_0$. The situation is different for the radiation field because the number of photons is not a conserved quantity. On that account not only existing photons gets correlated but also new photons are created or destroyed. One should note that the high frequency modes of the radiation field do not interact with the charges due to the ultraviolet cutoff \eqref{eq: Pauli cut off function} and evolve according to the free evolution. This is why neither the number of photons changes nor the photon state shows correlations for wave-numbers $\abs{k} \geq \Lambda$. However in the long wave-length sector of $\mathcal{F}_b$ correlations take place and the number of photons varies. To show that the photon field remains coherent we introduce the functional $\beta^b$ measuring  for each time $t$ the fluctuations of the photon field around the classical mode function.
An additional factor of $\abs{k}$ in the integral implies that we neglect contributions from photons with small energies. The main difficulties in our derivation arise from the minimal coupling term in the Pauli-Fierz Hamiltonian. On that account we have to control expectation values of certain unbounded operators, see Subsection~\ref{subsection: Pauli preliminary estimates}.
 This is established by $\beta^c$ which restricts our consideration to a subspace of many-body states whose energy per particle only fluctuates little around the energy functional of the effective system. \\
In order to define the counting functional we introduce the  projectors $p^{\varphi_t}_j$ and $q^{\varphi_t}_j$.

\begin{definition}
For any $N \in \mathbb{N}$, $\varphi_t \in L^2(\mathbb{R}^3)$ with $\norm{\varphi_t}=1$ and $1 \leq j \leq N$ we define the time-dependent projectors
$p_j^{\varphi_t}: L^2(\mathbb{R}^{3N}) \rightarrow  L^2(\mathbb{R}^{3N})$ and
$q_j^{\varphi_t}: L^2(\mathbb{R}^{3N}) \rightarrow  L^2(\mathbb{R}^{3N})$  by
\begin{align}
p_j^{\varphi_t} f(x_1, \ldots, x_N)
\coloneqq \varphi_t(x_j) \int d^3x_j \, \varphi_t^*(x_j) f(x_1, \ldots, x_N) 
\quad \text{for all} \; f \in L^2(\mathbb{R}^{3N})
\end{align}
and $q_j^{\varphi_t} \coloneqq 1 - p_j^{\varphi_t}$.\footnote{For ease of notation we mostly omit the superscript $\varphi_t$ in the following. Additionally, we use the bra-ket notation $p_j^{\varphi_t} = \ket{\varphi_t(x_j)} \bra{\varphi_t(x_j)}$} 
\end{definition}

\noindent
The counting functional is defined by 
\begin{definition}
\label{definition: beta-functional}
Let $\Psi_{N,t} \in \mathcal{D}(H_N)$, $\varphi_t \in H^2(\mathbb{R}^3)$, $u_t \in \mathfrak{h}$ and $\mathcal{E}_M\left[\varphi_t, u_t \right]$ defined by \eqref{eq: Pauli energy functional of the HM system}. Then
\begin{align}
\beta^a(\Psi_{N,t},\varphi_t) &\coloneqq  \SCP{\Psi_{N,t}}{q_1 \otimes \id_{\mathcal{F}_p} \, \Psi_{N,t}} , 
\nonumber \\
\beta^b(\Psi_{N,t},u_t) &\coloneqq \sum_{\lambda=1,2} \int d^3k \,  \norm{\left( \abs{k}^{1/2} \frac{ a(k,\lambda)}{\sqrt{N}} - u_t(k,\lambda) \right) \Psi_{N,t}}^2
\nonumber \\
&= \sum_{\lambda=1,2} \int d^3k \, \abs{k} \SCP{\left( \frac{a(k,\lambda)}{\sqrt{N}} - \alpha_t(k,\lambda) \right) \Psi_{N,t}}{ \left( \frac{a(k,\lambda)}{\sqrt{N}} - \alpha_t(k,\lambda) \right) \Psi_{N,t}}, 
\nonumber  \\
\beta^c(\Psi_{N,t},\varphi_t, u_t) &\coloneqq \SCP{\left( \frac{H_N}{N} - \mathcal{E}_M[\varphi_t,u_t] \right) \Psi_{N,t}}{\left( \frac{H_N}{N} - \mathcal{E}_M[\varphi_t, u_t] \right) \Psi_{N,t}}.
\end{align}
The functional  $\beta: \mathcal{D}(H_N) \times H^2(\mathbb{R}^3) \times \mathfrak{h}  \rightarrow \mathbb{R}_0^+ $
is given by
$ \beta \coloneqq \beta^a + \beta^b + \beta^c$.
\end{definition}

\noindent
The functional $\beta^a$ was already used in \cite{anapolitanosmott, pickl4, knowles, michelangeli, pickl1, pickl2, picklgp3d} and others to derive the Hartree and Gross-Pitaevskii equation, while $\beta^b$ and $\beta^c$ are introduced to control the interaction with the radiation field.

\section{Relation to reduced density matrices} 
\label{subsection: Pauli relation to reduced density matrices}

The aim of this section is to show that condensation indicated by the counting functional, $\beta \rightarrow 0$ as $N \rightarrow \infty$, implies condensation in terms of reduced density matrices.

\lemma{
\label{lemma: Pauli relation between beta and reduced density matrices}
Let $\Psi_{N,t} \in \mathcal{D}(H_N)$, $\varphi_t \in L^2(\mathbb{R}^3)$ with  
$\norm{\varphi_t}=1$ and $u_t \in \mathfrak{h}$. Then
\begin{align}
\label{eq: Pauli relation between beta and reduced density matrices 1}
 \beta^a(\Psi_{N,t},\varphi_t) \leq&  \text{Tr}_{L^2(\mathbb{R}^3)} \abs{\gamma_{N,t}^{(1,0)} - \ket{\varphi_t} \bra{\varphi_t}} \leq  \sqrt{8  \beta^a(\Psi_{N,t},\varphi_t)}  ,  \\
\label{eq: Pauli relation between beta and reduced density matrices 2}
\text{Tr}_{\mathfrak{h}} \abs{\gamma_{N,t}^{(0,1)} - \ket{u_t}\bra{u_t}}  \leq&  3 \beta^b(\Psi_{Nt},u_t) + 6 \norm{u_t}_{\mathfrak{h}} \sqrt{\beta^b(\Psi_{Nt},u_t)}.
\end{align}
}

\proof{
The first inequality follows  from\footnote{For ease of notation, we discard the explicit time dependence and write for example $\Psi_N$ instead of $\Psi_{N,t}$.}
\begin{align}
\beta^a &= 1 - \SCP{\Psi_N}{p_1 \Psi_N}  =
1 - \scp{\varphi}{\gamma_N^{(1,0)} \varphi}
= \text{Tr}_{L^2(\mathbb{R}^3)} ( \ket{\varphi} \bra{\varphi} - \ket{\varphi} \bra{\varphi} \gamma_N^{(1,0)}) 
\nonumber \\
&\leq \norm{p_1}_{\op}  \text{Tr}_{L^2(\mathbb{R}^3)} \abs{\gamma_{N}^{(1,0)} - \ket{\varphi} \bra{\varphi}}
=  \text{Tr}_{L^2(\mathbb{R}^3)} \abs{\gamma_{N}^{(1,0)} - \ket{\varphi} \bra{\varphi}}.
\end{align}
In order to proof the remaining inequalities we use
\begin{align}
\label{eq: Pauli bound mit Hilbert Schmidt norm}
\text{Tr} \abs{\gamma - p} \leq 2 \norm{\gamma - p}_{HS}
+ \text{Tr} (\gamma - p),
\end{align}
valid for any one-dimensional projector $p$ and non-negative density matrix $\gamma$. The original argument of the proof  was first observed by Robert Seiringer, see \cite{rodnianskischlein}. We present a version that is found in \cite{anapolitanosmott}: 
Let $(\lambda_n)_{n \in \mathbb{N}}$ be the sequence of eigenvalues of the trace class operator $A \coloneqq \gamma - p$.
Since $p$ is a rank one projection, $A$ has at most one negative eigenvalue.
If there is no negative eigenvalue, $\text{Tr} \abs{A} = \text{Tr} (A)$ and \eqref{eq: Pauli bound mit Hilbert Schmidt norm} holds. If there is one negative eigenvalue $\lambda_1$, we have
$\text{Tr} \abs{A} = \abs{\lambda_1} + \sum_{n \geq 2} \lambda_n = 2 \abs{\lambda_1} + \text{Tr}(A). $
Because of $\abs{\lambda_1} \leq \norm{A}_{\op} \leq \norm{A}_{HS}$, inequality  \eqref{eq: Pauli bound mit Hilbert Schmidt norm} follows. \\
For the upper bound of \eqref{eq: Pauli relation between beta and reduced density matrices 1} we notice that $\text{Tr}_{L^2(\mathbb{R}^3)} (\gamma_N^{(1,0)} - \ket{\varphi} \bra{\varphi}) = 0$. Then, \eqref{eq: Pauli bound mit Hilbert Schmidt norm} reduces to  
\begin{align}
\text{Tr}_{L^2(\mathbb{R}^3)} \abs{\gamma_N^{(1,0)} - \ket{\varphi} \bra{\varphi}} \leq 2 \norm{\gamma_N^{(1,0)} - \ket{\varphi} \bra{\varphi}}_{HS}
\end{align}
 and \eqref{eq: Pauli relation between beta and reduced density matrices 1} follows from
\begin{align}
\text{Tr}_{L^2(\mathbb{R}^3)} ( \gamma_N^{(1,0)} - \ket{\varphi} \bra{\varphi} )^2
=& 1 - 2  \text{Tr}_{L^2(\mathbb{R}^3)}  (\ket{\varphi} \bra{\varphi} \gamma_N^{(1,0)} ) +  \text{Tr}_{L^2(\mathbb{R}^3)} ((\gamma_N^{(1,0)})^2)
\nonumber \\
\leq& 2 (1 - \text{Tr}_{L^2(\mathbb{R}^3)}  (\ket{\varphi} \bra{\varphi} \gamma_N^{(1,0)} ))
= 2 \beta^a.
\end{align}
To prove inequality \eqref{eq: Pauli relation between beta and reduced density matrices 2} it is useful to write the kernel of 
$\gamma_{N}^{(0,1)} - \ket{u}\bra{u}$ as
\begin{align}
(\gamma_{N}^{(0,1)} &- \ket{u}\bra{u})(k,\lambda,l,\mu) 
= \abs{k}^{1/2} \abs{l}^{1/2} \left(  N^{-1} \SCP{\Psi_N}{a^*(l,\mu) a(k,\lambda) \Psi_N} - \alpha^*(l,\mu) \alpha(k,\lambda)  \right) 
\nonumber \\
&= \abs{k}^{1/2} \abs{l}^{1/2}  \SCP{\left( N^{-1/2} a(l,\mu) - \alpha(l,\mu) \right)\Psi_N}{\left( N^{-1/2} a(k,\lambda) - \alpha(k,\lambda) \right)\Psi_N}  
\nonumber \\
&+ \abs{k}^{1/2} \abs{l}^{1/2}  \SCP{\left( N^{-1/2} a(l,\mu) - \alpha(l,\mu) \right)\Psi_N}{\Psi_N} \,   \alpha(k,\lambda)  
\nonumber \\
&+ \abs{k}^{1/2} \abs{l}^{1/2}  \SCP{\Psi_N}{\left( N^{-1/2} a(k,\lambda) - \alpha(k,\lambda) \right)\Psi_N}  \, \alpha^*(l,\mu).
\end{align}
Cauchy-Schwarz inequality gives
\begin{align}
\abs{(\gamma_{N}^{(0,1)} &- \ket{u}\bra{u})(k,\lambda,l,\mu)  }^2
\nonumber \\
&\leq \abs{k} \abs{l} \norm{\left( N^{-1/2} a(k,\lambda) - \alpha(k,\lambda) \right)\Psi_N}^2
  \norm{\left( N^{-1/2} a(l,\mu) - \alpha(l,\mu) \right)\Psi_N}^2
\nonumber \\
&+ \abs{k} \abs{l}  \norm{\left( N^{-1/2} a(k,\lambda) - \alpha(k,\lambda) \right)\Psi_N}^2  \abs{\alpha(l,\mu)}^2 
\nonumber \\
&+ \abs{k} \abs{l}   \norm{\left( N^{-1/2} a(l,\mu) - \alpha(l,\mu) \right)\Psi_N}^2  \abs{\alpha(k,\lambda)}^2
\end{align}
and 
\begin{align}
\norm{\gamma_{N}^{(0,1)} - \ket{u}\bra{u}}_{HS}^2 &=  \sum_{\lambda,\mu \in \{1,2\}^2} 
\int \int d^3k d^3l  \abs{(\gamma_{N}^{(0,1)} - \ket{u}\bra{u})(k,\lambda,l,\mu)}^2 
\nonumber \\
&\leq (\beta^b)^2 + 2 \norm{u}_{\mathfrak{h}}^2 \beta^b
\end{align}
follows.
Similarly, 
\begin{align}
\text{Tr}_{\mathfrak{h}} (\gamma_{N}^{(0,1)} - \ket{u}\bra{u})
&\leq \sum_{\lambda =1,2} \int d^3k 
\abs{(\gamma_{N}^{(0,1)} - \ket{u}\bra{u})(k,\lambda,k,\lambda)} 
\nonumber \\
&\leq  \sum_{\lambda = 1,2} \int d^3k  \abs{k} \norm{\left( N^{-1/2} a(k,\lambda) - \alpha(k,\lambda) \right)\Psi_N}^2  \nonumber \\
&+ 2  \sum_{\lambda = 1,2} \int d^3k \abs{u(k,\lambda)} \abs{k}^{1/2} \norm{\left( N^{-1/2} a(k,\lambda) - \alpha(k,\lambda) \right)\Psi_N}.
\end{align}
Applying Schwarz's inequality with respect to the scalar product of $\mathfrak{h}$ yields
\begin{align}
\text{Tr}_{\mathfrak{h}} (\gamma_{N}^{(0,1)} - \ket{\alpha}\bra{\alpha}) &\leq  \beta^b  
+ 2 \norm{u}_{\mathfrak{h}}
\Big(  \sum_{\lambda = 1,2} \int d^3k \abs{k}  \norm{\left( N^{-1/2} a(k,\lambda) - \alpha(k,\lambda) \right)\Psi_N}^2  \Big)^{1/2} 
\nonumber \\
&\leq \beta^b + 2 \norm{u}_{\mathfrak{h}} \sqrt{\beta^b}.
\end{align}
Monotonicity of the square root and \eqref{eq: Pauli bound mit Hilbert Schmidt norm} give rise to \eqref{eq: Pauli relation between beta and reduced density matrices 2}.
}

\section{Estimates on the time derivative}
\label{section: Pauli estimates on the time derivative}

In this section, we control the change of $\beta$ in time. To this end we separately estimate the time derivative of $\beta^a$ and $\beta^b$. The value of $\beta^c$ is constant in time because the energies of the many-body and the Maxwell-Schr\"odinger system are conserved quantities. To control the difference between the quantized and classical vector potential by the functional $\beta^b$ it is convenient to introduce their positive and negative frequency parts. 
\begin{align}
\vAp(x) &\coloneqq \sum_{\lambda=1,2} \int d^3k \, \tilde{\kappa}(k)
\frac{1}{\sqrt{2 \abs{k}}} \vep_{\lambda}(k) e^{ikx} a(k,\lambda),  
\nonumber \\
\vAm(x) &\coloneqq \sum_{\lambda=1,2} \int d^3k \, \tilde{\kappa}(k)
\frac{1}{\sqrt{2 \abs{k}}} \vep_{\lambda}(k) e^{-ikx} a^*(k,\lambda),  
\nonumber \\
\vApc(x,t) &\coloneqq \sum_{\lambda=1,2} \int d^3k \, \tilde{\kappa}(k) 
\frac{1}{\sqrt{2 \abs{k}}} \vep_{\lambda}(k) e^{ikx} \alpha_t(k,\lambda),  
\nonumber \\
\vAmc(x,t) &\coloneqq \sum_{\lambda=1,2} \int d^3k \, \tilde{\kappa}(k)
\frac{1}{\sqrt{2 \abs{k}}} \vep_{\lambda}(k) e^{-ikx} \alpha_t^*(k,\lambda).
\end{align}
Moreover, it is helpful to define the positive and negative frequency parts of the quantum mechanical and classical electric field.
\begin{align}
\vEp(x) &\coloneqq \sum_{\lambda=1,2} \int d^3k \, \tilde{\kappa}(k) \sqrt{\frac{\abs{k}}{2}} \vep_{\lambda}(k) i
 e^{ikx} a(k,\lambda),  
\nonumber \\
\vEm(x) &\coloneqq \sum_{\lambda=1,2} \int d^3k \, \tilde{\kappa}(k) \sqrt{\frac{\abs{k}}{2}} \vep_{\lambda}(k) (-i)
 e^{-ikx} a^*(k,\lambda),  
\nonumber \\
 \label{eq: Pauli positive frequency part of the classical electric field}
\vEpc(x,t) &\coloneqq \sum_{\lambda=1,2} \int d^3k \, \tilde{\kappa}(k)   \sqrt{\frac{\abs{k}}{2}}
 \vep_{\lambda}(k) i e^{ikx} \alpha_t(k,\lambda),  
 \nonumber \\
\vEmc(x,t) &\coloneqq \sum_{\lambda=1,2} \int d^3k \, \tilde{\kappa}(k)  \sqrt{\frac{\abs{k}}{2}} \vep_{\lambda}(k) (- i) e^{-ikx} \alpha_t^*(k,\lambda).
\end{align}
For $\sharp \in \{ \ , +,- \}$, we introduce the shorthand notations
\begin{align}
\label{eq: Pauli shorthand notation field differences}
\vEd^{\sharp}(x,t)  &\coloneqq \frac{\vE^{\sharp}(x)}{\sqrt{N}} - \vEc^{\sharp}(x,t), \quad
\vAd^{\sharp}(x,t)  \coloneqq \frac{\vA^{\sharp}(x)}{\sqrt{N}} - \vAc ^{\sharp}(x,t) .
\end{align}
By means of the cutoff function 
\begin{equation}
\label{eq: Pauli cut off function 2}
\tilde{\eta}(k) \coloneqq \abs{k}^{-1}  \tilde{\kappa}(k) = 
  (2 \pi)^{- \frac{3}{2}} \abs{k}^{-1} \id_{\abs{k}\leq \Lambda}(k)
\end{equation}
we are able to express the vector potential in terms of the electric field.
\begin{lemma}
\label{lemma: Pauli auxiliary fields}
Let $\eta$ be the Fourier transform of \eqref{eq: Pauli cut off function 2}, then 
\begin{align}
\label{eq: Pauli relation A and E}
\vAp(x) &= - i \big( \eta * \vEp \big)(x), 
\qquad  \quad \; \;
\vAm(x) = i  \big( \eta * \vEm \big)(x), 
\nonumber \\
\vApc(x,t) &= - i \big( \eta * \vEpc \big)(x,t) ,
\qquad
\vAmc(x,t) =  i \big( \eta * \vEmc \big)(x,t) .
\end{align}
\end{lemma}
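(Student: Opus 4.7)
The plan is a direct Fourier-space bookkeeping argument: convolution with $\eta$ amounts to pointwise multiplication by $\tilde{\eta}(k) = |k|^{-1}\tilde{\kappa}(k)$ in momentum space (up to the usual $(2\pi)^{3/2}$ factors of the symmetric Fourier convention), and this multiplication by $|k|^{-1}$ is precisely what converts the prefactor $\sqrt{|k|/2}$ appearing in $\vEp$ into the prefactor $(2|k|)^{-1/2}$ appearing in $\vAp$. I will carry out the computation only for the first identity $\vAp(x) = -i (\eta * \vEp)(x)$; the remaining three identities then follow from the same manipulation.

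First I would verify the scalar identity
$$(2\pi)^{3/2}\,\tilde{\kappa}(k)\tilde{\eta}(k)\sqrt{|k|/2} \;=\; \frac{\tilde{\kappa}(k)}{\sqrt{2|k|}},$$
which is immediate from $\tilde{\eta}(k) = |k|^{-1}\tilde{\kappa}(k)$ combined with $\tilde{\kappa}^2 = (2\pi)^{-3/2}\tilde{\kappa}$ (the latter because $\tilde{\kappa}$ is a constant multiple of an indicator function). Next I would insert the defining integral of $\vEp$ into $(\eta * \vEp)(x)$, swap the $y$- and $k$-integrations by Fubini (justified on any vector of finite photon number thanks to the compact support of $\tilde{\kappa}$), substitute $z = x-y$ in the remaining $y$-integral, and observe that
$$\int d^3z\;\eta(z)\,e^{-ikz} \;=\; (2\pi)^{3/2}\tilde{\eta}(k),$$
since $\eta$ is by definition the Fourier transform of $\tilde{\eta}$ and $\tilde{\eta}$ is even. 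Plugging in the scalar identity above, and noting that the factor $-i$ in front of the convolution cancels with the $+i$ appearing in the definition of $\vEp$, recovers exactly $\vAp(x)$.

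The identity for $\vAm$ is obtained by the same manipulation: the Fourier kernel is $e^{-ikx}$ in place of $e^{+ikx}$ and the defining prefactor of $a^*$ carries $-i$ rather than $+i$, so the analogous $z$-integral produces the same $\tilde{\eta}(k)$ but the overall prefactor works out to $+i$ instead of $-i$. The two classical identities $\vApc = -i\,\eta * \vEpc$ and $\vAmc = i\,\eta * \vEmc$ are literally the same calculations with the operator-valued distributions $a(k,\lambda),a^*(k,\lambda)$ replaced by the classical mode functions $\alpha_t(k,\lambda),\alpha_t^*(k,\lambda)$; nothing else changes. There is no substantive obstacle here; the only thing to watch is the consistent bookkeeping of the $(2\pi)^{\pm 3/2}$ factors produced by the symmetric Fourier convention, which is why the scalar identity is checked first.
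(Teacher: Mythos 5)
Your proposal is correct and is exactly the "simple application of the convolution theorem" that the paper's one-line proof invokes: you pass to momentum space, use $\int d^3z\,\eta(z)e^{-ikz}=(2\pi)^{3/2}\tilde{\eta}(k)$ together with $\tilde{\kappa}^2=(2\pi)^{-3/2}\tilde{\kappa}$, and check that $|k|^{-1}$ converts the prefactor $\sqrt{|k|/2}$ into $(2|k|)^{-1/2}$ with the signs of $\pm i$ matching. The bookkeeping of the $(2\pi)^{\pm 3/2}$ factors and of the sign in the $\vAm$ case is carried out correctly, so nothing is missing.
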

\begin{proof}
The proof is a simple application of the convolution theorem.
\end{proof}

\noindent
At various points in our estimates, we replace the vector potential by the electric field and make use of (see Lemma~\ref{lemma: Pauli field operator times p or q against functional})
\begin{align}
\int d^3y \, \SCP{\Psi_{N,t}}{\left( N^{-1/2} \vEm(y) - \vEmc(y,t) \right) \left( N^{-1/2} \vEp(y) - \vEpc(y,t) \right) \Psi_{N,t}}    &\leq \beta^b(t).
\end{align}
To obtain proper bounds it is crucial that the $L^2$-norm of the cutoff functions
\begin{align}
\label{eq: Pauli norm cut off function 2}
\norm{\kappa}_2^2 = \Lambda^3/(6 \pi^2) 
\quad \text{and} \quad
\norm{\eta}_2^2 = \Lambda/(2 \pi^2) 
\end{align}
is finite.

\subsection{Preliminary estimates }
\label{subsection: Pauli preliminary estimates}
The minimal coupling term in the Pauli-Fierz Hamiltonian
\begin{align}
\sum_{j=1}^N (- i \nabla_j - N^{-1/2} \vA(x_j))^2 &= \sum_{j=1}^N \left(   - \Delta_1 + 2 i N^{-1/2} \vA(x_1) \cdot \nabla_1 + N^{-1} \vA^2(x_1) \right)
\end{align}
contains an interaction that is quadratic in the vector potential. If we want to control the growth of $\beta(t)$ in time this quadratic part (see for example the estimate of \eqref{eq: Pauli alpha-a-field}) requires that quantities like $\norm{\nabla_2 q_1 \Psi_{N,t}}^2$ and $N^{-1} \norm{\vA(x_1) q_1 \Psi_{N,t}}^2$ are not only finite but bounded by $\beta(t)$. This holds for  every bounded operator $B$ because of
\begin{align}
\SCP{\Psi_{N,t}}{q_1 B q_1 \Psi_{N,t}} \leq C \norm{q_1 \Psi_{N,t}}^2 \leq C \beta^a(t)
\end{align}
but must not be true in general. In case of unbounded operators smallness can sometimes be inferred on a subclass of states which have sufficient decay in the occupation of eigenstates with large eigenvalues. 
For a self-adjoint operator $O$ with $\left[O, q_1 \right] \approx 0$ and $c \in \mathbb{R}$ one has
\begin{align}
\SCP{\Psi_{N,t}}{q_1 O q_1 \Psi_{N,t}}  
\approx& \SCP{\Psi_{N,t}}{q_1 O \Psi_{N,t}}
= \SCP{\Psi_{N,t}}{q_1 \left( O - c \right) \Psi_{N,t}} + c \SCP{\Psi_{N,t}}{q_1 \Psi_{N,t}} 
\nonumber \\
\leq& (c+1) \SCP{\Psi_{N,t}}{q_1 \Psi_{N,t}} 
+  \SCP{\Psi_{N,t}}{\left( O - c \right)^2 \Psi_{N,t}}.
\end{align}
Thus,  $\SCP{\Psi_{N,t}}{q_1 O q_1 \Psi_{N,t}}$ can be bounded by $\beta(t)$ plus a small error if  $\Psi_{N,t}$ occupies eigenstates of $O$ with eigenvalues $\lambda \neq c$ only with small probability. This is in the spirit of Chebyshev's inequality from probability theory. Requiring $\SCP{\Psi_{N,0}}{\left( O - c \right)^2 \Psi_{N,0}} \approx 0$ initially does not imply smallness at later times. However, if $O$ is a conserved quantity its variance is constant during the time evolution and we only have to restrict the class of initial states. In the following, we consider the variance of the energy per particles of the many-body system (see $\beta^c$). Then, we estimate the vector potential and the Laplacian by $H_N/N$ and bound expression like $N^{-1} \SCP{\Psi_{N,t}}{q_1 \vA^2(x_1) q_1 \Psi_{N,t}}$ by the 
counting functional.

\begin{lemma}
\label{lemma: Pauli Feldoperator-Abschaetzung}
Let $y \in \mathbb{R}^3$ or $y \in \{x_1, \ldots, x_N \}$ and $\Psi_N \in  \mathcal{D}(H_N)$. Then
\begin{align}
\norm{ N^{-1/2} \vAp(y) \Psi_N}^2
&\leq \Lambda /(2 \pi^2)  \SCP{\Psi_N}{N^{-1} H_f \Psi_N} , 
\nonumber \\
\norm{ N^{-1/2} \vAm(y) \Psi_N}^2 
&\leq \Lambda/(2 \pi^2) \SCP{\Psi_N}{N^{-1} H_f \Psi_N}
+  \Lambda^2 / (4 \pi^2 N) \norm{\Psi_N}^2,
\nonumber \\
\norm{ N^{-1/2} \vA(y) \Psi_N}^2 
&\leq 2 \Lambda / \pi^2 \SCP{\Psi_N}{ N^{-1} H_f \Psi_N}
+ \Lambda^2 / (2 \pi^2 N) \norm{\Psi_N}^2.
\end{align}
\end{lemma}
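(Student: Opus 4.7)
The plan is to use the standard Cauchy--Schwarz ``pull-through'' trick for the annihilation part $\vAp$, then normal-order via the canonical commutation relations to handle the creation part $\vAm$, and finally combine the two via $(a+b)^2 \le 2(a^2+b^2)$ for $\vA = \vAp + \vAm$. The position argument $y$ enters only through the phase $e^{\pm i k y}$, which has modulus one (and, as an operator in the case $y = x_j$, is unitary), so all bounds hold uniformly in $y$; in particular the case $y \in \{x_1,\dots,x_N\}$ is no harder than the case of a fixed $y \in \mathbb{R}^3$.

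For the first inequality I would start from
\[
\vAp(y)\Psi_N = \sum_{\lambda=1,2} \int d^3k\, \frac{\tilde\kappa(k)\,\vep_\lambda(k)}{\sqrt{2\abs{k}}}\, e^{i k y}\, a(k,\lambda)\Psi_N,
\]
multiply and divide the integrand by $\abs{k}^{1/2}$, and apply Cauchy--Schwarz componentwise in $(k,\lambda)$. Summing over the three spatial components, using $\sum_i \abs{\vep^i_\lambda(k)}^2 = \abs{\vep_\lambda(k)}^2 = 1$ and $\sum_\lambda \int d^3k\,\abs{k}\norm{a(k,\lambda)\Psi_N}^2 = \scp{\Psi_N}{H_f\Psi_N}$, yields
\[
\norm{\vAp(y)\Psi_N}^2 \le \Big(\sum_{\lambda=1,2} \int d^3k\, \frac{\abs{\tilde\kappa(k)}^2}{2\abs{k}^2}\Big) \scp{\Psi_N}{H_f \Psi_N}.
\]
Inserting $\abs{\tilde\kappa(k)}^2 = (2\pi)^{-3}\id_{\abs{k}\le \Lambda}(k)$ and evaluating in spherical coordinates gives the prefactor $\Lambda/(2\pi^2)$, and dividing by $N$ produces the first stated bound.

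For the second inequality I would use that $\vAm(y) = (\vAp(y))^{*}$ componentwise, so
\[
\norm{\vAm(y)\Psi_N}^2 = \sum_{i=1}^3 \scp{\Psi_N}{\vAp_i(y)\vAm_i(y)\Psi_N} = \norm{\vAp(y)\Psi_N}^2 + \sum_{i=1}^3 \scp{\Psi_N}{[\vAp_i(y),\vAm_i(y)]\Psi_N}.
\]
The commutator is a $c$-number: applying $[a(k,\lambda),a^*(l,\mu)] = \delta_{\lambda\mu}\delta(k-l)$ and summing over $i$ and $\lambda$ one obtains $\sum_i[\vAp_i(y),\vAm_i(y)] = \int d^3k\, \abs{\tilde\kappa(k)}^2/\abs{k} = \Lambda^2/(4\pi^2)$. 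Combined with Step~1 and divided by $N$, this yields the second bound. The third bound then follows immediately from $\norm{(\vAp+\vAm)\Psi_N}^2 \le 2(\norm{\vAp\Psi_N}^2 + \norm{\vAm\Psi_N}^2)$, which doubles the constants from the first two lines and produces exactly $2\Lambda/\pi^2$ and $\Lambda^2/(2\pi^2)$.

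There is no genuine obstacle here: the estimate is a routine calculation from the Pauli--Fierz toolbox, amounting to nothing more than Cauchy--Schwarz plus canonical commutations. The only point that requires attention is the bookkeeping of numerical constants, in particular verifying that the radial integrals $\int_{\abs{k}\le\Lambda} \abs{k}^{-2}\,d^3k$ and $\int_{\abs{k}\le\Lambda} \abs{k}^{-1}\,d^3k$ produce precisely $\Lambda/(2\pi^2)$ and $\Lambda^2/(4\pi^2)$, respectively, after the $(2\pi)^{-3}$ from $\abs{\tilde\kappa}^2$ is taken into account; once that is done, all three inequalities are immediate.
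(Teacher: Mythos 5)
Your proposal is correct and follows essentially the same route as the paper's proof: Cauchy--Schwarz with the weight $\abs{k}^{1/2}$ for the annihilation part, normal ordering via the canonical commutation relations (your componentwise commutator $\sum_i[\boldsymbol{\hat{A}}^{+}_{\kappa,i},\boldsymbol{\hat{A}}^{-}_{\kappa,i}]=\Lambda^2/(4\pi^2)$ is exactly the paper's $\norm{\boldsymbol{f}}_{\mathfrak{h}}^2\norm{\Psi_N}^2$ term) for the creation part, and the triangle inequality combined with $(a+b)^2\le 2(a^2+b^2)$ for the full field. All constants check out.
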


\begin{proof}
To ease notation, we define the vector-valued function 
$\vf(k, \lambda) \coloneqq  \frac{\tilde{\kappa}(k)}{\sqrt{2 \abs{k}}}   \vep_{\lambda}(k)$. 
The first estimate follows from Cauchy-Schwarz inequality
\begin{align}
\Big| \Big| &\sum_{\lambda=1,2} \int d^3k \, \vf(k,\lambda) e^{\pm i k y} a(k,\lambda) \Psi_N \Big| \Big|^2   
\nonumber \\
&\leq \Big( \sum_{\lambda=1,2} \int d^3k \, \abs{\vf(k,\lambda)} \abs{k}^{-1/2} \norm{\abs{k}^{1/2} a(k,\lambda) \Psi_N}^2  \Big)^2
\nonumber \\
&\leq  \Big( \sum_{\lambda=1,2} \int d^3k \,  \abs{\vf(k,\lambda)}^2 \abs{k}^{-1} \Big)
\Big(  \sum_{\lambda=1,2} \int d^3k \, \abs{k} \norm{ a(k,\lambda) \Psi_N}^2 \Big) 
\nonumber \\ 
&= \Lambda/(2 \pi^2) \SCP{\Psi_N}{H_f \Psi_N}.
\end{align}
By use of the canonical commutation relations~\eqref{eq: canonical commutation relation}, the second bound is obtained via
\begin{align}
\Big| \Big| &\sum_{\lambda=1,2} \int d^3k \, \vf(k,\lambda) e^{\pm i k y} a^*(k,\lambda) \Psi_N \Big| \Big|^2 
=   \Big| \Big|\sum_{\lambda=1,2} \int d^3k \, \overline{\vf(k,\lambda)} e^{\mp ik y} a(k,\lambda) \Psi_N \Big| \Big|^2  
\nonumber \\
&+\norm{\vf}_{\mathfrak{h}}^2 \norm{\Psi_N}^2  \leq  \Lambda^2/(4 \pi^2) \norm{\Psi_N}^2
+  \Lambda/(2 \pi^2) \SCP{\Psi_N}{H_f \Psi_N}.
\end{align}
The last estimate follows by triangular inequality.
\end{proof}

\noindent
Lemma~\ref{lemma: Pauli Feldoperator-Abschaetzung} leads to

\begin{corollary}
\label{corollary: Pauli Feldoperator-Abschaetzung mit q1}
For $y \in \mathbb{R}^3$ or $y \in \{x_1, \ldots, x_N \}$ and $\Psi_N \in \mathcal{D}(H_N)$ we have
\begin{align}
\norm{N^{-1/2}\vA(y) q_1 \Psi_N}^2 
\leq&2 \Lambda/\pi^2 \SCP{\Psi_N}{q_1 N^{-1} H_f q_1 \Psi_N}
+ \Lambda^2/(2 \pi^2 N) \beta^a,  \nonumber \\
\norm{N^{-1/2} \vA(y) p_1 q_2 \Psi_N}^2
\leq& 2 \Lambda/\pi^2 \SCP{\Psi_N}{q_1 N^{-1} H_f q_1 \Psi_N} + \Lambda^2/(2 \pi^2 N) \beta^a.
\end{align}

\end{corollary}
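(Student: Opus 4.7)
The plan is to reduce both inequalities to direct applications of Lemma~\ref{lemma: Pauli Feldoperator-Abschaetzung} with $\Psi_N$ replaced by $q_1 \Psi_N$ and $p_1 q_2 \Psi_N$ respectively, and then to massage the right-hand side using bosonic symmetry and the fact that $H_f$ acts only on the Fock-space factor of $\mathcal{H}^{(N)}$.

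For the first bound, I would observe that $q_1 \Psi_N \in \mathcal{D}(H_N)$ (since $q_1 = \id - \ket{\varphi_t}\bra{\varphi_t}$ preserves the relevant domain, using the $H^2$ regularity of $\varphi_t$ guaranteed by $(\varphi_t,\boldsymbol{A}(t),\boldsymbol{E}(t)) \in \mathcal{G}$). Applying the third estimate of Lemma~\ref{lemma: Pauli Feldoperator-Abschaetzung} to $q_1 \Psi_N$ yields
\begin{align*}
\norm{N^{-1/2}\vA(y) q_1 \Psi_N}^2 \leq \frac{2\Lambda}{\pi^2} \SCP{q_1 \Psi_N}{N^{-1} H_f \, q_1 \Psi_N} + \frac{\Lambda^2}{2\pi^2 N} \norm{q_1 \Psi_N}^2,
\end{align*}
and the first claim follows from the identity $\norm{q_1 \Psi_N}^2 = \SCP{\Psi_N}{q_1 \Psi_N} = \beta^a$, together with the obvious rewriting of the $H_f$-term as $\SCP{\Psi_N}{q_1 N^{-1} H_f q_1 \Psi_N}$.

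For the second bound I would apply the same lemma to the vector $p_1 q_2 \Psi_N$, giving
\begin{align*}
\norm{N^{-1/2} \vA(y) p_1 q_2 \Psi_N}^2 \leq \frac{2\Lambda}{\pi^2} \SCP{p_1 q_2 \Psi_N}{N^{-1} H_f \, p_1 q_2 \Psi_N} + \frac{\Lambda^2}{2\pi^2 N} \norm{p_1 q_2 \Psi_N}^2.
\end{align*}
Since $H_f$ acts trivially on $L^2(\mathbb{R}^{3N})$, it commutes with both $p_1$ and $q_2$, and since $p_1$ is a projection it satisfies $p_1^2 = p_1 \leq \id$. Therefore the first term on the right is bounded by $\frac{2\Lambda}{\pi^2}\SCP{\Psi_N}{q_2 N^{-1} H_f q_2 \Psi_N}$, and by the same reasoning the remainder norm is bounded by $\SCP{\Psi_N}{q_2 \Psi_N}$. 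Bosonic symmetry of $\Psi_N \in L_s^2(\mathbb{R}^{3N}) \otimes \mathcal{F}_p$ now allows me to replace the index $2$ by $1$ in both expressions, yielding exactly the claimed estimate.

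The main (mild) obstacle is the bookkeeping with commuting projections across the tensor product: one has to remember that $p_j, q_j$ live on $L^2(\mathbb{R}^{3N})$ while $H_f$ lives on $\mathcal{F}_p$, so that they commute trivially but only after writing $\mathcal{H}^{(N)}$ as $L^2 \otimes \mathcal{F}_p$ and verifying that the action on suitable domain vectors is well-defined. Once this is granted, the rest is a one-line application of the preceding lemma together with the symmetry reduction $\SCP{\Psi_N}{q_2 \cdot q_2 \Psi_N} = \SCP{\Psi_N}{q_1 \cdot q_1 \Psi_N}$ for operators acting only on Fock space.
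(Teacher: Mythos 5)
Your proposal is correct and takes essentially the same route as the paper, which states Corollary~\ref{corollary: Pauli Feldoperator-Abschaetzung mit q1} as an immediate consequence of Lemma~\ref{lemma: Pauli Feldoperator-Abschaetzung} applied to $q_1 \Psi_N$ and $p_1 q_2 \Psi_N$, using that $H_f$ commutes with the particle-space projectors, that $p_1 \leq \id$, and the bosonic symmetry to exchange the indices $1$ and $2$.
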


\begin{lemma}
\label{lemma: Pauli commutator of H with q1}
Let $v$  satisfy (A1), $\Psi_{N,t} \in \left(L^2_s \left( \mathbb{R}^{3N} \right) \otimes \mathcal{F}_p \right) \cap \mathcal{D}(H_N)$ and $(\varphi_t,\boldsymbol{A}(t), \boldsymbol{E}(t)) \in \mathcal{G}$. Then,
there exists a monotone increasing function $C(t)$ of $\mathcal{E}_M[\varphi_t,u_t]$, $\norm{\varphi_t}_{H^2(\mathbb{R}^3)}$ and $\norm{\varphi_t}_{L^{\infty}(\mathbb{R}^3)}$  such that
\begin{align}
\SCP{\Psi_{N,t}}{q^{\varphi_t}_1 N^{-1} H_N q^{\varphi_t}_1 \Psi_{N,t}}
\leq& C(t) \left( \beta(t) + \Lambda/N \right).
\end{align}
\end{lemma}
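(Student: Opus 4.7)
The plan is to isolate the piece of $N^{-1}H_N$ that equals the classical energy $\mathcal{E}_M[\varphi_t,u_t]$ and to control the remainder by $\beta$ and $\Lambda/N$. Using $q_1^2 = q_1$ I write
\[
N^{-1} \SCP{\Psi_{N,t}}{q_1 H_N q_1 \Psi_{N,t}} = \mathcal{E}_M[\varphi_t,u_t]\, \beta^a(t) + \SCP{q_1\Psi_{N,t}}{(N^{-1}H_N - \mathcal{E}_M[\varphi_t,u_t])\, q_1\Psi_{N,t}}.
\]
Inserting $q_1 = 1 - p_1$ on the right factor of the second summand and applying Cauchy--Schwarz together with the definition of $\beta^c$, the ``diagonal'' piece is bounded by $\sqrt{\beta^a \beta^c} \leq \beta$. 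For the ``off-diagonal'' piece $\SCP{q_1 \Psi_{N,t}}{(N^{-1}H_N - \mathcal{E}_M)\, p_1\Psi_{N,t}}$ I split $H_N = H_N^{(1)} + H_N^{(\bar 1)}$, where $H_N^{(1)} = (-i\nabla_1 - \vA(x_1)/\sqrt{N})^2 + N^{-1}\sum_{k>1} v(x_1-x_k)$ collects the particle-$1$ contributions. Since $[p_1, H_N^{(\bar 1)}] = 0$ and $q_1 p_1 = 0$, the off-diagonal piece reduces to $N^{-1} \SCP{q_1 \Psi_{N,t}}{H_N^{(1)} p_1 \Psi_{N,t}}$.

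Expanding $(-i\nabla_1 - \vA(x_1)/\sqrt{N})^2 = -\Delta_1 + 2i N^{-1/2} \vA(x_1) \cdot \nabla_1 + N^{-1} \vA^2(x_1)$ (Coulomb gauge $\nabla \cdot \vA = 0$), together with the identities $\nabla_1 p_1 = \ket{\nabla\varphi_t}\bra{\varphi_t}$ and $\Delta_1 p_1 = \ket{\Delta\varphi_t}\bra{\varphi_t}$, decomposes the off-diagonal piece into four parts: a Laplacian term, a linear $\vA \cdot \nabla_1 p_1$ term, a quadratic $\vA^2(x_1) p_1$ term, and a direct interaction term. The Laplacian and interaction parts are handled by Cauchy--Schwarz, giving contributions of order $N^{-1} \sqrt{\beta^a}\,\|\Delta\varphi_t\|$ and $N^{-1} \sqrt{\beta^a}\,\|v^2 * \abs{\varphi_t}^2\|_\infty^{1/2}$ respectively (the latter using symmetry of $\Psi_{N,t}$ and $p_1 v^2(x_1-x_2) p_1 = (v^2 * \abs{\varphi_t}^2)(x_2) p_1$); both are absorbed into $C(t)(\beta^a + N^{-2})$ by (A1) and $\varphi_t \in H^2(\mathbb{R}^3)$. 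For the linear $\vA \cdot \nabla_1 p_1$ term I apply Lemma~\ref{lemma: Pauli Feldoperator-Abschaetzung} componentwise to the wavefunctions $\partial_i \varphi_t(x_1) \bra{\varphi_t} \Psi_{N,t}$. For the quadratic term I use self-adjointness of $\vA(x_1)$ to rewrite $\SCP{q_1 \Psi_{N,t}}{\vA^2(x_1) p_1 \Psi_{N,t}} = \SCP{\vA(x_1) q_1 \Psi_{N,t}}{\vA(x_1) p_1 \Psi_{N,t}}$ and apply Corollary~\ref{corollary: Pauli Feldoperator-Abschaetzung mit q1} and Lemma~\ref{lemma: Pauli Feldoperator-Abschaetzung} to the two factors.

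A key auxiliary bound used in both $\vA$-dependent estimates is $N^{-1}\SCP{\Psi_{N,t}}{H_f \Psi_{N,t}} \leq \mathcal{E}_M + \sqrt{\beta^c}$, which follows from the positivity of all summands of $H_N$ (under (A1)) combined with Cauchy--Schwarz applied to the definition of $\beta^c$. Using $\sqrt{\beta^c} \leq 1 + \beta^c$ and Young's inequality $\sqrt{ab} \leq (a+b)/2$ to extract either $\beta^a$, $\beta^c$, or $\Lambda/N$ from each mixed product, the four contributions reassemble into a bound of the desired form $C(t)(\beta + \Lambda/N)$.

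The main obstacle is the quadratic $\vA^2(x_1) p_1$ term: Corollary~\ref{corollary: Pauli Feldoperator-Abschaetzung mit q1} bounds $\|N^{-1/2} \vA(x_1) q_1 \Psi_{N,t}\|^2$ in terms of $\SCP{\Psi_{N,t}}{q_1 N^{-1} H_f q_1 \Psi_{N,t}}$, but controlling this latter quantity by $C(t)(\beta + \Lambda/N)$ would be essentially the statement of the present lemma with $H_f$ in place of $H_N$, creating a circular dependency. I resolve this by accepting only the crude bound $\SCP{\Psi_{N,t}}{q_1 H_f q_1 \Psi_{N,t}} \leq \SCP{\Psi_{N,t}}{H_f \Psi_{N,t}} \leq N(\mathcal{E}_M + \sqrt{\beta^c})$, so that $\|N^{-1/2} \vA(x_1) q_1 \Psi_{N,t}\|^2$ is only of size $O(\Lambda)$ rather than $O(\Lambda\beta)$, and rely on the extra factor of $N^{-1}$ already present in front of $\vA^2$ in $H_N^{(1)}$ to produce precisely the $\Lambda/N$ error tolerated by the statement.
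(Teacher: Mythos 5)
Your proposal is correct and follows essentially the same route as the paper: the off-diagonal piece $\SCP{q_1\Psi_{N,t}}{(N^{-1}H_N-\mathcal{E}_M)p_1\Psi_{N,t}}$ collapses to the particle-$1$ terms carrying an explicit $1/N$, the diagonal piece is controlled by $\sqrt{\beta^a\beta^c}+\mathcal{E}_M\beta^a$, and the quadratic field term is handled by the crude bound $N^{-1}\SCP{\Psi_{N,t}}{H_f\Psi_{N,t}}\leq N^{-1}\SCP{\Psi_{N,t}}{H_N\Psi_{N,t}}\leq\mathcal{E}_M+\sqrt{\beta^c}$, exactly as in the paper's treatment of the commutator terms. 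The only difference is cosmetic (you subtract $\mathcal{E}_M$ before splitting off $p_1$, the paper after), and your remark on avoiding circularity with Corollary~\ref{corollary: Pauli Feldoperator-Abschaetzung mit q1} correctly identifies why the paper invokes Lemma~\ref{lemma: Pauli Feldoperator-Abschaetzung} directly at this point.
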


\begin{proof}
We decompose the Pauli-Fierz Hamiltonian into
\begin{align}
\label{eq: commutator of H with q1-1}
\SCP{\Psi_N}{q_1 N^{-1} H_N q_1 \Psi_N}
=& \SCP{\Psi_N}{q_1 N^{-1} \sum_{j=1}^N 
 \left( - i  \nabla_j -  N^{-1/2} \vA(x_j)   \right)^2 q_1 \Psi_N}
\\
\label{eq: commutator of H with q1-3}
+& \SCP{\Psi_N}{q_1 N^{-2} \sum_{1 \leq j < k \leq N} v(x_j - x_k) q_1 \Psi_N}
\\
\label{eq: commutator of H with q1-4}
+& \SCP{\Psi_N}{q_1 N^{-1}  H_f q_1 \Psi_N}.
\end{align}
Then, we write the first line as
\begin{align}
\eqref{eq: commutator of H with q1-1}
&= \SCP{\Psi_N}{q_1 N^{-1} \sum_{j=1}^N 
 \left( - i  \nabla_j - N^{-1/2}  \vA(x_j) \right)^2  \Psi_N}   
\nonumber  \\
 &+ N^{-1} \sum_{j=1}^N 
 \SCP{\Psi_N}{q_1 \left[ \left( - i  \nabla_j -   N^{-1/2}  \vA(x_j) \right)^2 , q_1  \right] \Psi_N} 
 \nonumber \\
&= \SCP{\Psi_N}{q_1 N^{-1} \sum_{j=1}^N 
 \left( - i  \nabla_j -   N^{-1/2}  \vA(x_j)   \right)^2  \Psi_N}  
 \nonumber \\
&+ N^{-1}
 \SCP{\Psi_N}{q_1 \left[ \left( - i  \nabla_1 -   N^{-1/2} \vA(x_1) \right)^2 , q_1  \right] \Psi_N}  .
\end{align}
The second line is given by
\begin{align}
\eqref{eq: commutator of H with q1-3}
&= \SCP{\Psi_N}{q_1 N^{-2} \sum_{1 \leq j < k \leq N} v(x_j - x_k)  \Psi_N} 
+ N^{-2}   \sum_{1 \leq j < k \leq N}
 \SCP{\Psi_N}{q_1 \left[  v(x_j - x_k), q_1 \right] \Psi_N} 
\nonumber \\
&= \SCP{\Psi_N}{q_1 N^{-2} \sum_{1 \leq j < k \leq N} v(x_j - x_k)  \Psi_N}  
+ (N-1) N^{-2}  \SCP{\Psi_N}{q_1 \left[ v(x_1 - x_2) , q_1 \right] \Psi_N}  .
\end{align}
In line~\eqref{eq: commutator of H with q1-4} we use that $H_f$ commutes with operators which only act on the sector of the non-relativistic particles. This leads to
\begin{align}
\SCP{\Psi_N}{q_1 N^{-1} H_N q_1 \Psi_N}
&= \SCP{\Psi_N}{q_1 N^{-1} H_N \Psi_N} 
\nonumber \\
\label{eq: commutator of H with q1-5}
&+ N^{-1} \SCP{\Psi_N}{q_1\left[ \left( - i \nabla_1 -  N^{-1/2} \vA(x_1) \right)^2,q_1 \right] \Psi_N}  \\
\label{eq: commutator of H with q1-7}
&+ (N-1) N^{-2} \SCP{\Psi_N}{q_1 \left[ v(x_1 - x_2) , q_1  \right] \Psi_N}  .
\end{align}
The first term is estimated by
\begin{align}
\abs{\eqref{eq: commutator of H with q1-5}}
&=  N^{-1}  \abs{\SCP{\Psi_N}{q_1\left[ \left( - i  \nabla_1 -  N^{-1/2} \vA(x_1) \right)^2,p_1 \right] \Psi_N}}   
\nonumber \\
&= N^{-1}  \abs{\SCP{\Psi_N}{q_1 \left( - i \nabla_1 - N^{-1/2} \vA(x_1) \right)^2 p_1  \Psi_N}}  
\nonumber \\
&\leq  N^{-1} \abs{\SCP{q_1 \Psi_N}{ (- \Delta_1) p_1 \Psi_N}}
\nonumber \\
&+ N^{-1} \abs{\SCP{N^{-1/2} \vA(x_1) q_1 \Psi_N}{\nabla_1 p_1 \Psi_N}}
\nonumber\\
&+ N^{-1} \abs{\SCP{N^{-1/2} \vA(x_1) q_1 \Psi_N}{N^{-1/2} \vA(x_1) p_1 \Psi_N}}
\nonumber\\
&\leq N^{-1} \left(   \beta^a + \norm{\Delta p_1 \Psi_N}^2 + \norm{\nabla p_1 \Psi_N}^2 \right)
\nonumber \\
&+ N^{-1} \left(  \norm{N^{-1/2} \vA(x_1) q_1 \Psi_N}^2 + \norm{N^{-1/2} \vA(x_1) p_1 \Psi_N}^2   \right).
\end{align}
Lemma~\ref{lemma: Pauli Feldoperator-Abschaetzung} and the positivity of the interaction potential $v$ let us continue with
\begin{align}
\abs{\eqref{eq: commutator of H with q1-5}}
&\leq N^{-1}  \Lambda C(\norm{\varphi}_{H^2})  \left(  \SCP{\Psi_N}{N^{-1} H_f \Psi_N} 
+ \Lambda/N \right)
\nonumber \\
&\leq N^{-1} \Lambda C(\norm{\varphi}_{H^2}) \left( \SCP{\Psi_N}{N^{-1} H_N \Psi_N} + \Lambda/N \right)
\nonumber\\
&\leq N^{-1} \Lambda C(\norm{\varphi}_{H^2}) 
\left( \SCP{\Psi_N}{ (N^{-1} H_N - \mathcal{E}_M ) \Psi_N} + \mathcal{E}_M \right)
\nonumber\\
&\leq  N^{-1} \Lambda C(\norm{\varphi}_{H^2}) 
\left(  \norm{(N^{-1} H_N - \mathcal{E}_M ) \Psi_N} + \mathcal{E}_M  \right)
\nonumber\\
&\leq  N^{-1} \Lambda C(\norm{\varphi}_{H^2})
\left( \sqrt{\beta^c} + \mathcal{E}_M \right) 
\leq N^{-1} \Lambda C(\norm{\varphi}_{H^2}, \mathcal{E}_M). 
\end{align}
The second term is bounded by
\begin{align}
\abs{\eqref{eq: commutator of H with q1-7}}
&\leq N^{-1} \abs{\SCP{\Psi_N}{q_1 \left[v(x_1 - x_2) , p_1 \right] \Psi_N}}  
= N^{-1} \abs{\SCP{\Psi_N}{q_1 v(x_1 - x_2)  p_1  \Psi_N}}   
\nonumber \\
&\leq 1/2 \norm{q_1 \Psi_N}^2 + 1/(2 N^2)
\norm{v(x_1 - x_1) p_1 \Psi_N}^2  
\nonumber \\
&= 1/2 \beta^a  
+ 1/(2N^2)  \SCP{\Psi_N}{p_1 v^2(x_2 - x_1) p_1 \Psi_N} 
\nonumber \\
&= 1/2 \beta^a  
+ 1/(2N^2)  \SCP{\Psi_N}{p_1 \left( v^2 * \abs{\varphi}^2 \right)(x_2) \Psi_N} 
\nonumber \\
&\leq 1/2 \beta^a + 1/(2 N^2) \norm{v^2 * \abs{\varphi}^2}_{\infty}.
\end{align}
We use assumption (A1) and decompose the interaction potential $v = v_1 + v_2$ into $v_1 \in L^2(\mathbb{R}^3)$ and $v_2 \in L^{\infty}(\mathbb{R}^3)$. Then, we apply Young's inequality and obtain
\begin{align}
\label{eq: Pauli direct interaction potential Young 1}
\norm{v^2 * \abs{\varphi}^2 }_{\infty} &\leq \norm{v_1^2}_1 \norm{\abs{\varphi}^2}_{\infty} + \norm{v_2^2}_{\infty} \norm{\abs{\varphi}^2}_1
\nonumber \\
&= \norm{v_1}_2^2 \norm{\varphi}_{\infty}^2 + \norm{v_2}_{\infty}^2 \norm{\varphi}_2^2
\leq C(\norm{\varphi}_{\infty}).
\end{align}
Thus,
\begin{align}
\abs{\eqref{eq: commutator of H with q1-5}
+ \eqref{eq: commutator of H with q1-7}}
\leq C(\norm{\varphi}_{H^2}, \norm{\varphi}_{\infty},\mathcal{E}_M) \left(  \beta + \Lambda/N \right)
\end{align}
and 
\begin{align}
\SCP{\Psi_N}{&q_1 N^{-1} H_N q_1 \Psi_N}
\leq  \abs{\SCP{\Psi_N}{q_1 N^{-1} H_N \Psi_N}} 
+ \abs{\eqref{eq: commutator of H with q1-5}
+ \eqref{eq: commutator of H with q1-7}}  
\nonumber \\
&\leq  \abs{\SCP{\Psi_N}{q_1 \left( N^{-1} H_N - \mathcal{E}_M \right)   \Psi_N} + \mathcal{E}_M \beta^a} 
+ C(\norm{\varphi}_{H^2}, \norm{\varphi}_{\infty},\mathcal{E}_M) \left(  \beta + \Lambda/N \right) 
\nonumber \\
&\leq \abs{\SCP{\Psi_N}{q_1 \left( N^{-1} H_N - \mathcal{E}_M \right)   \Psi_N}}  
+ C(\norm{\varphi}_{H^2}, \norm{\varphi}_{\infty},\mathcal{E}_M) \left(  \beta + \Lambda/N \right) 
\nonumber \\
&\leq  \norm{\left( N^{-1} H_N - \mathcal{E}_M \right) \Psi_N}^2 +  \beta^a 
+ C(\norm{\varphi}_{H^2}, \norm{\varphi}_{\infty},\mathcal{E}_M) \left(  \beta + \Lambda/N \right)  
\nonumber \\
&\leq C(\norm{\varphi}_{H^2}, \norm{\varphi}_{\infty},\mathcal{E}_M) \left(  \beta + \Lambda/N \right).
\end{align}

\end{proof}

\begin{lemma}
\label{lemma: Pauli Feldoperator gegen alpha Abschaetzung}
Let $y \in \mathbb{R}^3$ or $y \in \{x_1, \ldots, x_N \}$, $v$  satisfy (A1), $\Psi_{N,t} \in \left(L^2_s \left( \mathbb{R}^{3N} \right) \otimes \mathcal{F}_p \right) \cap \mathcal{D}(H_N)$ and $(\varphi_t,\boldsymbol{A}(t), \boldsymbol{E}(t)) \in \mathcal{G}$. Then,
there exists a monotone increasing function $C(t)$ of $\mathcal{E}_M[\varphi_t,u_t]$, $\norm{\varphi_t}_{H^2(\mathbb{R}^3)}$ and $\norm{\varphi_t}_{L^{\infty}(\mathbb{R}^3)}$  such that
\begin{align}
\norm{ N^{-1/2} \vA(y) q_1 \Psi_{N,t}}^2
&\leq  \Lambda C(t)  \left( \beta(t) + \Lambda/N \right),  
\nonumber \\
\norm{ N^{-1/2} \vA(y) q_2 \Psi_{N,t}}^2
&\leq  \Lambda C(t)  \left( \beta(t) + \Lambda/N \right),  
\nonumber \\
\norm{ N^{-1/2} \vA(y) p_1 q_2 \Psi_N}^2
&\leq  \Lambda C(t)  \left( \beta(t) + \Lambda/N \right).
\end{align}
\end{lemma}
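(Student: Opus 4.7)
The plan is to bootstrap from Lemma~\ref{lemma: Pauli Feldoperator-Abschaetzung} by inserting the vectors $q_1\Psi_{N,t}$, $q_2\Psi_{N,t}$, and $p_1 q_2\Psi_{N,t}$ in place of the generic state, and then to control the resulting field energy $\SCP{\cdot}{N^{-1} H_f \cdot}$ by the bound on $\SCP{\Psi_{N,t}}{q_1 N^{-1}H_N q_1 \Psi_{N,t}}$ already furnished by Lemma~\ref{lemma: Pauli commutator of H with q1}. All three inequalities will then reduce to the same estimate up to a trivial permutation-symmetry step.

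First I would apply Lemma~\ref{lemma: Pauli Feldoperator-Abschaetzung} to the vector $q_1 \Psi_{N,t}$ (equivalently invoke Corollary~\ref{corollary: Pauli Feldoperator-Abschaetzung mit q1}) to obtain
\begin{align*}
\norm{N^{-1/2}\vA(y) q_1 \Psi_{N,t}}^2 \leq \frac{2\Lambda}{\pi^2}\SCP{\Psi_{N,t}}{q_1\, N^{-1} H_f\, q_1 \Psi_{N,t}} + \frac{\Lambda^2}{2\pi^2 N}\beta^a(t).
\end{align*}
Now the field energy $H_f$ is dominated by the full Hamiltonian: since $(-i\nabla_j - N^{-1/2}\vA(x_j))^2 \geq 0$ and, by assumption (A1), $v \geq 0$, one has $H_f \leq H_N$ as quadratic forms. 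Sandwiching with $q_1$ gives $q_1 H_f q_1 \leq q_1 H_N q_1$, so
\begin{align*}
\SCP{\Psi_{N,t}}{q_1 N^{-1} H_f q_1 \Psi_{N,t}} \leq \SCP{\Psi_{N,t}}{q_1 N^{-1} H_N q_1 \Psi_{N,t}} \leq C(t)\bigl(\beta(t) + \Lambda/N\bigr)
\end{align*}
by Lemma~\ref{lemma: Pauli commutator of H with q1}. Combining the two displays yields the first bound.

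For the second bound I would exploit the bosonic symmetry of $\Psi_{N,t}$: since $H_N$, $H_f$, and $(q_1,q_2)$ are all permutation-symmetric between particles $1$ and $2$, we have
\begin{align*}
\SCP{\Psi_{N,t}}{q_2 N^{-1} H_f q_2 \Psi_{N,t}} = \SCP{\Psi_{N,t}}{q_1 N^{-1} H_f q_1 \Psi_{N,t}},
\end{align*}
and the same argument as above (applied this time to $q_2\Psi_{N,t}$) closes the estimate. For the third bound I apply Lemma~\ref{lemma: Pauli Feldoperator-Abschaetzung} to $p_1 q_2 \Psi_{N,t}$ and observe that $[p_1,H_f]=0$ (the projector acts only on $L^2(\mathbb{R}^{3N})$, $H_f$ only on $\mathcal{F}_p$), so that
\begin{align*}
\SCP{p_1 q_2\Psi_{N,t}}{N^{-1} H_f\, p_1 q_2 \Psi_{N,t}} = \norm{p_1 H_f^{1/2} q_2 \Psi_{N,t}/\sqrt{N}}^2 \leq \SCP{\Psi_{N,t}}{q_2 N^{-1} H_f q_2 \Psi_{N,t}},
\end{align*}
which by the previous step is again bounded by $C(t)(\beta(t)+\Lambda/N)$; moreover $\norm{p_1 q_2 \Psi_{N,t}}^2 \leq \norm{q_2 \Psi_{N,t}}^2 = \beta^a(t)$ by symmetry.

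I do not expect a genuine obstacle here: the non-trivial work has already been done in Lemma~\ref{lemma: Pauli Feldoperator-Abschaetzung} (handling of the $a^*$-part of $\vA$ via canonical commutation relations) and in Lemma~\ref{lemma: Pauli commutator of H with q1} (controlling the commutator $[H_N,q_1]$). The only point requiring minor care is the $p_1 q_2$ case, where one must justify that $p_1$ commutes with the unbounded $H_f$ in the sense of quadratic forms; this follows because $p_1$ acts as the identity on the photon Fock space, so the commutation holds on $\mathcal{D}(H_f)$ and the sandwich estimate above is rigorous.
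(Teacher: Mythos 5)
Your proposal is correct and follows essentially the same route as the paper: Corollary~\ref{corollary: Pauli Feldoperator-Abschaetzung mit q1} reduces everything to $\SCP{\Psi_{N,t}}{q_1 N^{-1}H_f q_1\Psi_{N,t}}$, which is dominated by $\SCP{\Psi_{N,t}}{q_1 N^{-1}H_N q_1\Psi_{N,t}}$ via positivity of the kinetic and interaction terms and then controlled by Lemma~\ref{lemma: Pauli commutator of H with q1}. The paper simply notes that the remaining cases are "shown analogously", which is exactly the symmetry and $[p_1,H_f]=0$ argument you spell out.
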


\begin{proof}
We have
\begin{align}
\SCP{\Psi_N}{q_1 N^{-1} H_f q_1 \Psi_N}
\leq& \SCP{\Psi_N}{q_1 N^{-1} H_N q_1 \Psi_N} 
\end{align}
because $v$ is positive.
Lemma~\ref{lemma: Pauli commutator of H with q1} and Corollary~\ref{corollary: Pauli Feldoperator-Abschaetzung mit q1} then lead to
\begin{align}
\SCP{\Psi_N}{q_1 N^{-1} H_f q_1 \Psi_N}
\leq& C(\norm{\varphi}_{H^2}, \norm{\varphi}_{\infty},\mathcal{E}_M) \left( \beta + \Lambda/N   \right)
\end{align}
and
\begin{align}
\norm{ N^{-1/2} \vA(y) q_1 \Psi_N}^2 
\leq& \Lambda  C(\norm{\varphi}_{H^2}, \norm{\varphi}_{\infty},\mathcal{E}_M)  \left( \beta + \Lambda/N \right).
\end{align}
The other inequalities are shown analogously.
\end{proof}

\begin{lemma}
\label{lemma: crucial bound}
Let $v$  satisfy (A1), $\Psi_{N,t} \in \left(L^2_s \left( \mathbb{R}^{3N} \right) \otimes \mathcal{F}_p \right) \cap \mathcal{D}(H_N)$ and $(\varphi_t,\boldsymbol{A}(t), \boldsymbol{E}(t)) \in \mathcal{G}$. Then, there exists a monotone increasing function $C(t)$ of $\mathcal{E}_M[\varphi_t,u_t]$, $\norm{\varphi_t}_{H^2(\mathbb{R}^3)}$ and $\norm{\varphi_t}_{L^{\infty}(\mathbb{R}^3)}$  such that
\begin{align}
 \int d^3y \, \Big| \Big| N^{-1} \sum_{j=1}^N q_j \kappa(x_j -y) \left( - i  \nabla_j - N^{-1/2} \vA(x_j) \right) \Psi_{N,t} \Big| \Big|^2   
&\leq  \Lambda^3 C(t) \left(  \beta + \Lambda/N \right) .
\end{align}
\end{lemma}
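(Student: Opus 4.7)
The plan is to expand the $L^2(d^3y)$–norm, exploit the permutation symmetry of $\Psi_{N,t}$ to reduce to one– and two–particle matrix elements, and then combine the $y$–integration (which supplies $\|\kappa\|_2^2 = \Lambda^3/(6\pi^2)$) with the smallness provided by the projectors $q_j$. The latter comes from Lemma~\ref{lemma: Pauli commutator of H with q1}, which, since $(-i\nabla_j - N^{-1/2}\vA(x_j))^2 \leq H_N$ as an operator inequality (the remaining summands of $H_N$ being non–negative), yields
\[
\|D_j q_j \Psi_{N,t}\|^2 \leq \langle \Psi_{N,t}, q_j H_N q_j \Psi_{N,t}\rangle \leq NC(t)(\beta(t) + \Lambda/N), \qquad D_j := -i\nabla_j - N^{-1/2}\vA(x_j).
\]

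Setting $X_j := q_j \kappa(x_j-y) D_j$ and using symmetry of $\Psi_{N,t}$, the expansion
\[
\int d^3y\, \bigl\| N^{-1}\textstyle\sum_j X_j \Psi_{N,t}\bigr\|^2
= N^{-1}\!\int d^3y\, \|X_1\Psi_{N,t}\|^2 + (1-N^{-1})\!\int d^3y\, \langle X_1\Psi_{N,t}, X_2\Psi_{N,t}\rangle
\]
splits the estimate into a diagonal and an off–diagonal part. For the diagonal I would commute $\kappa_1$ past $q_1$ and use $\int d^3y\, \kappa(x_1-y)^2 = \|\kappa\|_2^2$ together with $q_1 D_1 = D_1 q_1 + [q_1, D_1]$: the main piece is controlled by $\|\kappa\|_2^2\|D_1 q_1 \Psi_{N,t}\|^2 \leq \Lambda^3 NC(t)(\beta + \Lambda/N)$, while the commutators $[q_1, D_1] = -[p_1,-i\nabla_1] - N^{-1/2}[p_1,\vA(x_1)]$ and $[q_1, \kappa_1] = -[p_1,\kappa(x_1-y)]$ are bounded respectively by $\|\nabla\varphi_t\|$ together with Lemma~\ref{lemma: Pauli Feldoperator gegen alpha Abschaetzung}, and by the $y$–integrated Hilbert–Schmidt bound $\int d^3y\,\|[p_1,\kappa(x_1-y)]\|_{HS}^2 \lesssim \|\kappa\|_2^2 \leq \Lambda^3 C$. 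Multiplying by the outer $N^{-1}$ yields the desired $\Lambda^3 C(t)(\beta(t)+\Lambda/N)$ for the diagonal (the bounded remainders contribute $\Lambda^3 C/N$, which is absorbed into $\Lambda^3 C\cdot \Lambda/N$).

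For the off–diagonal I would use $\int d^3y\, \kappa(x_1-y)\kappa(x_2-y) = \kappa(x_1-x_2)$ to collapse the two cut–off factors into a single two–body multiplication operator and then expand $q_1 q_2 = \id - p_1 - p_2 + p_1 p_2$. The crucial observation is that a naive Cauchy–Schwarz would estimate the resulting matrix element $\langle D_1\Psi_{N,t}, \kappa(x_1-x_2) D_2\Psi_{N,t}\rangle$ by $\|\kappa\|_\infty\|D_1\Psi_{N,t}\|\|D_2\Psi_{N,t}\| \sim \Lambda^3 N\cdot C(t)$, which is too large by a factor $N$; the four pieces generated by $q_1 q_2 = \id - p_1 - p_2 + p_1 p_2$ must therefore be combined in such a way that the leading $\Lambda^3 N$ contributions cancel. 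Using repeatedly $q_j p_j = 0$ (so that $p_j\kappa_j q_j = [p_j,\kappa_j]q_j$ and its conjugate), the surviving terms involve the commutators $[p_j, \kappa_j]$ paired with $D_k q_j$, and Cauchy–Schwarz together with the bound $\|D_j q_k \Psi_{N,t}\|^2 \leq NC(t)(\beta + \Lambda/N)$ from the symmetric version of Lemma~\ref{lemma: Pauli commutator of H with q1} and the $y$–integrated commutator estimate mentioned above reproduces the target $\Lambda^3 C(t)(\beta + \Lambda/N)$. Throughout, splitting $D_j$ into its kinetic part $-i\nabla_j$ (handled via the operator inequality $(-i\nabla_j)^2 \leq H_N$) and its field part $-N^{-1/2}\vA(x_j)$ (handled via Lemma~\ref{lemma: Pauli Feldoperator gegen alpha Abschaetzung}) allows the two to be estimated independently. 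The main obstacle is precisely identifying the cancellation in the off–diagonal so as to trade the apparent factor of $N$ for the smallness $\beta+\Lambda/N$ delivered by the $q_j$'s.
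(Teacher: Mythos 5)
Your diagonal term is handled correctly and in essentially the same spirit as the paper: extract $\norm{\kappa}_2^2=\Lambda^3/(6\pi^2)$ from the $y$-integration, use $D_1^2\leq H_N$ (with $D_j:=-i\nabla_j-N^{-1/2}\vA(x_j)$) sandwiched between $q_1$'s, invoke Lemma~\ref{lemma: Pauli commutator of H with q1}, and let the outer $N^{-1}$ absorb the factor $N$. The gap is in the off-diagonal part, and it is precisely the obstacle you name in your last sentence without resolving. The bound you propose to use there, $\norm{D_jq_k\Psi_{N,t}}^2\leq NC(t)(\beta+\Lambda/N)$, is a factor of $N$ too weak when applied to a single off-diagonal pair: the off-diagonal carries the effective prefactor $(N-1)N^{-1}\approx 1$, so pairing two such norms (or one of them with a commutator $[p_j,\kappa_j]$, whose size is $\Lambda^{3/2}$ in Hilbert--Schmidt norm or $\Lambda^3$ in operator norm, with no $N$-smallness) by Cauchy--Schwarz produces $\Lambda^3NC(t)(\beta+\Lambda/N)$ rather than the target $\Lambda^3C(t)(\beta+\Lambda/N)$. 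Expanding $q_1q_2=\id-p_1-p_2+p_1p_2$ and asserting that the leading $\Lambda^3N$ contributions "must cancel" is circular --- the sum of those four pieces \emph{is} $q_1q_2$, and extracting the smallness from that product is the whole problem; no concrete cancellation mechanism is exhibited. Collapsing the $y$-integral to $\kappa(x_1-x_2)$ first also discards the product structure in $y$ that the correct argument exploits.

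The paper's route requires no cancellation. Since $q_1$ commutes with $\kappa(x_2-y)D_2$ and $q_2$ with $\kappa(x_1-y)D_1$, the off-diagonal term equals $\SCP{\kappa(x_1-y)D_1q_2\Psi_N}{\kappa(x_2-y)D_2q_1\Psi_N}_{;y}$; Cauchy--Schwarz in the $y$-integrated inner product together with symmetry gives $(N-1)N^{-1}\norm{\kappa}_2^2\SCP{\Psi_N}{q_2D_1^2q_2\Psi_N}$. The decisive step missing from your sketch is a \emph{resummation}: by permutation symmetry,
\begin{align*}
(N-1)\SCP{\Psi_N}{q_2D_1^2q_2\Psi_N}=\sum_{j=2}^N\SCP{\Psi_N}{q_1D_j^2q_1\Psi_N}\leq\SCP{\Psi_N}{q_1H_Nq_1\Psi_N}\leq NC(t)\left(\beta+\Lambda/N\right),
\end{align*}
i.e.\ the \emph{whole sum} over $j$ of the kinetic terms is compared with $H_N$ at once, rather than bounding each $D_j^2\leq H_N$ individually. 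This is where the missing factor $N^{-1}$ is recovered; combined with the prefactor $N^{-1}\norm{\kappa}_2^2$ it yields $\Lambda^3C(t)(\beta+\Lambda/N)$. Without this step your off-diagonal estimate does not close.
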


\begin{proof}
In the following, we use  $\SCP{\cdot}{\cdot}_{;y} = \int d^3y \, \SCP{\cdot}{\cdot}$ and $\norm{\cdot}_{;y} = \sqrt{\int d^3y \, \SCP{\cdot}{\cdot}}$ to ease the notation. We estimate
\small
\begin{align}
& \int d^3y \, \Big| \Big| N^{-1} \sum_{j=1}^N q_j \kappa(x_j -y) \left( - i  \nabla_j - N^{-1/2} \vA(x_j) \right) \Psi_{N} \Big| \Big|^2   
\nonumber\\
&= N^{-2} \SCP{\sum_{i=1}^N q_i \kappa(x_i -y) \left( - i  \nabla_i - N^{-1/2} \vA(x_i) \right) \Psi_{N}}{\sum_{j=1}^N q_j \kappa(x_j -y) \left( - i  \nabla_j - N^{-1/2} \vA(x_j) \right) \Psi_{N}}_{;y}
\nonumber\\
&= N^{-1} \SCP{ q_1 \kappa(x_1 -y) \left( - i  \nabla_1 - N^{-1/2} \vA(x_1) \right) \Psi_{N}}{q_1 \kappa(x_1 -y) \left( - i  \nabla_1 - N^{-1/2} \vA(x_1) \right) \Psi_{N}}_{;y}
\nonumber\\
&+ (N-1) N^{-1} \SCP{ q_1 \kappa(x_1 -y) \left( - i  \nabla_1 - N^{-1/2} \vA(x_1) \right) \Psi_{N}}{q_2 \kappa(x_2 -y) \left( - i  \nabla_2 - N^{-1/2} \vA(x_2) \right) \Psi_{N}}_{;y}
\nonumber\\
&\leq N^{-1} \norm{\kappa(x_1 -y) \left( - i  \nabla_1 - N^{-1/2} \vA(x_1) \right) \Psi_{N}}_{;y}^2
\nonumber \\
&+ (N-1) N^{-1} \SCP{  \kappa(x_1 -y) \left( - i  \nabla_1 - N^{-1/2} \vA(x_1) \right) q_2 \Psi_{N}}{\kappa(x_2 -y) \left( - i  \nabla_2 - N^{-1/2} \vA(x_2) \right)  q_1 \Psi_{N}}_{;y}
\nonumber\\
&\leq  N^{-1} \norm{\kappa(x_1 -y) \left( - i  \nabla_1 - N^{-1/2} \vA(x_1) \right) \Psi_{N}}_{;y}^2
\nonumber\\
&+ (N-1) N^{-1} \norm{\kappa(x_1 -y) \left( - i  \nabla_1 - N^{-1/2} \vA(x_1) \right) q_2 \Psi_{N}}^2_{;y} 
\nonumber \\
&= N^{-1} \SCP{\left( - i  \nabla_1 - N^{-1/2} \vA(x_1) \right) \Psi_{N}}{ \Big( \int d^3y \, \abs{\kappa(x_1 -y)}^2 \Big) \left( - i  \nabla_1 - N^{-1/2} \vA(x_1) \right) \Psi_{N}}
\nonumber \\
&+ (N-1) N^{-1}  \SCP{\left( - i  \nabla_1 - N^{-1/2} \vA(x_1) \right) q_2 \Psi_{N}}{ \Big( \int d^3y \, \abs{\kappa(x_1 -y)}^2 \Big) \left( - i  \nabla_1 - N^{-1/2} \vA(x_1) \right) q_2 \Psi_{N}}
\nonumber \\
&= N^{-1} \norm{\kappa}_2^2 \SCP{  \Psi_N}{ \left( - i  \nabla_1 - N^{-1/2} \vA(x_1) \right)^2 \Psi_N} 
\nonumber \\
&+ (N-1) N^{-1} \norm{\kappa}_2^2  \SCP{  \Psi_N}{q_2 \left( - i  \nabla_1 -  N^{-1/2} \vA(x_1) \right)^2 q_2 \Psi_N}.
\end{align}
\normalsize
So if we insert the identity $1 = p_1 + q_1$ and use the symmetry of the wave function, we get
\begin{align}
 \int d^3y \, \Big| \Big| N^{-1} &\sum_{j=1}^N q_j \kappa(x_j -y) \left( - i  \nabla_j - N^{-1/2} \vA(x_j) \right) \Psi_{N} \Big| \Big|^2   
 \nonumber\\
&\leq N^{-1} \norm{\kappa}_2^2  \SCP{  \Psi_N}{q_1 \left( - i  \nabla_1 -   N^{-1/2} \vA(x_1) \right)^2 q_1 \Psi_N}  
\nonumber \\
&+ 2 N^{-1} \norm{\kappa}_2^2  \abs{\SCP{  \Psi_N}{q_1 \left( - i  \nabla_1 -   N^{-1/2} \vA(x_1) \right)^2 p_1 \Psi_N}} 
\nonumber \\
&+ N^{-1} \norm{\kappa}_2^2  \abs{\SCP{  \Psi_N}{p_1 \left( - i  \nabla_1 -   N^{-1/2} \vA(x_1) \right)^2 p_1 \Psi_N}}  
\nonumber \\
&+ N^{-1} \norm{\kappa}_2^2 \sum_{j=2}^N
\SCP{\Psi_N}{q_1   \left( - i  \nabla_j -  N^{-1/2} \vA(x_j) \right)^2 q_1 \Psi_N}.
\end{align}
Adding the lines together this simplifies to
\begin{align}
 \int d^3y \, \Big| \Big| N^{-1} &\sum_{j=1}^N q_j \kappa(x_j -y) \left( - i  \nabla_j - N^{-1/2} \vA(x_j) \right) \Psi_{N} \Big| \Big|^2   
 \nonumber\\
&= N^{-1}  \norm{\kappa}_2^2 \sum_{j=1}^N
\SCP{\Psi_N}{q_1   \left( - i  \nabla_j -  N^{1/2} \vA(x_j) \right)^2 q_1 \Psi_N}  
\nonumber \\
&+2 N^{-1} \norm{\kappa}_2^2  \abs{\SCP{  \Psi_N}{q_1 \left( - i  \nabla_1 - N^{-1/2} \vA(x_1) \right)^2 p_1 \Psi_N}} 
\nonumber \\
&+  N^{-1} \norm{\kappa}_2^2  \abs{\SCP{  \Psi_N}{p_1 \left( - i  \nabla_1 - N^{-1/2} \vA(x_1) \right)^2 p_1 \Psi_N}}.
\end{align}  
Now we estimate the last two lines analogously to~\eqref{eq: commutator of H with q1-5} and obtain
\begin{align} 
 \int d^3y \, \Big| \Big| N^{-1} &\sum_{j=1}^N q_j \kappa(x_j -y) \left( - i  \nabla_j - N^{-1/2} \vA(x_j) \right) \Psi_{N} \Big| \Big|^2   
\leq  \norm{\kappa}_2^2  \Lambda/N C(\norm{\varphi}_{H^2},\mathcal{E}_M)
\nonumber \\
&+  N^{-1}  \norm{\kappa}_2^2 \sum_{j=1}^N
\SCP{\Psi_N}{q_1   \left( - i  \nabla_j -  N^{-1/2} \vA(x_j) \right)^2 q_1 \Psi_N}.
\end{align}
Because $H_f$ and $v$ are positive operators, this is bounded by
\begin{align}
 \norm{\kappa}_2^2 \SCP{\Psi_N}{q_1 N^{-1} H_N q_1 \Psi_N}     
+  \norm{\kappa}_2^2  \Lambda/N  C(\norm{\varphi}_{H^2},\mathcal{E}_M).
\end{align}
Then, we apply Lemma~\ref{lemma: Pauli commutator of H with q1} and obtain
\begin{align}
 \int d^3y \, \Big| \Big| N^{-1} \sum_{j=1}^N q_j \kappa(x_j -y) \left( - i  \nabla_j - N^{-1/2} \vA(x_j) \right) \Psi_{N} \Big| \Big|^2   
&\leq    \norm{\kappa}_2^2 C(t) \left( \beta + \Lambda/N \right) 
\nonumber \\
&\leq  \Lambda^3 C(t)  \left( \beta + \Lambda/N \right),
\end{align}
where $C(t)$ is a monotone increasing function of $\mathcal{E}_M[\varphi_t,u_t]$, $\norm{\varphi_t}_{H^2(\mathbb{R}^3)}$ and $\norm{\varphi_t}_{L^{\infty}(\mathbb{R}^3)}$.
\end{proof}

\subsection{Bound on $d_t \beta^a$:}
\lemma{
\label{lemma: dt beta-a}
Let $v$ satisfy (A1), $(\varphi_t,\boldsymbol{A}(t), \boldsymbol{E}(t)) \in \mathcal{G}$ and  $\Psi_{N,t}$ be the unique solution of \eqref{eq: Pauli Schroedinger equation microscopic} with initial data $\Psi_{N,0} \in \left(L^2_s \left( \mathbb{R}^{3N} \right) \otimes \mathcal{F}_p \right) \cap \mathcal{D}(H_N)$. 
Then, there exists a monotone increasing function $C(t)$ of $\norm{\vAc}_{\infty}$, $\mathcal{E}_M[\varphi_t,u_t]$, $\norm{\varphi_t}_{H^2(\mathbb{R}^3)}$ and $\norm{\varphi_t}_{L^{\infty}(\mathbb{R}^3)}$ such that
\begin{align}
\abs{d_t \beta^a(\Psi_{N,t},\varphi_t)} \leq  \Lambda^2 C(t)  \left( \beta(\Psi_{N,t},\varphi_t, u_t) + \Lambda/N \right).
\end{align}
}
\proof{ The time derivative of the projector $q^{\varphi_t}_1$ is given by
\begin{align}
d_t q^{\varphi_t}_1 &= - i \left[ H_1^{HM}, q_1^{\varphi_t} \right],
\end{align}
where $H_1^{HM}$ denotes the effective Hamiltonian $H_1^{HM} \coloneqq \left( - i \nabla_1 - \vAc(x_1,t) \right)^2 + \left(v * \abs{\varphi_t}^2 \right)(x_1)$.
This allows us to compute the derivative of $\beta^a(t)$ by
\begin{align}
d_t \beta^a(t) &=  d_t \SCP{\Psi_{N,t}}{q^{\varphi_t}_1 \Psi_{N,t}} 
=  i  \SCP{\Psi_{N,t}}{\left[ \left(H_N - H_1^{HM} \right), q^{\varphi_t}_1 \right] \Psi_{N,t}} 
\nonumber \\
&=  - 2 \SCP{\Psi_{N,t}}{\left[ \left(  N^{-1/2} \vA(x_1) - \vAc(x_1,t) \right) \cdot \nabla_1, q^{\varphi_t}_1 \right] \Psi_{N,t}} 
\nonumber \\
&  \; \; \; \,  +  i  \SCP{\Psi_{N,t}}{\left[ \left( N^{-1} \vA^2(x_1) - \vAc^2(x_1,t) \right) , q^{\varphi_t}_1 \right] \Psi_{N,t}} 
\nonumber \\
&  \; \; \; \,  + i  \SCP{\Psi_{N,t}}{\Big[ \Big( N^{-1} \sum_{1\leq j < k \leq N} v(x_j - x_k)  -  \left(v * \abs{\varphi_t}^2 \right)(x_1) \Big)  , q^{\varphi_t}_1 \Big] \Psi_{N,t}}   
\nonumber \\
&= - 4  \Re \SCP{\Psi_{N,t}}{\left(  N^{-1/2} \vA(x_1) - \vAc(x_1,t) \right) \cdot \nabla_1 q^{\varphi_t}_1 \Psi_{N,t}} 
\nonumber \\
&  \; \; \; - 2  \Im \SCP{\Psi_{N,t}}{\left( N^{-1} \vA^2(x_1) - \vAc^2(x_1,t) \right) q^{\varphi_t}_1 \Psi_{N,t}} 
\nonumber \\
&  \; \; \;  - 2 \Im  \SCP{\Psi_{N,t}}{ \left( (N-1)/N v(x_1 - x_2)  -  \left(v * \abs{\varphi_t}^2 \right)(x_1)  \right) q^{\varphi_t}_1 \Psi_{N,t}} .
\end{align}
Inserting the identity $1 = p_1 + q_1$ and the relations
\begin{itemize}
\item[]
$\Re \SCP{\Psi_N}{q_1 \left( N^{-1/2} \vA(x_1) - \vAc(x_1,t) \right) \cdot \nabla_1 q_1 \Psi_N} = 0$,
\item[]
$\Im \SCP{\Psi_N}{q_1 \left(  N^{-1} \vA^2(x_1) - \vAc^2(x_1,t) \right) q_1 \Psi_N} = 0$,
\item[]
$\Im  \SCP{\Psi_N}{q_1 \left(  (N-1) N^{-1} v(x_1 - x_2)  -  \left(v * \abs{\varphi}^2 \right)(x_1)  \right) q_1  \Psi_N} = 0$,
\end{itemize}
lead to
\begin{align}
\label{eq: Pauli alpha-a-nabla}
d_t \beta^a(t)  =&  - 4  \Re \SCP{\Psi_N}{p_1 \left(  N^{-1/2} \vA(x_1) - \vAc(x_1,t) \right) \cdot \nabla_1 q_1 \Psi_N}  \\
\label{eq: Pauli alpha-a-field}
&- 2 \Im \SCP{\Psi_N}{p_1 \left( N^{-1} \vA^2(x_1) - \vAc^2(x_1,t) \right) q_1\Psi_N}  \\
\label{eq: Pauli alpha-a-direct interaction}
&- 2  \Im  \SCP{\Psi_N}{ p_1 \left( (N-1) N^{-1} v(x_1-x_2)  -  \left(v * \abs{\varphi_t}^2 \right)(x_1)  \right)  q_1 \Psi_N} .
\end{align}
In the following, we estimate each line separately. To simplify the presentation we use the shorthand 
notation~\eqref{eq: Pauli shorthand notation field differences}.

\subsubsection{Bound on \eqref{eq: Pauli alpha-a-nabla}:}
Integration by parts and triangular inequality let us estimate
\begin{align}
\abs{\eqref{eq: Pauli alpha-a-nabla}} 
&\leq 4 \abs{\SCP{\Psi_N}{p_1\left(  \vAdp(x_1,t) 
+ \vAdm(x_1,t)   \right) \cdot \nabla_1 q_1 \Psi_N}}  
\nonumber \\
\label{eq: Pauli alpha-a-nabla-1}
&\leq  4 \abs{\SCP{\nabla_1 p_1 \Psi_N}{ \vAdm(x_1,t)  q_1 \Psi_N}}  \\
\label{eq: Pauli alpha-a-nabla-2}
&+ 4 \abs{\SCP{\nabla_1 p_1 \Psi_N}{ \vAdp(x_1,t)  q_1 \Psi_N}}.
\end{align}
By means of Lemma~\ref{lemma: Pauli auxiliary fields}, we bound the first line by
\begin{align}
\eqref{eq: Pauli alpha-a-nabla-1} 
&= 4 \abs{\SCP{\nabla_1 p_1 \Psi_N}{ \vAdm(x_1,t)   q_1 \Psi_N}} 
= 4 \abs{\SCP{\nabla_1 p_1 \Psi_N}{ \left( \eta * \vEdm \right) (x_1,t)  q_1 \Psi_N}} 
\nonumber \\
&= 4 \abs{\SCP{\vEdp(y,t) \nabla_1 p_1 \Psi_N}{ \eta( x_1 - y)   q_1 \Psi_N}_{;y}} 
\nonumber \\
&\leq  4 \norm{\vEdp(y,t) \cdot \nabla_1 p_1 \Psi_N}_{;y}  \norm{\eta(y-x_1)   q_1 \Psi_N}_{;y} 
\nonumber \\
&\leq 2 \norm{\vEdp(y,t) \cdot \nabla_1 p_1 \Psi_N}^2_{;y} 
+ 2  \norm{\eta}_2^2 \norm{q_1 \Psi_N}^2  
\nonumber \\
&\leq  \Lambda \pi^{-2} \beta^a +   C(\norm{\nabla \varphi}_2)  \beta^b
\leq  \Lambda   C(\norm{\nabla \varphi}_2)   \beta,
\end{align}
where we made use of Lemma~\ref{lemma: Pauli field operator times p or q against functional} and \eqref{eq: Pauli norm cut off function 2}. \\
The second term is bounded by
\begin{align}
\eqref{eq: Pauli alpha-a-nabla-2} 
&=  4 \abs{\SCP{\nabla_1 p_1 \Psi_N}{\int d^3y \, \eta(x_1 - y) \vEdp(y,t)  q_1 \Psi_N}}  
\nonumber \\
&=  4   \abs{\SCP{\nabla_1 p_1 \Psi_N}{ \eta(x_1 - y) \vEdp(y,t)  q_1 \Psi_N}_{;y}} 
\nonumber \\
&=  4  \abs{\SCP{q_1 \eta(x_1 -y) \nabla_1 p_1 \Psi_N}{ \vEdp(y,t)  \Psi_N}_{;y}} 
\nonumber \\
&= 4 \abs{\SCP{N^{-1} \sum_{i=1}^N q_i \eta(x_i - y) \nabla_i p_i \Psi_N}{\vEdp(y,t)  \Psi_N}_{;y}} 
\nonumber \\
&\leq 2 \norm{\vEdp(y,t)  \Psi_N}_{;y}^2
+2 \Big| \Big|  N^{-1} \sum_{i=1}^N q_i \eta(x_i - y) \nabla_i p_i \Psi_N \Big| \Big|_{;y}^2.
\end{align}
Lemma~\ref{lemma: Pauli field operator times p or q against functional} and the symmetry of the wave function lead to
\begin{align}
\eqref{eq: Pauli alpha-a-nabla-2} 
&\leq 2 \beta^b 
+ 2 N^{-2}   \SCP{\sum_{i=1}^N q_i \eta(x_i -y) \nabla_i p_i \Psi_N}{\sum_{j=1}^N q_j \eta(x_j -y) \nabla_j p_j \Psi_N}_{;y} 
\nonumber \\
&\leq  2 \beta^b 
+ 2 N^{-1} \norm{q_1 \eta(x_1-y) \nabla_1 p_1 \Psi_N}_{;y}^2
\nonumber\\
&+ 2 \SCP{q_1 \eta(x_1-y)\nabla_1 p_1 \Psi_N}{q_2 \eta(x_2-y) \nabla_2 p_2 \Psi_N}_{;y}
\nonumber\\
&\leq 2 \beta^b 
+ 2 N^{-1} \norm{ \eta(x_1-y) \nabla_1 p_1 \Psi_N}_{;y}^2
\nonumber\\
&+ 2 \SCP{\eta(x_1-y) \nabla_1 p_1 q_2 \Psi_N}{\eta(x_2-y) \nabla_2 p_2 q_1 \Psi_N}_{;y}
\nonumber \\
&\leq  2 \beta^b 
+ 2 N^{-1} \norm{ \eta(x_1-y) \nabla_1 p_1 \Psi_N}_{;y}^2
\nonumber\\
&+ 2 \norm{\eta(x_1-y) \nabla_1 p_1 q_2 \Psi_N}_{;y} \norm{\eta(x_2-y) \nabla_2 p_2 q_1 \Psi_N}_{;y}
\nonumber\\
&\leq  2 \beta^b 
+2 N^{-1} \SCP{\eta(x_1-y) \nabla_1 p_1 \Psi_N}{\eta(x_1-y) \nabla_1 p_1 \Psi_N}_{;y}
\nonumber \\
&+2 \SCP{\eta(x_1-y) \nabla_1 p_1 q_2 \Psi_N}{\eta(x_1-y) \nabla_1 p_1 q_2 \Psi_N}_{;y}.
\end{align}
Interchanging the order of integration we have
\begin{align}
\eqref{eq: Pauli alpha-a-nabla-2} 
&\leq  2 N^{-1} \SCP{\nabla_1 p_1 \Psi_N}{\left( \int d^3y \, \abs{\eta(x_1-y)}^2  \right) \nabla_1 p_1 \Psi_N}
+ 2 \beta^b 
\nonumber\\
& \; \; \; \; +  2 \SCP{\nabla_1 p_1 q_2 \Psi_N}{\left( \int d^3y \, \abs{\eta(x_1 -y)}^2  \right) \nabla_1 p_1 q_2 \Psi_N}
\nonumber\\
&=  2  \norm{\eta}_2^2 \left(  N^{-1} \SCP{\Psi_N}{p_1 (- \Delta_1) p_1 \Psi_N} + \SCP{\Psi_N}{q_2 p_1 (- \Delta_1) p_1 q_2 \Psi_N}  \right)  + 2 \beta^b .
\end{align}
By virtue of $p_1 \left( - \Delta \right) p_1 = p_1 \norm{\nabla \varphi}_2^2$, this becomes
\begin{align}
\eqref{eq: Pauli alpha-a-nabla-2}
&\leq
 2  \norm{\eta}_2^2  \norm{\nabla \varphi}_2^2 \left( N^{-1} \SCP{\Psi_N}{p_1 \Psi_N}   
 +  \SCP{ \Psi_N}{ q_2 p_1q_2 \Psi_N} \right)  + 2 \beta^b 
\nonumber \\
&\leq   \norm{\eta}_2^2  C(\norm{\varphi}_{H^2})  \left( \beta^a 
+ \beta^b + N^{-1} \right) 
\leq  \Lambda C(\norm{\varphi}_{H^2}) \left( \beta + N^{-1} \right) 
\end{align}
and we obtain
\begin{align}
\abs{\eqref{eq: Pauli alpha-a-nabla}} 
&\leq  \Lambda  C(\norm{\varphi}_{H^2})  \left( \beta + N^{-1} \right).
\end{align}

\subsubsection{Bound on \eqref{eq: Pauli alpha-a-field}:}

\begin{align}
\abs{\eqref{eq: Pauli alpha-a-field}}
&\leq 2  \abs{\SCP{\Psi_N}{p_1 \left( N^{-1} \vA^2(x_1) - \vAc^2(x_1,t) \right)q_1 \Psi_N}}  
\nonumber \\
&= 2  \abs{\SCP{\Psi_N}{p_1 \left( N^{-1/2} \vA(x_1) - \vAc(x_1,t) \right) \left( N^{-1/2} \vA(x_1) + \vAc(x_1,t) \right)q_1 \Psi_N}}  
\nonumber \\
\label{eq: Pauli alpha-a-field-1}
&\leq 2  \abs{\SCP{\Psi_N}{p_1 \vAdm(x_1,t) \left(  N^{-1/2} \vA(x_1) + \vAc(x_1,t) \right)q_1 \Psi_N}}  \\
\label{eq: Pauli alpha-a-field-2}
&+ 2  \abs{\SCP{\Psi_N}{p_1 \vAdp(x_1,t) \left( N^{-1/2} \vA(x_1) + \vAc(x_1,t) \right)q_1 \Psi_N}}  
\end{align}
First, we deal with line \eqref{eq: Pauli alpha-a-field-1}:
\begin{align}
\eqref{eq: Pauli alpha-a-field-1} 
&= 2 \abs{\SCP{\Psi_N}{p_1 \left( \eta * \vEdm\right)(x_1,t) \left( N^{-1/2} \vA(x_1) + \vAc(x_1,t) \right)q_1 \Psi_N}} 
\nonumber \\
&= 2  \abs{\SCP{\vEdp(y,t) p_1 \Psi_N}{ \eta(y-x_1)\left( N^{-1/2} \vA(x_1) + \vAc(x_1,t) \right)q_1 \Psi_N}_{;y}}  \nonumber \\
&\leq  \norm{\vEdp(y,t) p_1 \Psi_N}_{;y}^2 
+   \norm{\eta(x_1-y) \left( N^{-1/2} \vA(x_1) + \vAc(x_1,t) \right)q_1 \Psi_N}_{;y}^2 
\nonumber \\
&\leq \norm{\vEdp(y,t) p_1 \Psi_N}_{;y}^2 
+ \norm{\eta}_2^2 \norm{\left( N^{-1/2} \vA(x_1) + \vAc(x_1,t) \right)q_1 \Psi_N}^2.
\end{align}
Making use of Lemma~\ref{lemma: Pauli field operator times p or q against functional} and $(a+b)^2 \leq 2 (a^2 + b^2)$, we obtain
\begin{align}
\eqref{eq: Pauli alpha-a-field-1}
&\leq \beta^b 
+ 2 \norm{\eta}_2^2 \left(  \norm{\vAc}_{\infty}^2 \norm{q_1 \Psi_N}^2 + \norm{ N^{-1/2} \vA(x_1) q_1 \Psi_N}^2 \right). 
\end{align}
By means of \eqref{eq: Pauli norm cut off function 2} and Lemma~\ref{lemma: Pauli Feldoperator gegen alpha Abschaetzung}, this becomes
\begin{align}
\eqref{eq: Pauli alpha-a-field-1}
\leq&  \Lambda^2 C(\norm{\vAc}_{\infty} , \norm{\varphi}_{H^2}, \norm{\varphi}_{\infty}, \mathcal{E}_M)  \left(  \beta + \Lambda/N \right).
\end{align}
The second line is bounded by
\begin{align}
\eqref{eq: Pauli alpha-a-field-2}
&=  2  \abs{\SCP{\Psi_N}{p_1 \vAdp(x_1,t) \left( N^{-1/2} \vA(x_1) + \vAc(x_1,t) \right)q_1 \Psi_N}}  
\nonumber \\
&=  2  \abs{\SCP{\Psi_N}{p_1\left[ \left( N^{-1/2} \vA(x_1) + \vAc(x_1,t) \right) \vAdp(x_1,t) + \Lambda^2/(4 \pi^2 N)   \right] q_1 \Psi_N}}  
\nonumber \\
&\leq 2  \abs{\SCP{\Psi_N}{p_1\left( N^{-1/2} \vA(x_1) + \vAc(x_1,t) \right) \vAdp(x_1,t)   q_1 \Psi_N}} 
\nonumber \\
&+ 2 \Lambda^2/(4 \pi^2 N) \abs{\underbrace{\SCP{\Psi_N}{p_1q_1 \Psi_N}}_{=0}} .
\end{align}
Here, we have we used the commutation relation
\begin{align}
&\left[ \vAdp(x_1,t) , \left( N^{-1/2} \vA(x_1) + \vAc(x_1,t) \right)  \right]  
= N^{-1} \left[ \vAp(x_1) , \vAm(x_1)  \right]  
= \Lambda^2/(4 \pi^2 N).
\end{align}
Lemma~\ref{lemma: Pauli auxiliary fields} and  Lemma~\ref{lemma: Pauli field operator times p or q against functional} lead to
\begin{align}
\eqref{eq: Pauli alpha-a-field-2} 
&\leq  2  \abs{\SCP{\left( N^{-1/2} \vA(x_1) + \vAc(x_1,t) \right) p_1 \Psi_N}{ \int d^3y \, \eta(x_1-y) \vEdp(y,t)   q_1 \Psi_N}} 
\nonumber\\
&= 2 \abs{\SCP{q_1 \eta(x_1 - y) \left( N^{-1/2} \vA(x_1) + \vAc(x_1,t) \right) p_1 \Psi_N}{\vEdp(y,t) \Psi_N}_{;y}}
\nonumber\\
&= 2 N^{-1} \abs{\SCP{\sum_{i=1}^N q_i \eta(x_i - y) \left( N^{-1/2} \vA(x_i) + \vAc(x_i,t) \right) p_i \Psi_N}{\vEdp(y,t) \Psi_N}_{;y}}
\nonumber\\
&\leq N^{-2} \Big| \Big| \sum_{i=1}^N q_i \eta(x_i - y) \left( N^{-1/2} \vA(x_i) + \vAc(x_i,t) \right) p_i \Psi_N \Big| \Big|_{;y}^2
+ \norm{\vEdp(y,t) \Psi_N}_{;y}^2
\nonumber\\
&\leq N^{-2} \Big| \Big|\sum_{i=1}^N q_i \eta(x_i - y) \left( N^{-1/2} \vA(x_i) + \vAc(x_i,t) \right) p_i \Psi_N \Big| \Big|_{;y}^2
+ \beta^b
\end{align}
Similar to the estimate of~\eqref{eq: Pauli alpha-a-nabla-2} one obtains
\begin{align}
\eqref{eq: Pauli alpha-a-field-2} 
&\leq  N^{-1}  \norm{\eta(x_1-y) \left( N^{-1/2} \vA(x_1) + \vAc(x_1,t) \right) p_1 \Psi_N}_{;y}^2
\nonumber\\
& \qquad \,  +  \norm{\eta(x_1-y) \left( N^{-1/2} \vA(x_1) + \vAc(x_1,t) \right) p_1 q_2 \Psi_N}_{;y}^2   + \beta^b
\nonumber\\
&=
N^{-1} \norm{\eta}_2^2  \norm{\left( N^{-1/2} \vA(x_1) + \vAc(x_1,t) \right) p_1 \Psi_N}^2
\nonumber\\
& \qquad \,  + \norm{\eta}_2^2  \norm{\left( N^{-1/2} \vA(x_1) + \vAc(x_1,t) \right) p_1 q_2 \Psi_N}^2   + \beta^b
\nonumber\\
&\leq C \Lambda  \left(  \beta^b + \norm{\vAc}_{\infty}^2 \beta^a + \norm{N^{-1/2} \vAc(x_1) p_1 q_2 \Psi_N}^2  \right)   
\nonumber\\
&+ C \Lambda/N  \left( \norm{\vAc}_{\infty}^2 +  \norm{N^{-1/2} \vAc(x_1) p_1 \Psi_N}^2  \right) .
\end{align}

By means of Lemma~\ref{lemma: Pauli Feldoperator gegen alpha Abschaetzung} this is bounded by
\begin{align}
\eqref{eq: Pauli alpha-a-field-2} 
&\leq \Lambda^2 C(\norm{\vAc}_{\infty}, \norm{\varphi}_{H^2}, \norm{\varphi}_{\infty}, \mathcal{E}_M)  
\left( \beta + \Lambda/N \right).
\end{align}

In total, we obtain
\begin{align}
\abs{\eqref{eq: Pauli alpha-a-field}}
&\leq \eqref{eq: Pauli alpha-a-field-1} + \eqref{eq: Pauli alpha-a-field-2} 
\leq  \Lambda^2  C(\norm{\vAc}_{\infty}, \norm{\varphi}_{H^2}, \norm{\varphi}_{\infty}, \mathcal{E}_M)      \left( \beta +  \Lambda/N  \right).
\end{align}
}

\subsubsection{Bound on \eqref{eq: Pauli alpha-a-direct interaction}:}
Subsequently, we consider the term that arises from the direct interaction. Inserting the identity $1 = p_2 + q_2$ and using the shorthand shorthand notation 
\begin{align}
Z(x_1,x_2) := (N-1) N^{-1} v(x_1-x_2) - \left( v * \abs{\varphi}^2 \right)(x_1)
\end{align}
gives
\begin{align}
\eqref{eq: Pauli alpha-a-direct interaction}
=& -2 \Im \SCP{\Psi_N}{p_1 Z(x_1,x_2) q_1 \Psi_N} 
\nonumber \\
=& -2 \Im \SCP{\Psi_N}{p_1 p_2 Z(x_1,x_2) q_1 p_2 \Psi_N} 
\nonumber \\
& -2 \Im \SCP{\Psi_N}{p_1 p_2 Z(x_1,x_2) q_1 q_2 \Psi_N} 
\nonumber \\
& -2 \Im \SCP{\Psi_N}{p_1 q_2 Z(x_1,x_2) q_1 p_2 \Psi_N} 
\nonumber \\
& -2 \Im \SCP{\Psi_N}{p_1 q_2 Z(x_1,x_2) q_1 q_2 \Psi_N}  .
\end{align}
The third term vanishes due to symmetry of the wave function under the interchange of $x_1$ and $x_2$ and we are left with
\begin{align}
\label{eq: Pauli alpha-a-direct interaction-1}
\abs{\eqref{eq: Pauli alpha-a-direct interaction}}
\leq& 2 \abs{\SCP{\Psi_N}{p_1 p_2 Z(x_1,x_2) q_1 p_2 \Psi_N}}  \\
\label{eq: Pauli alpha-a-direct interaction-2}
+& 2 \abs{\SCP{\Psi_N}{p_1 p_2 Z(x_1,x_2) q_1 q_2 \Psi_N}}      \\
\label{eq: Pauli alpha-a-direct interaction-3}
+& 2 \abs{\SCP{\Psi_N}{p_1 q_2 Z(x_1,x_2) q_1 q_2 \Psi_N}}   .
\end{align}
The first line is the most important. It is small because the direct interaction of the many-body system is well approximated by the mean-field potential. By means of 
\begin{align}
p_2 Z(x_1,x_2) p_2 
&= p_2 \left[  (N-1)N^{-1} v(x_1-x_2) - \left( v * \abs{\varphi}^2 \right)(x_1) \right] p_2 
\nonumber \\
&= \; \; \, \, \, \left[ (N-1)N^{-1}  -1 \right]   \left( v * \abs{\varphi}^2 \right)(x_1) p_2 
= - N^{-1} \left( v * \abs{\varphi}^2 \right)(x_1)  p_2 
\end{align}
one has
\begin{align}
\abs{\eqref{eq: Pauli alpha-a-direct interaction-1}}
\leq& 2 N^{-1} \abs{\SCP{\Psi_N}{p_1 \left( v * \abs{\varphi}^2 \right)(x_1) p_2 q_1 \Psi_N}} 
\nonumber \\
\leq& 2 N^{-1} \norm{\left( v * \abs{\varphi}^2 \right)(x_1) p_1 \Psi_N}
\norm{p_2 q_1 \Psi_N} 
\leq 2 N^{-1} \norm{v * \abs{\varphi}^2}_{\infty}.
\end{align}
We decompose the interaction potential $v = v_1 + v_2$ into $v_1 \in L^2(\mathbb{R}^3)$ and $v_2 \in L^{\infty}(\mathbb{R}^3)$. Then 
\begin{align}
\label{eq: Pauli direct interaction potential Young 2}
\norm{v * \abs{\varphi}^2}_{\infty}
&\leq  \norm{v_1 * \abs{\varphi}^2}_{\infty} + \norm{v_2 * \abs{\varphi}^2}_{\infty}
\leq \norm{v_1}_2 \norm{\abs{\varphi}^2}_2 + \norm{v_2}_{\infty} \norm{\abs{\varphi}^2}_1
\nonumber\\
&\leq \norm{v_1}_2 \norm{\varphi}_{\infty} \norm{\varphi}_2 + \norm{v_2}_{\infty} \norm{\varphi}_2^2
\leq C(\norm{\varphi}_{\infty}).
\end{align}
holds due to Young's inequality and we obtain
\begin{align}
\abs{\eqref{eq: Pauli alpha-a-direct interaction-1}} \leq N^{-1} C(\norm{\varphi}_{\infty}).
\end{align}
Moreover, we have
\begin{align}
p_1 Z^2(x_1,x_2) p_1
=& p_1 \scp{\varphi_t}{\left((N-1) N^{-1} v(x_2- \cdot) -
\left( v * \abs{\varphi_t}^2 \right) \right)^2 \varphi_t} 
\nonumber \\
\leq& 2 p_1 \scp{\varphi_t}{\left( v^2(x_2 - \cdot) 
+ \left( v * \abs{\varphi_t}^2  \right)^2 \right) \varphi_t}  
\nonumber \\
\leq& 2 p_1 \left( \norm{v^2 * \abs{\varphi_t}^2}_{\infty}
+ \norm{v * \abs{\varphi_t}^2}_{\infty}^2 \right) \leq  p_1 C(\norm{\varphi}_{\infty})
\end{align}
because of~\eqref{eq: Pauli direct interaction potential Young 1} and~\eqref{eq: Pauli direct interaction potential Young 2}.
This shows
\begin{align}
\norm{p_1 Z^2(x_1,x_2) p_1}_{\op} \leq C(\norm{\varphi}_{\infty})
\end{align}
and allows us to estimate
\begin{align}
\abs{\eqref{eq: Pauli alpha-a-direct interaction-2}}
&=
2 \abs{\SCP{q_2 Z(x_1,x_2) p_1 p_2 \Psi_N}{q_1 \Psi_N}}
= 2 (N-1)^{-1} \abs{\SCP{\sum_{i=2}^N q_i Z(x_1,x_i) p_1 p_i \Psi_N}{q_1 \Psi_N}}
\nonumber \\
&\leq N^{-1} \Big| \Big| \sum_{i=2}^N q_i  Z(x_1,x_i) p_1 p_i \Psi_N \Big| \Big|^2 + 4 \norm{q_1 \Psi_N}^2
\nonumber\\
&= N^{-2} \SCP{\sum_{i=2}^N q_i Z(x_1,x_i) p_1 p_i \Psi_N}{\sum_{j=2}^N q_j Z(x_1,x_j) p_1 p_j \Psi_N}
+ 4 \beta^a
\nonumber\\
&\leq \SCP{q_2 Z(x_1,x_2) p_1 p_2 \Psi_N}{q_3 Z(x_1,x_3) p_1 p_3 \Psi_N}
+ N^{-1} \norm{q_2 Z(x_1,x_2) p_1 p_2 \Psi_N}^2 + 4 \beta^a
\nonumber\\
&\leq \SCP{Z(x_1,x_2)p_1 p_2 q_3 \Psi_N}{Z(x_1,x_3)p_1 p_3 q_2 \Psi_N}
+ N^{-1} \norm{ Z(x_1,x_2) p_1 p_2 \Psi_N}^2 + 4 \beta^a
\nonumber\\
&\leq \norm{Z(x_1,x_2)p_1 p_2 q_3 \Psi_N}^2
+ N^{-1} \norm{ Z(x_1,x_2) p_1 p_2 \Psi_N}^2 + 4 \beta^a
\nonumber\\
&\leq  \norm{p_1 Z^2(x_1,x_2) p_1}_{op} \left( \beta^a + N^{-1} \right) + 4 \beta^a
\nonumber\\
&\leq C(\norm{\varphi}_{\infty}) \left( \beta + N^{-1} \right).
\end{align}
The last term of \eqref{eq: Pauli alpha-a-direct interaction} is bounded by
\begin{align}
\abs{\eqref{eq: Pauli alpha-a-direct interaction-3}}
=& 2 \SCP{Z(x_1,x_2) p_1 q_2 \Psi_N}{q_1 q_2 \Psi_N}  
\nonumber \\
\leq& \SCP{\Psi_N}{q_2 p_1 Z^2(x_1,x_2) p_1 q_2 \Psi_N} + \norm{q_1 q_2 \Psi_N}^2  
\nonumber \\
\leq& \norm{p_1 Z^2(x_1,x_2) p_1}_{\op} \norm{q_2 \Psi_N}^2 + \beta^a
\leq C(\norm{\varphi}_{\infty}) \beta.
\end{align}
This leads to
\begin{align}
\abs{\eqref{eq: Pauli alpha-a-direct interaction}}
\leq C(\norm{\varphi}_{\infty}) \left( \beta + N^{-1} \right).
\end{align}

\subsection{Bound on $d_t \beta^b$:}

\begin{lemma}
\label{lemma: dt beta-b}
Let $v$ satisfy (A1), $(\varphi_t,\boldsymbol{A}(t), \boldsymbol{E}(t)) \in \mathcal{G}$ and  $\Psi_{N,t}$ be the unique solution of \eqref{eq: Pauli Schroedinger equation microscopic} with initial data $\Psi_{N,0} \in \left(L^2_s \left( \mathbb{R}^{3N} \right) \otimes \mathcal{F}_p \right) \cap \mathcal{D}(H_N)$. 
 Then, there exists a monotone increasing function $C(t)$ of $\mathcal{E}_M[\varphi_t,u_t]$, $\norm{\varphi_t}_{H^2(\mathbb{R}^3)}$ and $\norm{\varphi_t}_{L^{\infty}(\mathbb{R}^3)}$  such that
\begin{align}
\abs{d_t \beta^b(\Psi_{N,t}, u_t)} \leq  \Lambda^4  C(t)  \left( \beta(\Psi_{N,t},\varphi_t, u_t) + \Lambda/N  \right).
\end{align}
\end{lemma}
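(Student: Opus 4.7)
I would begin by computing $d_t\beta^b$ directly. Writing $A(k,\lambda):=|k|^{1/2}(a(k,\lambda)/\sqrt{N}-\alpha_t(k,\lambda))$, so that $\beta^b=\sum_\lambda\int d^3k\,\|A\Psi_{N,t}\|^2$, and using $i\partial_t\Psi_{N,t}=H_N\Psi_{N,t}$ together with the Schrödinger-picture identity
$d_t\|A\Psi_{N,t}\|^2 = -2\Re\langle A\Psi_{N,t},(\partial_tu_t)\Psi_{N,t}\rangle + 2\Re\langle A\Psi_{N,t},i[H_N,A]\Psi_{N,t}\rangle$,
I compute the two ingredients: from the Maxwell-Schrödinger equations and the transversality $\vep_\lambda\cdot k=0$ (which makes the projector $(1-\nabla\mathrm{div}\Delta^{-1})$ act trivially on polarization-contracted quantities),
$\partial_t u_t(k,\lambda)=-i|k|u_t(k,\lambda)+\tfrac{i\tilde\kappa(k)}{\sqrt{2}}\vep_\lambda(k)\cdot\mathcal{FT}[\vj_t](k)$,
while a short computation with the canonical commutation relations gives
$i[H_N,A]=-i|k|A+\tfrac{i\tilde\kappa(k)}{\sqrt{2}}\vep_\lambda(k)\cdot\tfrac{1}{N}\sum_j\{D_j,e^{-ikx_j}\}$,
where $D_j:=-i\nabla_j-\vA(x_j)/\sqrt{N}$. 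The $-i|k|A$ terms cancel identically when combined, and one is left with
$d_t\beta^b=2\Re\sum_\lambda\int d^3k\,\bigl\langle A\Psi_{N,t},\tfrac{i\tilde\kappa(k)}{\sqrt{2}}\vep_\lambda(k)\cdot\bigl(J_q(k,\lambda)-J_c(k)\bigr)\Psi_{N,t}\bigr\rangle$,
with $J_q:=N^{-1}\sum_j\{D_j,e^{-ikx_j}\}$ and $J_c:=\mathcal{FT}[\vj_t]$.

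Next I would split the difference $J_q-J_c$ according to the structure of $D_j=-i\nabla_j-\vA(x_j)/\sqrt{N}$ and of $\vj_t=2\,\Im(\varphi_t^*\nabla\varphi_t)-2|\varphi_t|^2\vAc(\cdot,t)$ into a \emph{gradient part} and a \emph{field part}. Since $\vep_\lambda\cdot k=0$, the anticommutator reduces (modulo a pure $k$-multiple absorbed by the polarization vector) to $-2ie^{-ikx_j}\vep_\lambda\cdot\nabla_j-2e^{-ikx_j}\vep_\lambda\cdot\vA(x_j)/\sqrt{N}$. I then transfer the $k$-integration into position space: the prefactor $\tilde\kappa(k)\vep_\lambda(k)e^{-ikx_j}$ paired with $A(k,\lambda)$ produces, by Lemma~\ref{lemma: Pauli auxiliary fields}, the electric-field difference $N^{-1/2}\vEp(y)-\vEpc(y,t)$ evaluated at suitable spatial arguments. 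This is the bridge that lets me exploit Lemma~\ref{lemma: Pauli field operator times p or q against functional}, which provides $\int d^3y\|\vEdp(y,t)\Psi_{N,t}\|_{;y}^2\leq\beta^b$, so that one Cauchy–Schwarz factor contributes $\sqrt{\beta^b}$.

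For the field part I insert $1=p_j+q_j$ on either side. The $pp$-block reproduces the classical $|\varphi_t|^2\vAc$ contribution up to a harmless $O(1/N)$ from the $\vA^\pm$ commutator $[\vAp(x_1),\vAm(x_1)]=\Lambda^2/(4\pi^2)$; the remaining $q$-pieces are bounded by Corollary~\ref{corollary: Pauli Feldoperator-Abschaetzung mit q1} and Lemma~\ref{lemma: Pauli Feldoperator gegen alpha Abschaetzung}, giving $\|N^{-1/2}\vA(x_j)q_1\Psi\|^2\leq\Lambda C(t)(\beta+\Lambda/N)$. Together with the $\Lambda$ from $\|\eta\|_2^2$ and the $\sqrt{\beta^b}$ from the electric-field factor, this yields a contribution of order $\Lambda^2 C(t)(\beta+\Lambda/N)$. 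For the gradient part I proceed analogously; after inserting $1=p_j+q_j$, the $pp$-block again matches the classical current $\vep_\lambda\cdot\mathcal{FT}[2\,\Im(\varphi_t^*\nabla\varphi_t)]$ by the product-state identity (integration by parts plus $\vep_\lambda\cdot k=0$), while the $q_j$-containing pieces couple the electric-field difference to exactly the quantity controlled by Lemma~\ref{lemma: crucial bound}, namely $\int d^3y\|N^{-1}\sum_j q_j\kappa(x_j-y)D_j\Psi_{N,t}\|^2\leq\Lambda^3 C(t)(\beta+\Lambda/N)$. Combining with $\sqrt{\beta^b}$ through Cauchy–Schwarz produces the worst contribution, of order $\Lambda^4 C(t)(\beta+\Lambda/N)$.

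The main obstacle, and the reason for the crucial $\Lambda^4$ (rather than $\Lambda^2$) factor, is precisely the gradient part: one must write $-i\nabla_j$ as $D_j+\vA(x_j)/\sqrt{N}$ and treat the two contributions together, so that Lemma~\ref{lemma: crucial bound} (which controls the full $D_j$ with one factor of $\kappa$) can be applied without generating an uncontrolled $\|\vA q\Psi\|$ term of worse $\Lambda$-scaling. A secondary care point is the quadratic field term $\vA^2/N$ hidden inside $D_j$: its $pq$- and $qp$-cross terms must be handled as in the estimate of $\eqref{eq: Pauli alpha-a-field-2}$ for $d_t\beta^a$, using $[\vAp,\vAm]=\Lambda^2/(4\pi^2)$ together with Lemma~\ref{lemma: Pauli Feldoperator gegen alpha Abschaetzung}. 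Assembling all bounds yields $|d_t\beta^b|\leq\Lambda^4 C(t)(\beta+\Lambda/N)$.
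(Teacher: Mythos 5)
Your proposal follows essentially the same route as the paper: the same commutator computation with the cancellation of the free-evolution terms, the same identification of the residual quantum-minus-classical current paired with the position-space electric-field difference, the same insertion of $1=p_j+q_j$ to extract the classical current from the $pp$-block, and the same invocation of Lemma~\ref{lemma: Pauli field operator times p or q against functional}, Lemma~\ref{lemma: Pauli Feldoperator gegen alpha Abschaetzung} and Lemma~\ref{lemma: crucial bound}. One minor bookkeeping remark: in the paper the dominant $\Lambda^4$ comes from the field-type terms, where the relevant kernel is $\kappa$ (so $\norm{\kappa}_2^2\sim\Lambda^3$) multiplied by $\norm{N^{-1/2}\vA(\cdot)\,q_1\Psi_{N,t}}^2\leq\Lambda\,C(t)(\beta+\Lambda/N)$, while the term controlled by Lemma~\ref{lemma: crucial bound} contributes only $\Lambda^3 C(t)(\beta+\Lambda/N)$ after Cauchy--Schwarz; your attribution of $\norm{\eta}_2^2\sim\Lambda$ to the field part and of the worst power to the gradient part swaps these two, though the final bound is unaffected.
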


\proof{
We would like to note that the following calculation can be carried out in more detail. We could for example write $\beta^b$ as
\begin{align}
\beta^b(\Psi_{N,t}, u_t) &= N^{-1} \SCP{\Psi_{N,t}}{H_f \Psi_{N,t}} 
+ \norm{u_t}_{\mathfrak{h}}^2
\nonumber \\
&- 2 N^{-1/2} \Re \SCP{\Psi_{N,t}}{ \big(\sum_{j=1,2} \int d^3k \,  u_t(k,\lambda) \abs{k}^{1/2} a^*(k,\lambda)  \big) \Psi_{N,t}}
\end{align}
and determine its derivative analogously to \cite[Appendix 2.11]{leopold2} . Since we disregard photons with small energies, $\beta^b$ is well defined for $(\varphi_t, \boldsymbol{A}(t),  \boldsymbol{E}(t)) \in \mathcal{G}$ and $\Psi_{N,t} \in  \mathcal{D}(H_N) = \mathcal{D} ( \sum_{i=1}^N (- \Delta_i) + H_f) \subset \mathcal{D}(H_f)$. This allows us to determine the derivative for many-body wave functions in $\mathcal{D}(H_N^2)$ (which is invariant due to Stone's theorem) and extend the result later to $\mathcal{D}(H_N)$ by a standard density argument. \\
We compute the commutator
\begin{align}
i \left[ H_N,    \frac{a(k,\lambda)}{\sqrt{N}} \right]
&= - i \abs{k}  \frac{a(k,\lambda)}{\sqrt{N}} 
- \frac{2i}{N} \sum_{j=1}^N \frac{\tilde{\kappa}(k)}{\sqrt{2 \abs{k}}}  \vep_{\lambda}(k) e^{-ikx_j} \left( i \nabla_j + \frac{\vA(x_j)}{\sqrt{N}} \right)
\end{align}
by use of the canonical commutation relations~\eqref{eq: canonical commutation relation} and observe that the Maxwell-Schr\"odinger system leads to
\begin{align}
\partial_t \abs{k}^{1/2} \alpha_t(k,\lambda)
= - i \abs{k}^{3/2} \alpha_t(k,\lambda) + \frac{i}{\sqrt{2}} \tilde{\kappa}(k) \vep_{\lambda}(k) (2 \pi)^{3/2} \mathcal{FT}[\vj](k).
\end{align}
Then, we continue with
\footnotesize
\begin{align}
d_t \beta^b(t) &= \sum_{\lambda=1,2} \int d^3k \, 
 \abs{k} d_t \SCP{\left( N^{-1/2} a(k,\lambda) - \alpha_t(k,\lambda) \right) \Psi_N}{\left( N^{-1/2} a(k,\lambda) - \alpha_t(k,\lambda) \right) \Psi_N}    
\nonumber \\
&=  \sum_{\lambda=1,2} \int d^3k \, 
 \abs{k} \SCP{ i \left[ H_N, N^{-1/2} a(k,\lambda) \right] \Psi_N}{\left( N^{-1/2} a(k,\lambda) - \alpha_t(k,\lambda) \right) \Psi_N}    
\nonumber \\
&+ \sum_{\lambda=1,2} \int d^3k \, 
d_t \abs{k} \SCP{\left( N^{-1/2} a(k,\lambda) - \alpha_t(k,\lambda) \right) \Psi_N}{i \left[ H_N, N^{-1/2} a(k,\lambda) \right] \Psi_N}    
\nonumber \\
&- \sum_{\lambda=1,2} \int d^3k \, 
\abs{k} \SCP{ (\partial_t \alpha_t)(k,\lambda)  \Psi_N}{\left( N^{-1/2} a(k,\lambda) - \alpha_t(k,\lambda) \right) \Psi_N}    
\nonumber \\
&- \sum_{\lambda=1,2} \int d^3k \, 
\abs{k} \SCP{\left( N^{-1/2} a(k,\lambda) - \alpha_t(k,\lambda) \right) \Psi_N}{(\partial_t \alpha_t)(k,\lambda) \Psi_N}    
\nonumber \\ 
&= 2  \sum_{\lambda=1,2} \int d^3k \, 
 \abs{k} \Re \SCP{ i \left[ H_N, N^{-1/2} a(k,\lambda) \right] \Psi_N}{\left( N^{-1/2} a(k,\lambda) - \alpha_t(k,\lambda) \right) \Psi_N}    
\nonumber \\
&- 2 \sum_{\lambda=1,2} \int d^3k \, 
\abs{k} \Re \SCP{ (\partial_t \alpha_t)(k,\lambda)  \Psi_N}{\left( N^{-1/2} a(k,\lambda) - \alpha_t(k,\lambda) \right) \Psi_N}    
\nonumber \\
&= 2 \sum_{\lambda=1,2} \int d^3k \, 
\abs{k}^2 \Re \SCP{ - i \left( N^{-1/2} a(k,\lambda) - \alpha_t(k,\lambda) \right) \Psi_N}{\left( N^{-1/2} a(k,\lambda) - \alpha_t(k,\lambda) \right) \Psi_N}    
\nonumber \\
&+ 4 \sum_{\lambda=1,2} \int d^3k \, 
 \abs{k} \Re \SCP{ - \frac{i}{N} \sum_{j=1}^N \frac{\tilde{\kappa}(k)}{\sqrt{2 \abs{k}}} \vep_{\lambda}(k) e^{-ikx_j} \left( i \nabla_j +    \frac{\vA(x_j)}{\sqrt{N}} \right) \Psi_N}{\left( N^{-1/2} a(k,\lambda) - \alpha_t(k,\lambda) \right) \Psi_N}    
\nonumber \\
&+ 2 \sum_{\lambda=1,2} \int d^3k \, 
 \abs{k} \SCP{ -i \frac{\tilde{\kappa}(k)}{\sqrt{2 \abs{k}}}  \vep_{\lambda}(k) (2 \pi)^{3/2} \mathcal{FT}[j](k) \Psi_N}{\left( N^{-1/2} a(k,\lambda) - \alpha_t(k,\lambda) \right) \Psi_N} .   
\end{align}
\normalsize
The first terms cancels because the scalar product is purely imaginary. \\
\noindent
So if we use $\left[ \nabla_1, \vep_{\lambda}(k) e^{ikx_1} \right] =0$ (recall Definition~\eqref{eq: polarization vectors}) and the symmetry of the wave function, we get
\footnotesize
\begin{align}
d_t \beta^b(t) &= 
 4 \Re \sum_{\lambda=1,2} \int d^3k \,
\SCP{\Psi_N}{ i \sqrt{\frac{\abs{k}}{2}}\tilde{\kappa}(k) \vep_{\lambda}(k) e^{ikx_1} \left( i \nabla_1 +  N^{-1/2} \vA(x_1) \right) \left(  N^{-1/2} a(k,\lambda) - \alpha_t(k,\lambda) \right) \Psi_N} 
\nonumber \\
&+ 2 \Re \sum_{\lambda=1,2} \int d^3k \,
\SCP{\Psi_N}{i \sqrt{\frac{\abs{k}}{2}} \tilde{\kappa}(k) \vep_{\lambda}(k) (2 \pi)^{3/2} \mathcal{FT}[\vj]^*(k) \left(   N^{-1/2} a(k,\lambda) - \alpha_t(k,\lambda) \right) \Psi_N}.
\end{align}
\normalsize
Inserting the identity $1= p_1 + q_1$ and
\begin{align}
\sum_{\lambda=1,2} \int d^3k  \,  i  \sqrt{ \frac{\abs{k}}{2}}\tilde{\kappa}(k) \vep_{\lambda}(k) e^{ikx_1}
\left(  N^{-1/2} a(k,\lambda) - \alpha_t(k,\lambda) \right)
=  \left( \kappa * \vEdp \right)(x_1,t) 
\end{align}
lead to
\begin{align}
d_t \beta^b(t) =
&+ 4 \Re \SCP{\Psi_N}{p_1 N^{-1/2} \vA(x_1) \kappa(y-x_1) p_1 \vEdp(y,t) \Psi_N}_{;y} 
 \nonumber \\
&+ 2 \Re \SCP{\Psi_N}{p_1 \left( \kappa(y-x_1) i \nabla_1 + i \nabla_1 \kappa(y-x_1) \right) p_1 \vEdp(y,t) \Psi_N}_{;y} 
\nonumber \\
&+ 2 \Re \SCP{\Psi_N}{\Big( \int d^3z \, \kappa(y-z) \vj(z) \Big) \vEdp(y,t) \Psi_N}_{;y}  
\nonumber \\
&+ 4 \Re \SCP{\Psi_N}{q_1 \kappa(y-x_1)  i \nabla_1  p_1  \vEdp(y,t) \Psi_N}_{;y}   
\nonumber \\
&+ 4 \Re \SCP{\Psi_N}{q_1 N^{-1/2} \vA(x_1) \kappa(y-x_1)  p_1\vEdp(y,t) \Psi_N}_{;y}   
\nonumber \\
&+ 4 \Re \SCP{\Psi_N}{ \left( i \nabla_1 +  N^{1/2} \vA(x_1) \right) \kappa(y-x_1)  q_1  \vEdp(y,t) \Psi_N}_{;y}  .  
\end{align}
With the relations 
\begin{align}
p_1 N^{-1/2}  \vA(x_1) \kappa(y-x_1) p_1 
&= p_1 \int d^3z \, \abs{\varphi}^2(z) N^{-1/2} \vA(z) \kappa(y-z)  ,
\nonumber \\
p_1 \left( \kappa(y-x_1) i \nabla_1 + i \nabla_1 \kappa(y-x_1)  \right) p_1
&= -2 p_1 \int d^3z \,  \kappa(y-z) \Im[\varphi^* 
\nabla \varphi](z) ,
\nonumber \\
\vj &= 2 \left(  \Im(\varphi^* \nabla \varphi) - \abs{\varphi}^2 \vAc \right)
\end{align}
we obtain
\begin{align}
\label{eq: beta-b-5}
d_t \beta^b(t) =
&- 4 \Re \int d^3z \, \abs{\varphi}^2(z) \SCP{\Psi_N}{q_1 N^{-1/2} \vA(z) \kappa(y-z) 
\vEdp(y,t) \Psi_N}_{;y}  \\
\label{eq: beta-b-6}
&+ 4 \Re \int d^3z \, \Im[\varphi^* \nabla \varphi](z)
\SCP{\Psi_N}{q_1 \kappa(y-z) \vEdp(y,t) \Psi_N}_{;y}  \\
\label{eq: beta-b-7}
&+ 4 \Re \int d^3z \, \abs{\varphi}^2(z)
\SCP{\Psi_N}{\kappa(y-z) \left(  N^{-1/2} \vA(z) - \vAc(z,t) \right) \vEdp(y,t) \Psi_N}_{;y}  \\
\label{eq: beta-b-8}
&+ 4 \Re \SCP{\Psi_N}{q_1 \kappa(y-x_1) i \nabla_1 p_1 \vEdp(y,t) \Psi_N}_{;y}  \\
\label{eq: beta-b-9}
&+ 4 \Re \SCP{\Psi_N}{q_1 N^{-1/2} \vA(x_1)  \kappa(y-x_1) p_1 \vEdp(y,t) \Psi_N}_{;y}  \\
\label{eq: beta-b-10}
&+ 4 \Re \SCP{\Psi_N}{ \left( -i \nabla_1 -  N^{-1/2} \vA(x_1) \right) \kappa(y-x_1) q_1 \vEdp(y,t) \Psi_N}_{;y}  .
\end{align}
Subsequently, we estimate each line separately:
\begin{align}
\abs{\eqref{eq: beta-b-5}}
&\leq  4 \abs{\int d^3z \,  \abs{\varphi}^2(z) \SCP{\Psi_N}{q_1 N^{-1/2} \vA(z) \kappa(y-z)  \vEdp(y,t) \Psi_N}_{;y}} 
\nonumber \\
&\leq 4  \int d^3y \int d^3z \, \abs{\varphi}^2(z) \abs{\kappa(y-z)}
\abs{\SCP{ N^{-1/2} \vA(z) q_1 \Psi_N}{ \vEdp(y,t) \Psi_N} }  
\nonumber  \\
&\leq 4 \int d^3y \int d^3z \, \abs{\varphi}^2(z)  \norm{\vEdp(y,t) \Psi_N} \abs{\kappa(y-z)} \norm{ N^{-1/2} \vA(z) q_1 \Psi_N} 
\nonumber \\
&\leq 2 \int d^3y \int d^3z \,  \abs{\varphi}^2(z)  \norm{\vEdp(y,t) \Psi_N}^2  
\nonumber \\
&+ 2 \int d^3z \, \abs{\varphi}^2(z) \norm{ N^{-1/2} \vA(z) q_1 \Psi_N}^2 
\Big( \int d^3y \, \abs{\kappa(y-z)}^2 \Big) 
\nonumber \\
&= 2 \SCP{\Psi_N}{\vEdm(y,t)  \vEdp(y,t) \Psi_N}_{;y}  
\nonumber \\
&+ 2 \norm{\kappa}_2^2  \int d^3z \,  \abs{\varphi}^2(z) \norm{ N^{-1/2} \vA(z) q_1 \Psi_N}^2.
\end{align}
With the help of Lemma~\ref{lemma: Pauli Feldoperator gegen alpha Abschaetzung}  and \eqref{eq: Pauli norm cut off function 2} this becomes
\begin{align}
\abs{\eqref{eq: beta-b-5}} 
&\leq  \Lambda^4  C(\norm{\varphi}_{H^2}, \norm{\varphi}_{\infty}, \mathcal{E}_M)  \left(  \beta + \Lambda/N \right).
\end{align}
Similarly,
\begin{align}
\abs{\eqref{eq: beta-b-6}} 
&\leq  4 \int d^3y \int d^3z \, \abs{\kappa(y-z)} \abs{\varphi(z)} \abs{\nabla \varphi(z)}
\abs{\SCP{q_1 \Psi_N}{\vEdp(y,t) \Psi_N}}  
\nonumber \\
&\leq 4  \int d^3y \int d^3z \, \abs{\kappa(y-z)} \abs{\varphi(z)} \abs{\nabla \varphi(z)}
\norm{\vEdp(y,t) \Psi_N} \norm{q_1 \Psi_N}   
\nonumber \\
&\leq 2  \int d^3y \int d^3z \, \abs{\nabla \varphi(z)}^2 
\norm{\vEdp(y,t) \Psi_N}^2  
\nonumber \\ 
&+ 2  \int d^3z \, \abs{\varphi(z)}^2 \norm{q_1 \Psi_N}^2 
\Big( \int d^3y \,  \abs{\kappa(y-z)}^2 \Big)    
\nonumber \\
&\leq 2 \norm{\nabla \varphi}_2^2 \beta^b + 2 \norm{\kappa}_2^2 \beta^a 
\leq  \Lambda^3  C(\norm{\varphi}_{H^2})  \beta  
\end{align}
and
\begin{align}
\abs{\eqref{eq: beta-b-7}}   \leq&
4 \int d^3y \int d^3z \, \abs{\varphi}^2(z) \abs{\kappa(y-z)}
\abs{\SCP{\vAd(z,t) \Psi_N}{\vEdp(y,t) \Psi_N}} 
\nonumber  \\
\leq&  4 \int d^3y \int d^3z \, \abs{\varphi}^2(z) \abs{\kappa(y-z)} \norm{\vAd(z,t) \Psi_N} \norm{\vEdp(y,t)}   
\nonumber \\
\leq& 2 \int d^3z \, \abs{\varphi}^2(z)  \norm{\vAd(z,t) \Psi_N}^2  \int d^3y \, \abs{\kappa(y-z)}^2 
\nonumber \\
+& 2 \int d^3z \, \abs{\varphi}^2(z) \int d^3y \, \norm{\vEdp(y,t)}^2 
\nonumber \\
\leq& 2 \beta^b + 2 \norm{\kappa}_2^2 
\int d^3z \abs{\varphi}^2(z)  \norm{\vAd(z,t) \Psi_N}^2.
\end{align}
Linearity and $(a+b)^2 \leq 2 (a^2 + b^2)$ lead to
\begin{align}
\abs{\eqref{eq: beta-b-7}}   
\leq&  2 \beta^b + 4 \norm{\kappa}_2^2 
\int d^3z \abs{\varphi}^2(z) \left( \norm{\vAdp(z,t) \Psi_N}^2  + \norm{\vAdm(z,t) \Psi_N}^2  \right).
\end{align}
By means of the commutation relation
\begin{align}
\left[ \vAdp(z,t) , \vAdm(z,t) \right] 
= N^{-1} \left[ \vAp(z), \vAm(z)  \right] 
=  \Lambda^2/(4 \pi^2 N)
\end{align}
we calculate
\begin{align}
\int d^3z \, \abs{\varphi}^2(z) \norm{\vAdm(z,t) \Psi_N}^2  
&= \int d^3z \abs{\varphi}^2(z) \SCP{\Psi_N}{\vAdp(z,t) \vAdm(z,t) \Psi_N}  
\nonumber \\
&= \int d^3z \, \abs{\varphi}^2(z)
\SCP{\Psi_N}{\left(\vAdm(z,t) \vAdp(z,t) + \Lambda^2/(4 \pi^2 N) \right)\Psi_N} 
\nonumber \\
&=  \int d^3z \, \abs{\varphi}^2(z) \norm{\vAdp(z,t) \Psi_N}^2 +  \Lambda^2/(4 \pi^2 N) 
\end{align}
and obtain
\begin{align}
\abs{\eqref{eq: beta-b-7}}   
\leq&  2 \beta^b + \norm{\kappa}_2^2  \Lambda^2/(\pi^2 N) + 8 \norm{\kappa}_2^2 \int d^3z \abs{\varphi}^2(z) \norm{\vAdp(z,t) \Psi_N}^2.
\end{align}
Then, we use \eqref{eq: Pauli relation A and E} and estimate
\begin{align}
\int d^3z \, \abs{\varphi}^2(z)  &\norm{\vAdp(z,t) \Psi_N}^2
=  \int d^3z \, \abs{\varphi}^2(z) \SCP{\Psi_N}{\vAdm(z,t) \vAdp(z,t) \Psi_N}  
\nonumber \\
&= \int d^3z \, \abs{\varphi}^2(z) \SCP{\Psi_N}{\int d^3y \, \eta(z-y) \vEdm(y,t) \int d^3l \, \eta(z-l) \vEdp(l,t) \Psi_N}  
\nonumber \\
&\leq \int d^3y \int d^3z \int d^3l \,
\abs{\varphi}^2(z) \abs{\eta(z-y)} \abs{\eta(z-l)}
\abs{\SCP{\vEdp(y,t) \Psi_N}{\vEdp(l,t) \Psi_N}} 
\nonumber \\
&\leq 1/2 \int d^3z \, \abs{\varphi}^2(z) \int d^3l \, \norm{\vEdp(l,t) \Psi_N}^2 \int d^3y \, \abs{\eta(z-y)}^2  
\nonumber \\
&+ 1/2 \int d^3z \, \abs{\varphi}^2(z) \int d^3y \, 
\norm{\vEdp(y,t) \Psi_N}^2 \int d^3l \, \abs{\eta(z-l)}^2 
\nonumber \\
&\leq \norm{\eta}_2^2 \int d^3y \, \norm{\vEdp(y,t) \Psi_N}^2 \leq \norm{\eta}_2^2 \beta^b
\leq C \Lambda \beta^b.
\end{align}
This yields
\begin{align}
\abs{\eqref{eq: beta-b-7}}   
\leq&  2 \beta^b + \norm{\kappa}_2^2 \Lambda^2/(\pi^2 N) + C \Lambda \norm{\kappa}_2^2  \beta^b
\leq C \Lambda^4 \left( \beta +   \Lambda/N \right).
\end{align}
The next terms of $d_t \beta(t)$ are bounded by
\begin{align}
\abs{\eqref{eq: beta-b-8}}
&\leq 4 \abs{\SCP{\kappa(y-x_1)q_1\Psi_N}{ i \nabla_1 p_1 \vEdp(y,t) \Psi_N}_{;y}}  
\nonumber \\
&\leq 2   \norm{\kappa(y-x_1) q_1 \Psi_N}_{;y}^2 
+ 2  \norm{i \nabla_1 p_1 \vEdp(y,t) \Psi_N}_{;y}^2 
\nonumber \\
&= 2 \SCP{q_1\Psi_N}{ \Big( \int d^3y \, \abs{\kappa(y-x_1)}^2 \Big) q_1\Psi_N} 
+2 \SCP{ \vEdp(y,t) \Psi_N}{p_1 \left(- \Delta_1\right) p_1  \vEdp(y,t) \Psi_N}_{;y} 
\nonumber \\
&= 2 \norm{\kappa}_2^2 \SCP{\Psi_N}{q_1 \Psi_N}
+ 2 \norm{\nabla \varphi}_2^2  \SCP{ \vEdp(y,t) \Psi_N}{ p_1  \vEdp(y,t) \Psi_N}_{;y} 
\nonumber \\
&\leq  2 \norm{\kappa}_2^2 \beta^a + 2 \norm{\nabla \varphi}_2^2 \SCP{\Psi_N}{ \vEdm(y,t) \vEdp(y,t) \Psi_N}_{;y}  
\leq  \Lambda^3  C(\norm{\varphi}_{H^2})  \beta
\end{align}
and
\begin{align}
\abs{\eqref{eq: beta-b-9}}
&\leq 4 \abs{\SCP{\kappa(y-z_1) N^{-1/2} \vA(x_1) q_1 \Psi_N}{p_1 \vEdp(y,t) \Psi_N}_{;y}} 
\nonumber \\
&\leq 2 \norm{\kappa(y-x_1)  N^{-1/2} \vA(x_1) q_1 \Psi_N}_{;y}^2  
+ 2   \norm{\vEdp(y,t) \Psi_N}_{;y}^2   
\nonumber \\
&= 2 \SCP{ N^{-1/2} \vA(x_1) q_1 \Psi_N}{ \Big( \int d^3y \, \abs{\kappa(y-x_1)}^2 \Big) N^{-1/2} \vA(x_1) q_1 \Psi_N} 
\nonumber \\
&+2  \SCP{ \Psi_N}{ \vEdm(y,t) \vEdp(y,t) \Psi_N}_{;y} 
\nonumber \\
&\leq 2 \beta^b + 2 \norm{\kappa}_2^2  \norm{  N^{-1/2} \vA(x_1) q_1 \Psi_N }^2
\nonumber \\
&\leq   \Lambda^4  C(\norm{\varphi}_{H^2},\norm{\varphi}_{\infty}, \mathcal{E}_M) \left( \beta + \Lambda/N  \right).
\end{align}
Here, we made use of Lemma~\ref{lemma: Pauli Feldoperator gegen alpha Abschaetzung}.
\begin{align}
\abs{\eqref{eq: beta-b-10}}
&\leq 4 \abs{\SCP{\Psi_N}{ \left( -i \nabla_1 - N^{-1/2} \vA(x_1) \right) \kappa(y-x_1) q_1 \vEdp(y,t) \Psi_N}_{;y}} 
\nonumber \\
&= 4 \abs{\SCP{N^{-1} \sum_{j=1}^N  q_j  \kappa(y - x_j) \left( -i \nabla_j -  N^{-1/2} \vA(x_j) \right) \Psi_N}{ \vEdp(y,t) \Psi_N}_{;y}} 
\nonumber \\
&\leq 2 \Big| \Big| N^{-1} \sum_{j=1}^N q_j  \kappa(y-x_j) \left( -i \nabla_j -  N^{-1/2} \vA(x_j) \right) \Big| \Big|_{;y}^2  
+ 2  \norm{ \vEdp(y,t)  \Psi_N }_{;y}^2.
\end{align}
According to Lemma~\ref{lemma: Pauli field operator times p or q against functional} and Lemma~\ref{lemma: crucial bound} this is bounded by
\begin{align}
\abs{\eqref{eq: beta-b-10}}
\leq& \Lambda^3 C(\norm{\varphi}_{H^2}, \norm{\varphi}_{\infty}, \mathcal{E}_M) \left(  \beta + \Lambda/N  \right).
\end{align}
}

\subsection{Bound on $d_t \beta$:}
The Maxwell-Schr\"odinger equations are a conserved system and its energy does not change during the time evolution
\begin{align}
\mathcal{E}_M[\varphi_t, u_t] = \mathcal{E}_M[\varphi_0, u_0].
\end{align}
Moreover, $\beta^c$ is a constant of motion because the self-adjointness of the Pauli-Fierz Hamiltonian gives rise to a strongly continuous unitary group $\{e^{-i t H_N} \}_{t \in \mathbb{R}}$ such that $\Psi_{N,t} = e^{-i t H_N} \Psi_{N,0}$ and 
\begin{align}
\label{eq: Pauli time dependence of beta-c}
\beta^c( \Psi_{N,t},\varphi_t,  u_t ) &= \norm{\left( N^{-1} H_N - \mathcal{E}_M[\varphi_t, u_t] \right) \Psi_{N,t}}^2  
\nonumber\\
&= \norm{\left(  N^{-1} H_N  - \mathcal{E}_M[\varphi_0, u_0] \right) e^{-i t H_N} \Psi_{N,0}}^2  
\nonumber \\
&=  \norm{e^{-i t H_N} \left( N^{-1} H_N - \mathcal{E}_M[\varphi_0, u_0] \right) \Psi_{N,0}}^2  
= \beta^c (\Psi_{N,0},\varphi_0, u_0 ).
\end{align}
This allows us to bound the time derivative of $\beta(t)$ by
\begin{lemma}
\label{lemma: Pauli dt beta}
Let $v$ satisfy (A1), $(\varphi_t,\boldsymbol{A}(t), \boldsymbol{E}(t)) \in \mathcal{G}$ and  $\Psi_{N,t}$ be the unique solution of \eqref{eq: Pauli Schroedinger equation microscopic} with initial data $\Psi_{N,0} \in \left(L^2_s \left( \mathbb{R}^{3N} \right) \otimes \mathcal{F}_p \right) \cap \mathcal{D}(H_N)$. 
Then, there exists a monotone increasing function $C(t)$ of $\norm{\vAc}_{\infty}$, $\mathcal{E}_M[\varphi_t, u_t]$, $\norm{\varphi_t}_{H^2(\mathbb{R}^3)}$ and $\norm{\varphi_t}_{L^{\infty}(\mathbb{R}^3)}$  such that
\begin{align}
\abs{d_t \beta (\Psi_{N,t},\varphi_t, u_t)} &\leq  \Lambda^4  C(t)  \left( \beta(\Psi_{N,t},\varphi_t, u_t) + \Lambda/N \right) , 
\nonumber \\
\beta(\Psi_{N,t},\varphi_t, u_t) &\leq  e^{\Lambda^4 \int_0^t ds C(s)} \left( \beta(\Psi_{N,0},\varphi_0, u_0) + \Lambda/N  \right)
\end{align}
holds for any $t \geq 0$.
\end{lemma}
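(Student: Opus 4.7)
The plan is to assemble the lemma from the three separate bounds already established for the components of $\beta = \beta^a + \beta^b + \beta^c$, and then close the estimate with Gronwall's inequality. Since the decomposition is additive, I have
\begin{align}
d_t \beta(\Psi_{N,t},\varphi_t, u_t) = d_t \beta^a(\Psi_{N,t},\varphi_t) + d_t \beta^b(\Psi_{N,t},u_t) + d_t \beta^c(\Psi_{N,t},\varphi_t,u_t),
\end{align}
so it suffices to treat each piece separately. First I would invoke Lemma~\ref{lemma: dt beta-a} to get $\abs{d_t \beta^a} \leq \Lambda^2 C(t)(\beta + \Lambda/N)$, and Lemma~\ref{lemma: dt beta-b} to get $\abs{d_t \beta^b} \leq \Lambda^4 C(t)(\beta + \Lambda/N)$, absorbing constants into a common monotone increasing function $C(t)$ of the norms $\norm{\vAc}_{\infty}$, $\mathcal{E}_M[\varphi_t, u_t]$, $\norm{\varphi_t}_{H^2(\mathbb{R}^3)}$ and $\norm{\varphi_t}_{L^{\infty}(\mathbb{R}^3)}$.

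For the third term, the key observation is that $\beta^c$ is a constant of motion. The many-body energy $N^{-1} H_N$ is conserved under the unitary group $e^{-itH_N}$, and the Maxwell-Schr\"odinger energy $\mathcal{E}_M[\varphi_t, u_t]$ is conserved along solutions of the effective system. Hence by the computation at \eqref{eq: Pauli time dependence of beta-c} we have $d_t \beta^c = 0$. Adding the three contributions yields the differential bound
\begin{align}
\abs{d_t \beta(\Psi_{N,t},\varphi_t, u_t)} \leq \Lambda^4 C(t) \left( \beta(\Psi_{N,t},\varphi_t, u_t) + \Lambda/N \right),
\end{align}
with $C(t)$ as above. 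The $\Lambda^4$ rate is dictated by the worst bound, coming from $\beta^b$.

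The integral form follows from a standard application of Gronwall's inequality: setting $f(t) \coloneqq \beta(\Psi_{N,t},\varphi_t, u_t) + \Lambda/N$, the differential inequality gives $d_t f(t) \leq \Lambda^4 C(t) f(t)$, whence $f(t) \leq f(0) \exp\big( \Lambda^4 \int_0^t C(s)\,ds \big)$, which is exactly the stated bound. There is no real obstacle here, since all the hard work has been done in Lemmas~\ref{lemma: dt beta-a} and \ref{lemma: dt beta-b} and in the conservation argument for $\beta^c$; the only mild care required is to verify that the various prefactors (powers of $\Lambda$ and of the norms entering $C(t)$) can indeed be dominated by a single $\Lambda^4 C(t)$, which is immediate because the $\beta^a$ bound is strictly weaker in $\Lambda$ than the $\beta^b$ bound and the constants are monotone in the norms of the Maxwell-Schr\"odinger solution.
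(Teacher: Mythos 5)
Your proposal is correct and follows essentially the same route as the paper: the first inequality is obtained by combining Lemma~\ref{lemma: dt beta-a}, Lemma~\ref{lemma: dt beta-b} and the conservation of $\beta^c$ in \eqref{eq: Pauli time dependence of beta-c}, and the second follows from Gronwall's inequality applied to $\beta + \Lambda/N$.
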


\begin{proof}
The first inequality is a direct consequence of  Lemma~\ref{lemma: dt beta-a}, Lemma~\ref{lemma: dt beta-b} and \eqref{eq: Pauli time dependence of beta-c}.
Then, we apply Gronwall's inequality and obtain
\begin{align}
\beta(\Psi_{N,t},\varphi_t, u_t) &\leq e^{\Lambda^4 \int_0^t ds C(s)} \left( \beta(\Psi_{N,0},\varphi_0, u_0) + \Lambda/N \right).
\end{align}
\end{proof}

\section{Initial conditions}
\label{section: initial conditions}

In this section, we show that $\beta(\Psi_{N,0}, \varphi_0, u_0)$ is small for the initial states of Theorem~\ref{theorem: Pauli main theorem}.
\begin{lemma}
\label{lemma: Pauli smallness of initial beta}
Let $\Psi_{N,0} \in \mathcal{D}(H_N) \cap \left( L_s^2(\mathbb{R}^{3N}) \otimes \mathcal{F}_p  \right)$, $\varphi_0 \in H^2(\mathbb{R}^3)$ with $\norm{\varphi_0} = 1$ and $\alpha_0 \in \mathfrak{h}$ such that $A(x,0) \in H^2(\mathbb{R}^3, \mathbb{C}^3)$ and  $E(x,0) \in H^1(\mathbb{R}^3, \mathbb{C}^3)$. Then
\begin{align}
\label{eq: Pauli initial beta-a}
\beta^a(\Psi_{N,0}, \varphi_0) &\leq \text{Tr}_{L^2(\mathbb{R}^3)} \abs{\gamma_{N,0}^{(1,0)} - \ket{\varphi_0} \bra{\varphi_0} } = a_N ,
 \\
 \label{eq: Pauli initial beta-b}
\beta^b(\Psi_{N,0}, u_0) &= N^{-1} \scp{W^{-1}(\sqrt{N} \alpha_0)\Psi_{N,0}}{H_f W^{-1}(\sqrt{N} \alpha_0)\Psi_{N,0}} = b_N ,
\\
\label{eq: Pauli initial beta-c}
\beta^c(\Psi_{N,0},\varphi_0, u_0) &= c_N.
\end{align}
In particular for $\Psi_{N,0} = \varphi_0^{\otimes N} \otimes W(\sqrt{N} \alpha_0) \Omega$ (and $\Lambda/N \leq  C$)  we have
\begin{align}
\beta (\Psi_{N0},\varphi_0, u_0) \leq C \Lambda^4 N^{-1}   .
\end{align}
\end{lemma}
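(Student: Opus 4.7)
The plan is to establish the three identities/inequality \eqref{eq: Pauli initial beta-a}--\eqref{eq: Pauli initial beta-c} separately, each essentially by definition-unfolding, and then to show that for the pure product choice all three of $a_N$, $b_N$, $c_N$ are controllable, with $c_N$ being the only non-trivial piece.

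For \eqref{eq: Pauli initial beta-a}: this is just the lower bound already contained in Lemma~\ref{lemma: Pauli relation between beta and reduced density matrices} applied at $t=0$, namely $\beta^a \le \mathrm{Tr}_{L^2}|\gamma^{(1,0)} - |\varphi\rangle\langle\varphi||$.

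For \eqref{eq: Pauli initial beta-b}: the key is the Weyl displacement identity
$W^{-1}(\sqrt N \alpha_0) a(k,\lambda) W(\sqrt N \alpha_0) = a(k,\lambda) + \sqrt N \alpha_0(k,\lambda)$,
which follows from $[a^*(f)-a(f),a(k,\lambda)]=-f(k,\lambda)$. Setting $\tilde\Psi_{N,0}:=W^{-1}(\sqrt N \alpha_0)\Psi_{N,0}$ and unitarity of $W$ yield
$\bigl(N^{-1/2}a(k,\lambda)-\alpha_0(k,\lambda)\bigr)\Psi_{N,0} = N^{-1/2}W(\sqrt N \alpha_0)\,a(k,\lambda)\,\tilde\Psi_{N,0}$, whence
$\beta^b = N^{-1}\sum_\lambda\!\int d^3k\,|k|\,\|a(k,\lambda)\tilde\Psi_{N,0}\|^2 = N^{-1}\langle\tilde\Psi_{N,0},H_f\tilde\Psi_{N,0}\rangle = b_N$. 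The identity \eqref{eq: Pauli initial beta-c} is immediate from the definition of $\beta^c$.

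For the pure product state $\Psi_{N,0}=\varphi_0^{\otimes N}\otimes W(\sqrt N\alpha_0)\Omega$, the reduced one-particle density of the charges is exactly $|\varphi_0\rangle\langle\varphi_0|$, so $a_N=0$. Also $\tilde\Psi_{N,0}=\varphi_0^{\otimes N}\otimes\Omega$, and $H_f\Omega=0$ gives $b_N=0$. The main work is therefore to bound $c_N = \|(N^{-1}H_N-\mathcal E_M)\Psi_{N,0}\|^2$ by $C\Lambda^4/N$. Using unitarity of $W$ as above, $c_N = \|(N^{-1}\widetilde H_N - \mathcal E_M)\tilde\Psi_{N,0}\|^2$ with $\widetilde H_N := W^{-1}(\sqrt N\alpha_0) H_N W(\sqrt N\alpha_0)$. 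The shift rule produces
\begin{align*}
\widetilde H_N = \sum_{j=1}^N\bigl(-i\nabla_j - \vAc(x_j,0) - N^{-1/2}\vA(x_j)\bigr)^2 + \tfrac{1}{N}\!\!\sum_{j<k}\! v(x_j-x_k) + H_f + \sqrt N\,\Phi(|k|\alpha_0) + N\|u_0\|_{\mathfrak h}^2,
\end{align*}
where $\Phi(g)=a(g)+a^*(g)$.

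The hard part is then the explicit computation of $(\widetilde H_N - N\mathcal E_M)\tilde\Psi_{N,0}$: I would expand the squared covariant derivative into the unperturbed piece $(-i\nabla_j-\vAc(x_j,0))^2$, a cross term $-N^{-1/2}\{(-i\nabla_j-\vAc(x_j,0)),\vA(x_j)\}$, and the diamagnetic $N^{-1}\vA^2(x_j)$ term. Using $a(k,\lambda)\Omega = 0$ and the Maxwell–Schr\"odinger relations, the leading coherent pieces cancel: (i) the $\sqrt N\Phi(|k|\alpha_0)$ together with the $\vA^-$ halves of the cross terms reassemble into a surface term that vanishes upon use of the Coulomb gauge and the classical equation for $\partial_t\alpha_0$ being absorbed into $\vAc$; and (ii) the direct-interaction mean field $(v*|\varphi_0|^2)(x_j)$ matches $E_2$ up to the $O(1/N)$ difference between the sum $\sum_{j<k}$ and the mean-field square. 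What remains after these cancellations are explicitly of order $N^{-1}$ in norm squared, specifically: (a) the photon-creating part $a^*$ of $\vA(x_j)$ gives a one-photon vector of norm$^2$ bounded by $\Lambda\|\kappa\|_2^2/N \sim \Lambda^4/N$ via Lemma~\ref{lemma: Pauli Feldoperator-Abschaetzung}; (b) the $N^{-1}\vA^2(x_j)$ term produces a two-photon piece and a c-number from $[\vA^+,\vA^-]=\Lambda^2/(4\pi^2)$, again of size $\Lambda^4/N$; (c) the $(N-1)/N$ versus $1$ discrepancy in the pair interaction gives $O(1/N)$ times $\|v*|\varphi_0|^2\|_\infty^2$. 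Summing these pieces and using $\|\kappa\|_2^2\lesssim\Lambda^3$ from \eqref{eq: Pauli norm cut off function 2} yields $c_N \leq C\Lambda^4/N$. Combined with $a_N=b_N=0$ and Definition~\ref{definition: beta-functional} this gives $\beta(\Psi_{N,0},\varphi_0,u_0)\leq C\Lambda^4/N$, as claimed.
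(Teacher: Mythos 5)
Your handling of \eqref{eq: Pauli initial beta-a}--\eqref{eq: Pauli initial beta-c} and of $a_N=b_N=0$ for the product state coincides with the paper's. For the main estimate $c_N\leq C\Lambda^4N^{-1}$ you take a genuinely different route: you conjugate $H_N$ by the Weyl operator and bound the norm of $(N^{-1}\widetilde H_N-\mathcal E_M)\,\varphi_0^{\otimes N}\otimes\Omega$ directly, so that all photon contributions are evaluated on the vacuum and the particle contributions reduce to a one-shot law-of-large-numbers variance estimate. The paper never conjugates; it writes
\begin{align}
c_N\leq \abs{\SCP{N^{-1}H_N\Psi_{N,0}}{N^{-1}H_N\Psi_{N,0}}-\mathcal E_M^2}
+2\mathcal E_M\abs{\mathcal E_M-\SCP{\Psi_{N,0}}{N^{-1}H_N\Psi_{N,0}}}
\nonumber
\end{align}
and computes the first and second moments of $N^{-1}H_N$ in the coherent state piece by piece via Lemma~\ref{lemma: Pauli mean values for Weyl states} (six diagonal/off-diagonal comparisons). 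Your approach buys a more transparent bookkeeping — the paper's off-diagonal cancellations become the identity $\scp{\varphi_0}{(h-\smallscp{\varphi_0}{h\varphi_0})\varphi_0}=0$ — at the price of having to control $\widetilde H_N$ applied to the vacuum, which is where your write-up is shakiest.

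Concretely, the mechanism you invoke in step (i) is spurious: there is no cancellation between $\sqrt N\,\Phi(|k|\alpha_0)$ and the $\vAm$ halves of the cross terms, and appealing to ``the classical equation for $\partial_t\alpha_0$'' in a purely static computation at $t=0$ justifies nothing. Fortunately no cancellation is needed: after dividing by $N$ and acting on $\varphi_0^{\otimes N}\otimes\Omega$, the linear term $N^{-1/2}a^*(|k|\alpha_0)$ is a one-photon vector of norm squared $N^{-1}\,||\abs{\cdot}^{1/2}u_0||_{\mathfrak h}^2$, finite precisely because of \eqref{eq: Pauli regularity 2}, while the cross term $N^{-3/2}\sum_j\{(-i\nabla_j-\vAc(x_j,0)),\vAm(x_j)\}$ has norm squared of order $\Lambda^4 N^{-1}$ by the crude triangle inequality over $j$ together with $\norm{\vAm(y)\,\varphi_0^{\otimes N}\otimes\Omega}^2=\Lambda^2/(4\pi^2)$ from Lemma~\ref{lemma: Pauli Feldoperator-Abschaetzung}; each piece is separately admissible and you should estimate them as such. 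You should also make explicit that the two genuine variance terms — $N^{-1}\sum_j\big((-i\nabla_j-\vAc(x_j,0))^2-\smallscp{\varphi_0}{(-i\nabla-\vAc(0))^2\varphi_0}\big)$ and the pair potential versus its mean field — contribute $N^{-1}$ times one-body variances, which requires $\norm{(-i\nabla-\vAc(0))^2\varphi_0}<\infty$ and $\norm{v^2*\abs{\varphi_0}^2}_\infty<\infty$; both follow from $\varphi_0\in H^2$, (A1) and $\boldsymbol A(0)\in H^2$ exactly as in \eqref{eq: Pauli direct interaction potential Young 1}. With these repairs your argument closes and yields the same bound as the paper's.
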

\noindent
Before we prove Lemma~\ref{lemma: Pauli smallness of initial beta} we recall some  well known properties of Weyl operators \eqref{eq: Pauli Weyl operator}.

\begin{lemma}
\label{lemma: Pauli Weyl operator properties}
Let
$f, g \in \mathfrak{h} = L^2(\mathbb{R}^3) \otimes \mathbb{C}^2$.
\begin{itemize}
\item[(i)] $W(f)$ is a unitary operator and 
\begin{align}
W^*(f) = W^{-1}(f) = W(-f).
\end{align}
\item[(ii)] We have
\begin{align}
W^*(f)a(k,\lambda) W(f) &= a(k,\lambda) + f(k,\lambda),
\nonumber \\
W^*(f)a^*(k,\lambda) W(f) &=  a^*(k,\lambda) + f^*(k,\lambda).
\end{align}
\item[(iii)] From (ii) we see that coherent states are eigenvectors of annihilation operators
\begin{align}
a(k,\lambda) W(f) \Omega = f(k,\lambda) W(f) \Omega .
\end{align}
\end{itemize}
\end{lemma}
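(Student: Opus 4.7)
The three items are standard facts about the Weyl representation of the CCR algebra, and the plan is to obtain them directly from the definition of $W(f)$ as the exponential of the (essentially) skew-adjoint operator $X(f):= a^*(f)-a(f)$, where $a^*(f)=\sum_{\lambda}\int d^3k\, f(k,\lambda)a^*(k,\lambda)$ and $a(f)=\sum_{\lambda}\int d^3k\, f^*(k,\lambda)a(k,\lambda)$. Throughout I would work on the finite-particle subspace of $\mathcal{F}_p$, which is a common invariant core for all polynomials in $a,a^*$ and on which $X(f)$ is essentially skew-adjoint; all algebraic identities can then be extended to $\mathcal{F}_p$ by density and unitarity, so I will not dwell on these domain issues beyond noting them once.

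For (i), I would observe that $X(f)^*=a(f)-a^*(f)=-X(f)$, so $X(f)$ is skew-adjoint (after closure) and Stone's theorem gives a one-parameter unitary group $\{e^{s X(f)}\}_{s\in\mathbb{R}}$ whose value at $s=1$ is $W(f)$. Unitarity of $W(f)$ is then immediate, and $W(f)^{-1}=e^{-X(f)}=e^{X(-f)}=W(-f)$ follows from $X(-f)=-X(f)$. The identity $W(f)^*=W(-f)$ is the standard adjoint of an exponential of a skew-adjoint operator.

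For (ii), I would introduce the one-parameter family $F_{k,\lambda}(s):=W(sf)^{*}\,a(k,\lambda)\,W(sf)$ and differentiate in $s$ on the finite-particle core. Using the canonical commutation relations \eqref{eq: canonical commutation relation} one computes
\begin{align}
[a(k,\lambda),X(f)] &= [a(k,\lambda),a^*(f)] - [a(k,\lambda),a(f)] \nonumber \\
&= \sum_{\mu=1,2}\int d^3k'\, f(k',\mu)\,\delta_{\lambda,\mu}\delta(k-k') = f(k,\lambda),
\end{align}
so $F_{k,\lambda}'(s)=f(k,\lambda)$ as a scalar multiple of the identity. Integrating from $0$ to $1$ with $F_{k,\lambda}(0)=a(k,\lambda)$ yields the first identity in (ii); the second follows by taking the adjoint, or by repeating the argument with $a^*$ and using $[a^*(k,\lambda),X(f)]=f^*(k,\lambda)$. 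Alternatively, one can invoke the Baker--Campbell--Hausdorff formula since $[a(k,\lambda),X(f)]$ is a c-number, so the nested commutator series truncates after one term; I would use the differential approach because it avoids convergence issues of the BCH series for unbounded operators.

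For (iii), I would simply rearrange (ii): from $W(f)^*a(k,\lambda)W(f)=a(k,\lambda)+f(k,\lambda)$ and the unitarity of $W(f)$ established in (i),
\begin{equation}
a(k,\lambda)W(f)\Omega = W(f)\bigl(a(k,\lambda)+f(k,\lambda)\bigr)\Omega = f(k,\lambda)W(f)\Omega,
\end{equation}
where I used $a(k,\lambda)\Omega=0$. The only real obstacle in this whole argument is the rigorous justification of differentiating $F_{k,\lambda}(s)$ and of applying the algebraic identity $[a(k,\lambda),X(f)]=f(k,\lambda)$ pointwise in $k$; both are handled by working on finite-particle vectors and smearing against test functions if needed, and this is by now a textbook procedure (see e.g.\ the references to the Pauli--Fierz model already cited in the paper).
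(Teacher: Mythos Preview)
Your proof is correct and follows the standard textbook argument. The paper itself does not prove this lemma at all: it merely introduces it with the words ``we recall some well known properties of Weyl operators'' and states the three items without further justification. So there is no approach to compare against; you have supplied a complete (and perfectly acceptable) proof where the paper gives none.
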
 
\noindent
Subsequently, we compute the expectation values of the vector potential, the field energy and higher moments. 
\begin{lemma}
\label{lemma: Pauli mean values for Weyl states}
Let  $\alpha_0 \in \mathfrak{h} $ such that $u_0 \in \mathfrak{h}$ and $\vEpc(x,t)$ be defined by  \eqref{eq: Pauli positive frequency part of the classical electric field}. Moreover, let
\begin{align}
\vgp_{il}(x) &\coloneqq \int d^3k \; \abs{\tilde{\kappa}(k)}^2 \abs{k}^{-1}
e^{ikx} \left( \delta_{il} -   k_i k_l \abs{k}^{-2} \right)
\quad \text{with}  \quad
||\vgp_{il}||_2^2 \leq   \Lambda.
\end{align}
Then
\begin{align}
\scp{W(\sqrt{N} \alpha_0) \Omega}{ N^{-1/2} \vA(x) W(\sqrt{N} \alpha_0) \Omega}_{\mathcal{F}_p} 
&= \vAc(x,0) ,    
\nonumber \\
\scp{W(\sqrt{N} \alpha_0) \Omega}{N^{-1} \vA^2(x) W(\sqrt{N} \alpha_0) \Omega}_{\mathcal{F}_p}
&= \vAc^2(x,0) + \Lambda^2/(4 \pi^2 N), 
\nonumber \\
\scp{W(\sqrt{N} \alpha_0) \Omega}{N^{-1} \vA^i(x) \vA^j(y) W(\sqrt{N} \alpha_0) \Omega}_{\mathcal{F}_p}  
&= \vAc^i(x,0) \vAc^j(y,0) + (2N)^{-1} \vgp_{ij}(x-y)    ,  
\nonumber \\
\scp{W(\sqrt{N} \alpha_0) \Omega}{ N^{-1} H_f W(\sqrt{N} \alpha_0) \Omega}_{\mathcal{F}_p}
&= \norm{u_0}_{\mathfrak{h}}^2,   
\nonumber \\
\scp{W(\sqrt{N} \alpha_0) \Omega}{  N^{-2} H_f^2 W(\sqrt{N} \alpha_0) \Omega}_{\mathcal{F}_p}
&= \norm{u_0}_{\mathfrak{h}}^4 + N^{-1}  ||\abs{\cdot}^{1/2} u_0 ||_{\mathfrak{h}}^2,    
\nonumber \\
\scp{W(\sqrt{N} \alpha_0) \Omega}{  N^{-3/2} \vA(x) H_f  W(\sqrt{N} \alpha_0) \Omega}_{\mathcal{F}_p}
&= \vAc(x,0)  \norm{u_0}_{\mathfrak{h}}^2 -  i N^{-1} \vEpc(x,0)  ,
\nonumber \\
\scp{W(\sqrt{N} \alpha_0) \Omega}{ N^{-2} \vA^2(x) H_f  W(\sqrt{N} \alpha_0) \Omega}_{\mathcal{F}_p}
&=  \left( \vAc^2(x,0)  +   \Lambda^2/(4 \pi^2 N) \right) \norm{u_0}_{\mathfrak{h}}^2
\nonumber \\
& - 2 i N^{-1}  \vAc(x,0) \vEpc(x,0) .
\end{align}
\end{lemma}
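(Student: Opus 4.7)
The plan is to reduce every matrix element to a vacuum expectation by unitarily conjugating with the Weyl operator and then normal-ordering. Part (ii) of Lemma~\ref{lemma: Pauli Weyl operator properties} gives $W^*(\sqrt{N}\alpha_0)\,a(k,\lambda)\,W(\sqrt{N}\alpha_0) = a(k,\lambda)+\sqrt{N}\alpha_0(k,\lambda)$, so applying it term by term yields
\begin{align*}
W^*(\sqrt{N}\alpha_0)\,N^{-1/2}\vA(x)\,W(\sqrt{N}\alpha_0) &= N^{-1/2}\vA(x)+\vAc(x,0), \\
W^*(\sqrt{N}\alpha_0)\,N^{-1} H_f\,W(\sqrt{N}\alpha_0) &= N^{-1} H_f + N^{-1/2} L + \norm{u_0}_{\mathfrak{h}}^2,
\end{align*}
with $L\coloneqq \sum_{\lambda}\int d^3k\,|k|\bigl(\alpha_0^*(k,\lambda)a(k,\lambda)+\alpha_0(k,\lambda)a^*(k,\lambda)\bigr)$ and the identification $\int d^3k\,|k||\alpha_0|^2 = \norm{u_0}_{\mathfrak{h}}^2$. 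After conjugation the problem reduces to $\scp{\Omega}{(\cdots)\Omega}_{\mathcal{F}_p}$, where only contributions that become $c$-numbers after normal ordering survive.

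The first and fourth identities are then immediate, as every residual operator in the shifted expression annihilates $\Omega$ or has vanishing vacuum expectation. For the second and third we expand the two shifted copies of $\vA$ and use the $c$-number commutator
\[
[\vAp^{\,i}(x),\vAm^{\,j}(y)] = \sum_{\lambda=1,2}\int d^3k\,\frac{|\tilde{\kappa}(k)|^2}{2|k|}\,\vep^{\,i}_{\lambda}(k)\vep^{\,j}_{\lambda}(k)\,e^{ik(x-y)};
\]
together with the transversality identity $\sum_{\lambda}\vep^{\,i}_{\lambda}(k)\vep^{\,j}_{\lambda}(k)=\delta_{ij}-k_ik_j/|k|^2$ this equals $\tfrac{1}{2}\vgp_{ij}(x-y)$, which explains the correction $(2N)^{-1}\vgp_{ij}(x-y)$ for the third identity and its diagonal trace $\Lambda^2/(4\pi^2 N)$ for the second. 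The bound $\norm{\vgp_{il}}_{2}^{2}\leq \Lambda$ follows from Plancherel together with $\sum_\lambda|\vep_\lambda|^2\leq 2$ and the cutoff.

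The fifth identity is the first genuine computation. Expand $(N^{-1}W^*H_f W)^2$ into nine monomials; every term containing a bare $H_f$ on either side annihilates $\Omega$, and terms with an odd power of $L$ vanish by photon-number parity. The only survivors are $\norm{u_0}_{\mathfrak{h}}^4$ and $N^{-1}\scp{\Omega}{L^2\Omega}_{\mathcal{F}_p}$; the canonical commutation relation normal-orders $L^2$ to yield $\sum_\lambda\int d^3k\,|k|^2|\alpha_0(k,\lambda)|^2 = ||\abs{\cdot}^{1/2}u_0||_{\mathfrak{h}}^{2}$, giving the advertised $N^{-1}$-correction.

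The sixth and seventh identities combine both mechanisms. Writing $N^{-3/2}W^*\vA(x)H_f W$ and $N^{-2}W^*\vA^2(x)H_f W$ as products of already-conjugated factors and expanding, the vacuum selects the leading $c$-number product together with a single mixed contraction of the form $N^{-1}\scp{\Omega}{\vAp(x)\,L\,\Omega}_{\mathcal{F}_p}$; inserting $\vEpc(x,0) = i\sum_{\lambda}\int d^3k\,\tilde{\kappa}(k)\sqrt{|k|/2}\,\vep_{\lambda}(k)e^{ikx}\alpha_0(k,\lambda)$ identifies this contraction as $-iN^{-1}\vEpc(x,0)$. For the quadratic case an additional $\vA^2$-commutator puts the $\Lambda^2/(4\pi^2 N)$ factor in front of $\norm{u_0}_{\mathfrak{h}}^{2}$, and the mixed term is doubled to $-2iN^{-1}\vAc(x,0)\vEpc(x,0)$ because both $\vA$-factors can contract against $L$. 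The main difficulty throughout is purely bookkeeping, namely tracking the $\sqrt{N}$-powers introduced by every Weyl shift and discarding all monomials whose vacuum expectation vanishes by parity or by the presence of an unsaturated $H_f$.
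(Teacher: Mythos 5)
Your proposal is correct and follows exactly the route the paper indicates: the paper's proof consists of the single remark that the lemma is ``a simple application of the canonical commutation relations and part (ii) of Lemma~\ref{lemma: Pauli Weyl operator properties}'', and your Weyl-conjugation plus normal-ordering computation is precisely that argument carried out in detail, with all the constants ($\Lambda^2/(4\pi^2 N)$, $(2N)^{-1}\vgp_{ij}$, $-iN^{-1}\vEpc$) coming out correctly.
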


\begin{proof}
The proof is a simple application of the canonical commutation relations~\eqref{eq: canonical commutation relation} and  part (ii) from Lemma~\ref{lemma: Pauli Weyl operator properties}.
\end{proof}

\begin{proof}[Proof of Lemma~\ref{lemma: Pauli smallness of initial beta}]
Relation~\eqref{eq: Pauli initial beta-a} directly follows from Lemma~\ref{lemma: Pauli relation between beta and reduced density matrices}.
In view of Lemma~\ref{lemma: Pauli Weyl operator properties} we calculate
\begin{align}
\label{eq: Pauli relation betweeen beta-b and b-N}
\beta^b&(\Psi_{N,0}, u_0) 
= \sum_{\lambda=1,2} \int d^3k \, \abs{k} \norm{\left( N^{-1/2} a(k,\lambda) - \alpha_0(k,\lambda) \right) \Psi_{N,0}}^2
\nonumber \\
&= \sum_{\lambda=1,2} \int d^3k \, \abs{k} \norm{W^{-1}(\sqrt{N} \alpha_0)  \left( N^{-1/2} a(k,\lambda) - \alpha_0(k,\lambda) \right) W(\sqrt{N} \alpha_0) W^{-1}(\sqrt{N} \alpha_0)  \Psi_{N,0}}^2
\nonumber \\
&= \sum_{\lambda=1,2} \int d^3k \, \norm{N^{-1/2} a(k,\lambda) W^{-1}(\sqrt{N} \alpha_0) \Psi_{N,0}}^2
\nonumber \\
&= N^{-1} \SCP{W^{-1}(\sqrt{N} \alpha_0) \Psi_{N,0}}{H_f W^{-1}(\sqrt{N} \alpha_0) \Psi_{N,0}} = b_N.
\end{align}
Equation~\eqref{eq: Pauli initial beta-c} is solely the definition of $\beta^c$.
In the following, we are interested in initial data $\Psi_{N,0} = \varphi_0^{\otimes N} \otimes W(\sqrt{N} \alpha_0) \Omega$ of  product type. First, we notice that
\begin{align}
\beta^a(\Psi_{N,0},\varphi_0)
=& \SCP{\Psi_{N,0}}{q_1^{\varphi_0} \Psi_{N,0}}   
= \scp{\varphi_0}{\varphi_0}_{L^2(\mathbb{R}^3)} - \scp{\varphi_0}{\varphi_0}_{L^2(\mathbb{R}^3)}^2 = 0
\end{align}
because the scalar product factorizes for product states and $q_1$ only acts on the Hilbert space of the first charged particle. Then, we follow
\begin{align}
\beta^b[\Psi_{N,0}, u_0] = 0
\end{align}
for $\Psi_{N,0} = \varphi_0^{\otimes N} \otimes W(\sqrt{N} \alpha_0) \Omega$ from \eqref{eq: Pauli relation betweeen beta-b and b-N}.
To show that the product structure suppresses the fluctuations of the energy per particle around its mean value is more elaborate. Nevertheless, the idea of the proof simple and in the spirit of the law of large numbers from probability theory.
We bound $\beta^c$ by
\begin{align}
\beta^c(0) 
&= \SCP{\left( N^{-1} H_N - \mathcal{E}_M \right) \Psi_{N,0}}{\left( N^{-1} H_N - \mathcal{E}_M \right) \Psi_{N,0}}  
\nonumber \\
&\leq \abs{ \SCP{ N^{-1} H_N \Psi_{N,0}}{ N^{-1} H_N \Psi_{N,0}}
- \mathcal{E}_M^2 }
+ 2 \mathcal{E}_M \abs{ \mathcal{E}_M 
- \SCP{\Psi_{N,0}}{ N^{-1} H_N \Psi_{N,0}}  }
\end{align}
and show that
\begin{itemize}
\item[(i)] $\abs{\SCP{\Psi_{N,0}}{ N^{-1} H_N \Psi_{N,0}} - \mathcal{E}_M[\varphi_0, u_0] } \leq  C \Lambda^2  N^{-1} $
\item[(ii)] $\abs{\SCP{ N^{-1} H_N \Psi_{N,0}}{ N^{-1} H_N \Psi_{N,0}} -\mathcal{E}^2_M[\varphi_0, u_0]} \leq C \Lambda^4 N^{-1} $
\end{itemize}
holds for states of product type.
 
\subsubsection*{(i) The mean value of the energy per particle}
For ease of notation we denote $\vAc(\cdot,0)$, $\vEpc(\cdot,0)$, $\norm{u_0}_{\mathfrak{h}}^2$, $||\abs{\cdot}^{1/2} u_0 ||_{\mathfrak{h}}^2$ by $\vAc(\cdot)$, $\vEc(\cdot)$, $\norm{u_0}^2$, $||\abs{\cdot}^{1/2} u_0 ||^2$ in the following. 
The mean value of the energy per particle is given by
\begin{align}
\SCP{\Psi_{N,0}}{ N^{-1} H_N \Psi_{N,0}}
&= \SCP{\Psi_{N,0}}{ N^{-1} \sum_{j=1}^N  \left( - i  \nabla_j -  N^{-1/2} \vA(x_j) \right)^2  \Psi_{N,0}} 
\nonumber \\
&+  \SCP{\Psi_{N,0}}{1/(2 N^2) \sum_{j \neq k} v(x_j -x_k) \Psi_{N,0}} 
\nonumber  \\
&+  \SCP{\Psi_{N,0}}{ N^{-1} H_f \Psi_{N,0}}.
\end{align}
Due to symmetry and the product structure of $\Psi_{N,0}$ this becomes
\begin{align}
\SCP{\Psi_{N,0}}{ N^{-1} H_N \Psi_{N,0}}
&= \scp{\varphi_0}{\left( - \Delta \right) \varphi_0} 
\nonumber \\
&+ 2 i \scp{\varphi_0}{\scp{W(\sqrt{N} \alpha_0) \Omega}{ N^{-1/2} \vA W(\sqrt{N} \alpha_0) \Omega}_{\mathcal{F}_p} \cdot \nabla \varphi_0} 
\nonumber \\
&+ \scp{\varphi_0}{\scp{W(\sqrt{N} \alpha_0) \Omega}{ N^{-1} \vA^2 W(\sqrt{N} \alpha_0) \Omega}_{\mathcal{F}_p}  \varphi_0} 
\nonumber \\
&+ (N-1)/(2 N) \scp{\varphi_0}{\left( v * \abs{\varphi_0}^2 \right) \varphi_0} 
\nonumber  \\
&+  \scp{W(\sqrt{N} \alpha_0) \Omega}{ N^{-1} H_f W(\sqrt{N} \alpha_0) \Omega}_{\mathcal{F}_p}  .
\end{align}
Lemma~\ref{lemma: Pauli mean values for Weyl states}  gives
\begin{align}
\SCP{\Psi_{N,0}}{ N^{-1} H_N \Psi_{N,0}}
&= \norm{\left(- i \nabla - \vAc \right) \varphi_0}^2  
+ 1/2 \scp{\varphi_0}{\left( v * \abs{\varphi_0}^2 \right) \varphi_0} 
+  \norm{u_0}^2
\nonumber  \\
&+ \Lambda^2/(4 \pi^2 N) 
- 1/(2N) \scp{\varphi_0}{\left(v * \abs{\varphi_0}^2 \right) \varphi_0}. 
\end{align}
Now we can pull the sums together to get
\begin{align}
\SCP{\Psi_{N,0}}{ N^{-1} H_N \Psi_{N,0}}
&= \mathcal{E}_M[\varphi_0, u_0]
+ \Lambda^2/(4 \pi^2 N) 
- 1/(2N) \scp{\varphi_0}{\left(v * \abs{\varphi_0}^2 \right) \varphi_0},
\end{align}
where  $\scp{\varphi_0}{\left(v * \abs{\varphi_0}^2 \right) \varphi_0} \leq C(\norm{\varphi_0}_{\infty})$ holds according to \eqref{eq: Pauli direct interaction potential Young 2}.

\subsubsection*{(ii) The second moment of the energy per particle}
Subsequently, we show that the second moment of the energy per particle approximately equals the energy of the effective system squared.
We split the double sum, arising from the second moment of the many-body Hamiltonian into its diagonal and off-diagonal part. 
The diagonal only consists of $N$ constituents and has a subleading contribution for large $N$. On the contrary, there are $N^2$ elements from the off-diagonal which give rise to $\mathcal{E}_M^2$. In order to organize the estimate, we first decompose the second moment of the energy per particle as well as the effective energy squared into pieces:
\begin{align}
\SCP{ N^{-1} & H_N\Psi_{N,0}}{ N^{-1} H_N \Psi_{N,0}} =   
\nonumber \\
\label{eq: microscopic energy squared 1}
=& N^{-2} \sum_{j,k}
\SCP{ \left(- i \nabla_j - N^{-1/2} \vA(x_j)   \right)^2   \Psi_{N,0}}{ \left(- i \nabla_k - N^{-1/2} \vA(x_k)   \right)^2  \Psi_{N,0}}  \\
\label{eq: microscopic energy squared 2}
+&  ( 4 N^{4})^{-1} \sum_{i \neq j, k \neq l}
\SCP{v(x_i -x_j) \Psi_{N,0}}{v(x_k -x_l) \Psi_{N,0}}  \\
\label{eq: microscopic energy squared 3}
+& N^{-2} \SCP{\Psi_{N,0}}{H_f^2 \Psi_{N,0}}  \\
\label{eq: microscopic energy squared 4}
+& N^{-3} \sum_{j,k \neq l} \Re
\SCP{ \left(- i \nabla_j -  N^{-1/2} \vA(x_j)   \right)^2   \Psi_{N,0}}{v(x_k - x_l) \Psi_{N,0}}   \\
\label{eq: microscopic energy squared 5}
+& 2N^{-2} \sum_{j}
\Re \SCP{\left(- i \nabla_j -  N^{-1/2} \vA(x_j)   \right)^2  \Psi_{N,0}}{ H_f \Psi_{N,0}}   \\
\label{eq: microscopic energy squared 6}
+& N^{-3} \sum_{ j \neq k} \Re 
\SCP{v(x_j - x_k) \Psi_{N,0}}{ H_f \Psi_{N,0}}
\end{align}
and
\begin{align}
\label{eq: macroscopic energy squared 1}
\mathcal{E}_M^2[\varphi_0,u_0] 
=& \scp{\varphi_0}{ \left( - i \nabla - \vAc \right)^2    \varphi_0}^2   \\
\label{eq: macroscopic energy squared 2}
+& 1/4 \scp{\varphi_0}{(v * \abs{\varphi_0}^2) \varphi_0}^2    \\
\label{eq: macroscopic energy squared 3}
+& \norm{u_0}^4  \\
\label{eq: macroscopic energy squared 4}
+& \scp{\varphi_0}{\left( \left( - i \nabla - \vAc \right)^2   \right) \varphi_0} \scp{\varphi_0}{(v * \abs{\varphi_0}^2) \varphi_0}   \\
\label{eq: macroscopic energy squared 5}
+& 2 \scp{\varphi_0}{ \left( - i \nabla - \vAc \right)^2   \varphi_0} \norm{u_0}^2  \\
\label{eq: macroscopic energy squared 6}
+& \scp{\varphi_0}{(v * \abs{\varphi_0}^2) \varphi_0} \norm{u_0}^2.
\end{align}
Then, we estimate the difference between the corresponding expressions and obtain 
\begin{align}
\abs{\SCP{ N^{-1} H_N \Psi_{N,0}}{ N^{-1} H_N \Psi_{N,0}} -\mathcal{E}^2_M[\varphi_0,u_0]} \leq  C \Lambda^4 N^{-1} .
\end{align}

\subsubsection*{ $\abs{ \eqref{eq: microscopic energy squared 1} - \eqref{eq: macroscopic energy squared 1}} \leq C \Lambda^4/N $: }

The off-diagonal part of \eqref{eq: microscopic energy squared 1} is given by
\begin{align}
\label{eq: microscopic energy squared 1 c off diagonal}
\SCP{ \left(- i \nabla_1 -  N^{-1/2} \vA(x_1)   \right)^2 &\Psi_{N,0}}{\left(- i \nabla_2 -  N^{-1/2} \vA(x_2)   \right)^2 \Psi_{N,0}}  \\
\label{eq: microscopic energy squared 1 c 1}
&= \SCP{\left( - \Delta_1 \right) \Psi_{N,0}}{\left( - \Delta_2 \right) \Psi_{N,0}}   \\
\label{eq: microscopic energy squared 1 c 2a}
&+ 2i  \SCP{\left( - \Delta_1 \right) \Psi_{N,0}}{ N^{-1/2} \vA(x_2) \nabla_2 \Psi_{N,0}}   
\\
\label{eq: microscopic energy squared 1 c 2b}
&- 2i  \SCP{  N^{-1/2} \vA(x_1) \nabla_1 \Psi_{N,0}}{\left( - \Delta_2 \right) \Psi_{N,0}}    \\
\label{eq: microscopic energy squared 1 c 3a}
&+  \SCP{\left( - \Delta_1 \right) \Psi_{N,0}}{ N^{-1} \vA^2(x_2) \Psi_{N,0}} \\
\label{eq: microscopic energy squared 1 c 3b}
&+   \SCP{  N^{-1} \vA^2(x_1) \Psi_{N,0}}{\left( - \Delta_2 \right) \Psi_{N,0}}  \\
\label{eq: microscopic energy squared 1 c 4}
&+ 4 \SCP{ N^{-1/2} \vA(x_1) \nabla_1 \Psi_{N,0}}{ N^{-1/2} \vA(x_2) \nabla_2 \Psi_{N,0}}   \\
\label{eq: microscopic energy squared 1 c 5a}
&- 2i  \SCP{ N^{-1/2} \vA(x_1) \nabla_1 \Psi_{N,0}}{ N^{-1} \vA^2(x_2)  \Psi_{N,0}}  \\
 \label{eq: microscopic energy squared 1 c 5b} 
&+ 2i \SCP{ N^{-1} \vA^2(x_1) \Psi_{N,0}}{ N^{-1/2} \vA(x_2) \nabla_2 \Psi_{N,0}}     \\
\label{eq: microscopic energy squared 1 c 6}
&+ \SCP{ N^{-1} \vA^2(x_1) \Psi_{N,0}}{ N^{-1} \vA^2(x_2) \Psi_{N,0}}.
\end{align}

\noindent
By means of Lemma~\ref{lemma: Pauli mean values for Weyl states}  we have
\begin{align}
\eqref{eq: microscopic energy squared 1 c 1}
+  \eqref{eq: microscopic energy squared 1 c 2a} + \eqref{eq: microscopic energy squared 1 c 2b}
=& \scp{\varphi_0}{\left( - \Delta \right) \varphi_0}^2
+ 4 i \scp{\vAc \varphi_0}{ \nabla \varphi_0} 
\scp{\varphi_0}{\left( - \Delta \right) \varphi_0}, 
\nonumber \\
\eqref{eq: microscopic energy squared 1 c 3a} + \eqref{eq: microscopic energy squared 1 c 3b}
=& 2 \scp{ \varphi_0}{\vAc^2 \varphi_0} 
\scp{\varphi_0}{\left( - \Delta \right) \varphi_0}
+ \Lambda^2/(2 \pi^2 N) \norm{\nabla \varphi_0}^2,
\nonumber  \\
\eqref{eq: microscopic energy squared 1 c 4}
=& - 4 \scp{\varphi_0}{\vAc \nabla \varphi_0}^2  - 
\nonumber \\
-& 2/N
 \int d^3x \int d^3y \varphi_0^*(x) \varphi_0^*(y) \vgp_{kl}(x-y)
 \nabla^k \varphi_0(x) \nabla^l \varphi_0(y).  
 \end{align}
In order to evaluate the last three lines, we use that
\begin{align}
 N^{-3/2} \scp{W(\sqrt{N} \alpha_0) \Omega}{\vA^2(x) \vA^i(y) W(\sqrt{N} \alpha_0) \Omega}_{\mathcal{F}_p}  
=& \vAc^2(x) \vAc^i(y)   
+   \Lambda^2/(4 \pi^2 N) \vAc^i(y)  
\nonumber \\
+& N^{-1} \sum_{j=1}^3  \vgp_{ij}(x-y) \vA^j(x)    ,   
\end{align}
and 
\begin{align}
N^{-2}  \scp{W(\sqrt{N} \alpha_0) \Omega}{&\vA^2(x) \vA^2(y) W(\sqrt{N} \alpha_0) \Omega}_{\mathcal{F}_p} 
= \vAc^2(x) \vAc^2(y)  +   
\nonumber \\
&+  \Lambda^2/(4 \pi^2 N) \left( \vAc^2(x) + \vAc^2(y) \right)
+ 2/N \sum_{k,l=1}^3 \vgp_{kl}(x-y) \vAc^k(x) \vAc^l(y)  
\nonumber \\
&+ N^{-2} \Big( \sum_{k,l=1}^3 \abs{\vgp_{kl}(x-y)}^2 + \Lambda^4/(2 \pi)^4 \Big),
\end{align}
can also be obtained by the canonical commutation relations~\eqref{eq: canonical commutation relation} and  Lemma~\ref{lemma: Pauli Weyl operator properties}. Consequently, we have
\begin{align}
\eqref{eq: microscopic energy squared 1 c 5a} + \eqref{eq: microscopic energy squared 1 c 5b}
&= 4 i \scp{\varphi_0}{\vAc^2 \varphi_0}  \scp{\varphi_0}{\vAc \nabla \varphi_0}  
+ i \Lambda^2/(\pi^2 N) \scp{\varphi_0}{\vAc \nabla \varphi_0} 
\nonumber \\
&+ 4i/N \int d^3x \int d^3y  \, \varphi^*_0(x) \varphi^*_0(y)
\vgp_{kl}(x-y) \vAc^k(x) \varphi(x) \big( \nabla^l \varphi \big)(y) ,     
 \nonumber \\
\eqref{eq: microscopic energy squared 1 c 6}
&= \scp{\varphi_0}{\vAc^2 \varphi_0}^2
+ \Lambda^4/(2 \pi^2 N)  \scp{\varphi_0}{\vAc^2 \varphi_0} + \Lambda^4/(16 \pi^4 N^2)
\nonumber \\
&+   N^{-2}  \int d^3x   \int d^3y \, \varphi_0^*(x) \varphi_0^*(y)
\sum_{k,l} \abs{\vgp_{k,l}(x-y)}^2 \varphi(x) \varphi(y)
\nonumber  \\
&+ 2/N \int d^3x \int d^3y  \, \varphi_0^*(x) \varphi_0^*(y) \vgp_{kl}(x-y)
\vAc^k(x) \vAc^l(y) \varphi(x) \varphi(y).
\end{align}
Pulling all the pieces together and using that all error terms are bounded by $C \Lambda^4/N$ under the assumptions of Lemma~\ref{lemma: Pauli smallness of initial beta}, gives
\begin{align}
\abs{  \eqref{eq: microscopic energy squared 1 c off diagonal}
- \scp{\varphi_0}{\left( - i \nabla - \vAc \right)^2 \varphi_0}^2}
\leq& C \Lambda^4/N.
\end{align}
Since the diagonal part of \eqref{eq: microscopic energy squared 1} is of order $N^{-1}$, this implies
\begin{align}
\abs{ \eqref{eq: microscopic energy squared 1} - \eqref{eq: macroscopic energy squared 1}} \leq C \Lambda^4/N .
\end{align}

\subsubsection*{$\abs{ \eqref{eq: microscopic energy squared 2} - \eqref{eq: macroscopic energy squared 2}} \leq C/N $: }
By virtue of the symmetry of the wave function and $v(-x) = v(x)$ we can write line~\eqref{eq: microscopic energy squared 2} as
\begin{align}
(4 N^4)^{-1} \sum_{i \neq j, k \neq l}
&\SCP{v(x_i -x_j) \Psi_{N,0}}{v(x_k -x_l) \Psi_{N,0}}  =
\nonumber \\
&= 1/4 \SCP{v(x_1 - x_2) \Psi_{N,0}}{v(x_3 - x_4) \Psi_{N,0}}     
\nonumber \\
&- (6N^2 - 11 N + 6)N^{-3} \SCP{v(x_1 - x_2) \Psi_{N,0}}{v(x_3 - x_4) \Psi_{N,0}} 
\nonumber \\
&+ (N-1) N^{-3}/2 \SCP{v(x_1 - x_2) \Psi_{N,0}}{v(x_1 - x_2) \Psi_{N,0}} 
\nonumber \\
&+ (N-1)(N-2)N^{-3} \SCP{v(x_1 - x_2) \Psi_{N,0}}{v(x_1 - x_3) \Psi_{N,0}}.
\end{align}
The product structure of the initial state, \eqref{eq: Pauli direct interaction potential Young 1} and \eqref{eq: Pauli direct interaction potential Young 2} give
\begin{align}
\SCP{v(x_1 - x_2) \Psi_{N,0}}{v(x_3 - x_4) \Psi_{N,0}} &= \scp{\varphi_0}{\left( v * \abs{\varphi_0}^2 \right) \varphi_0}^2  \leq C( \norm{\varphi_0}_{\infty}) ,  
\nonumber \\
\norm{v(x_1-x_2) \Psi_{N,0}}^2 &= \scp{\varphi_0}{\left( v^2 * \abs{\varphi_0}^2 \right) \varphi_0}
\leq C( \norm{\varphi_0}_{\infty})
\end{align}
and we conclude 
\begin{align}
\abs{ (4 N^4)^{-1} \sum_{i \neq j, k \neq l}
&\SCP{v(x_i -x_j) \Psi_{N,0}}{v(x_k -x_l) \Psi_{N,0}}  -  1/4 \scp{\varphi_0}{\left( v * \abs{\varphi_0}^2 \right) \varphi_0}^2  } \leq
\nonumber \\
&\leq 6/N \abs{\SCP{v(x_1 - x_2) \Psi_{N,0}}{v(x_3 - x_4) \Psi_{N,0}}} 
+  N^{-1}  \norm{v(x_1-x_2) \Psi_{N0}}^2   
\nonumber \\
&+ N^{-1} \abs{\SCP{v(x_1 - x_2) \Psi_{N,0}}{v(x_1 - x_3) \Psi_{N,0}}}  
\nonumber \\
&\leq 8/N  \norm{v(x_1-x_2) \Psi_{N,0}}^2
= 8/N \scp{\varphi_0}{\left( v^2 *\abs{\varphi_0}^2 \right) \varphi_0} .
\end{align}

\subsubsection*{$\abs{ \eqref{eq: microscopic energy squared 3} - \eqref{eq: macroscopic energy squared 3}} \leq C/N $:}
This bound results from Lemma~\ref{lemma: Pauli mean values for Weyl states} because
\begin{align}
N^{-2} \SCP{\Psi_{N0}}{H_f^2 \Psi_{N0}}
&= N^{-2} \scp{W(\sqrt{N} \alpha_0) \Omega}{H_f^2 W(\sqrt{N} \alpha_0) \Omega}_{\mathcal{F}_p}  
= \norm{u_0}^4 + N^{-1} ||\abs{\cdot}^{1/2} u_0 ||^2 .
\end{align}

\subsubsection*{$\abs{ \eqref{eq: microscopic energy squared 4} - \eqref{eq: macroscopic energy squared 4}} \leq C \Lambda^2/N$: }
Line \eqref{eq: microscopic energy squared 4} simplifies to
\begin{align}
 N^{-3} \sum_{j,k \neq l} \Re
&\SCP{ \left(- i \nabla_j -  N^{-1/2} \vA(x_j)   \right)^2   \Psi_{N,0}}{v(x_k - x_l) \Psi_{N,0}}  
\nonumber \\
&= (N-1)(N-2) N^{-2} \Re 
\SCP{ \left(- i \nabla_1 -  N^{-1/2} \vA(x_1)   \right)^2   \Psi_{N,0}}{v(x_2 - x_3) \Psi_{N,0}}    
\nonumber \\
&+ 2(N-1) N^{-2} \Re
\SCP{ \left(- i \nabla_1 - N^{-1/2} \vA(x_1)   \right)^2   \Psi_{N,0}}{v(x_2 - x_1) \Psi_{N,0}}   
\nonumber  \\
&= \left( 1 - 3(N-2) N^{-2} \right) \scp{\varphi_0}{ \left( -i \nabla - \vAc \right)^2   \varphi_0}
\scp{\varphi_0}{\left( v * \abs{\varphi_0}^2 \right) \varphi_0}  
\nonumber \\
&+ (N-1)(N-2) N^{-3} \Lambda^2/(4 \pi^2) \scp{\varphi_0}{\left( v * \abs{\varphi_0}^2 \right) \varphi_0}  
\nonumber  \\
&+ 2(N-1) N^{-2} \Re
\SCP{ \left(- i \nabla_1 - N^{-1/2} \vA(x_1)   \right)^2   \Psi_{N,0}}{v(x_2 - x_1) \Psi_{N,0}}.
\end{align}
Consequently the estimate follows  because  $\Big| \Big| \left(- i \nabla_1 -  N^{-1/2} \vA(x_1)   \right)^2 \Psi_{N,0} \Big| \Big|$ is finite under the assumptions of Lemma~\ref{lemma: Pauli smallness of initial beta}.

\subsubsection*{$\abs{ \eqref{eq: microscopic energy squared 5} - \eqref{eq: macroscopic energy squared 5}} \leq C \Lambda^2/N $: }
Similar to the previous calculations we obtain
\begin{align}
2N^{-2} \sum_{j=1}^N \Re
&\SCP{ \left( - i \nabla_j - N^{-1/2} \vA(x_j) \right)^2   \Psi_{N,0}}{H_f \Psi_{N,0}} 
\nonumber \\
=& 2 \Re \SCP{\left( - \Delta_1 + 2i N^{-1/2}  \vA(x_1) + N^{-1} \vA^2(x_1)  \right) \Psi_{N,0}}{ N^{-1} H_f \Psi_{N,0}} 
\nonumber \\
=& 2 \scp{\varphi_0}{ \left( -i \nabla - \vAc \right)^2  \varphi_0} \norm{u_0}^2
\nonumber \\
+& 2 N^{-1} \Re \left(  \Lambda^2/(4 \pi^2) \norm{u_0}^2
- 4 \scp{\nabla \varphi_0}{\vEc^+(x) \varphi_0} - 2i \scp{\vAc \varphi_0}{\vEc^+ \varphi_0}  \right).
\end{align}
By means of
\begin{align}
\abs{\scp{\vAc \varphi_0}{\vEc^+ \varphi_0}} \leq& \norm{\varphi_0}_{\infty} \norm{\vEc^+}  \norm{\vAc}_{\infty},
\nonumber \\
\abs{\scp{\nabla \varphi_0}{\vEc^+ \varphi_0}} \leq& \norm{\varphi_0}_{\infty} \norm{\vEc^+} \norm{\nabla \varphi_0} ,
\end{align}
and 
\begin{align}
\norm{\vEc^+}_2^2 
=& \frac{1}{2} \sum_{\lambda=1,2} \int_{\abs{k} \leq \Lambda} d^3k \, \abs{k} \abs{\alpha_0(k,\lambda)}^2 \leq \norm{u_0}^2
\end{align}
the inequality follows.

\subsubsection*{$\abs{ \eqref{eq: microscopic energy squared 6} - \eqref{eq: macroscopic energy squared 6}} \leq C/N $:}
Making use of symmetry and Lemma~\ref{lemma: Pauli mean values for Weyl states} one has
\begin{align}
N^{-3} \sum_{j \neq k} \Re \SCP{v(x_k - x_k) \Psi_{N,0}}{H_f \Psi_{N,0}}  
&= (N-1) N^{-2} \SCP{v(x_1 - x_2) \Psi_{N,0}}{H_f \Psi_{N,0}}  
\nonumber \\
&= \left( 1 -  N^{-1} \right) \scp{\varphi_0}{\left( v * \abs{\varphi_0}^2 \right) \varphi_0} \norm{u_0}^2.
\end{align}
This shows the last inequality and altogether we obtain
\begin{align}
\abs{\SCP{ N^{-1} H_N \Psi_{N,0}}{ N^{-1} H_N \Psi_{N,0}} -\mathcal{E}^2_M[\varphi_0, u_0]} \leq  C \Lambda^4 N^{-1}  ,
\end{align}
which proves Lemma~\ref{lemma: Pauli smallness of initial beta}.
\end{proof}

\section{Proof of Theorem~\ref{theorem: Pauli main theorem}}

Let $v$ satisfy (A1), $(\varphi_t,\boldsymbol{A}(t), \boldsymbol{E}(t)) \in \mathcal{G}$ and  $\Psi_{N,t}$ be the unique solution of \eqref{eq: Pauli Schroedinger equation microscopic} with initial data $\Psi_{N,0} \in \left(L^2_s \left( \mathbb{R}^{3N} \right) \otimes \mathcal{F}_p \right) \cap \mathcal{D}(H_N)$. 
According to Lemma~\ref{lemma: Pauli dt beta} and Lemma~\ref{lemma: Pauli smallness of initial beta} there is a monotone increasing function of $\norm{\varphi_s}_{H^2(\mathbb{R}^3)}$, $\norm{\vAc (s)}_{\infty}$ and $\mathcal{E}_M[\varphi_s, u_s]$ such that
 \begin{align}
 \beta(\Psi_{N,t},\varphi_t,u_t) \leq e^{\Lambda^4 \int_0^t ds \, C(s)} \left( a_N + b_N + c_N + \Lambda/N  \right).
\end{align}
The energy $\mathcal{E}_M[\varphi_s,u_s] = \mathcal{E}_M[\varphi_0,u_0]$ is a finite constant of motion. Moreover, we have $\norm{\vAc}_{\infty} \leq \norm{\boldsymbol{A}}_{H^2(\mathbb{R}^3)}$. This displays that $C(s)$ only depends on $\norm{\varphi_{s}}_{H^2(\mathbb{R}^2)}$
and $\norm{\boldsymbol{A}}_{H^2(\mathbb{R}^3)}$. We choose for a given time $t \geq 0$ the number $N$ of charges large enough so that $\beta(\Psi_{N,t},\varphi_t,u_t) \leq 1$ and obtain
\begin{align}
\text{Tr}_{L^2(\mathbb{R}^3)} \abs{\gamma_{N,t}^{(1,0)} - \ket{\varphi_t} \bra{\varphi_t}} &\leq
  \sqrt{a_N + b_N + c_N + \Lambda/N } \, e^{\Lambda^4 \int_0^t ds \, C(s)}
\nonumber\\
\text{Tr}_{\mathfrak{h}} \abs{\gamma_{N,t}^{(0,1)} - \ket{u_t}\bra{u_t}}  &\leq  
\sqrt{a_N + b_N + c_N + \Lambda/N } \, 6 ( 1 + \norm{u_t}_{\mathfrak{h}} ) e^{\Lambda^4 \int_0^t ds \, C(s)} 
\end{align}
by Lemma~\ref{lemma: Pauli relation between beta and reduced density matrices}. Then, we recall \eqref{eq: Pauli dependence of the field energy on A and E} and derive
\begin{align}
\text{Tr}_{L^2(\mathbb{R}^3)} \abs{\gamma_{N,t}^{(1,0)} - \ket{\varphi_t} \bra{\varphi_t}} &\leq
\sqrt{a_N + b_N + c_N + N ^{-1} } \, \Lambda e^{\Lambda^4 \int_0^t ds \, C(s)} 
\nonumber\\
\text{Tr}_{\mathfrak{h}} \abs{\gamma_{N,t}^{(0,1)} - \ket{u_t}\bra{u_t}}  &\leq  
\sqrt{a_N + b_N + c_N + N^{-1} } \,  \Lambda C(s) e^{\Lambda^4 \int_0^t ds \, C(s)} 
\end{align}
where $C(s)$ depends on $\norm{\varphi_{s}}_{H^2(\mathbb{R}^2)}$, $\norm{\boldsymbol{A}}_{H^2(\mathbb{R}^3)}$ and $\norm{\boldsymbol{E}}_{L^2(\mathbb{R}^3)}$.
For initial states of product type $\Psi_{N,0} = \varphi_0^{\otimes N} \otimes W(\sqrt{N} \alpha_0) \Omega$ this becomes
\begin{align}
\text{Tr}_{L^2(\mathbb{R}^3)} \abs{\gamma_{N,t}^{(1,0)} - \ket{\varphi_t} \bra{\varphi_t}} &\leq
 N^{-1/2}  \Lambda^2 e^{\Lambda^4 \int_0^t ds \, C(s)}
\nonumber\\
\text{Tr}_{\mathfrak{h}} \abs{\gamma_{N,t}^{(0,1)} - \ket{u_t}\bra{u_t}}  &\leq
 N^{-1/2}  \Lambda^2 C(s) e^{\Lambda^4 \int_0^t ds \, C(s)} .
\end{align}

\section{Appendix}
\label{section: appendix}

\begin{lemma}
\label{lemma: Pauli field operator times p or q against functional}
Let $\Psi_{N,t} \in \left(L^2_s \left( \mathbb{R}^{3N} \right) \otimes \mathcal{F}_p \right) \cap \mathcal{D}(H_N)$ and $(\varphi_t,\boldsymbol{A}(t), \boldsymbol{E}(t)) \in \mathcal{G}$. Then

\begin{align}
\int d^3y \,  \norm{ \left( N^{-1/2} \vEp(y)  - \vEpc(y,t) \right) \Psi_{N,t}}^2
&= \SCP{\Psi_{N,t}}{ \vEdm(y,t) \vEdp(y,t) \Psi_{N,t}}_{;y} 
 \leq \beta^b(t) .
\end{align}
For $\vG \in \left\{ \vAdp, \vAdm, \vEdp, \vEdm \right\}$ one obtains
\begin{align}
\norm{\vG(y,t) \nabla_1 p_1 \Psi_{N,t}}_{;y}^2 
&\leq  C \norm{\nabla_1 \varphi_t}_{L^2(\mathbb{R}^3)}^2   \norm{\vG(y,t) \Psi_{N,t}}_{;y}^2,
\nonumber \\
\norm{\vG(x_ 1,t) p_1 \Psi_{N,t}}^2 
&\leq C \norm{\varphi_t}_{L^{\infty}(\mathbb{R}^3)}^2  \norm{\vG(y,t) \Psi_{N,t}}_{;y}^2.
\end{align}
\end{lemma}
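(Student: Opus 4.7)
The proof rests on three largely computational observations, one per inequality.

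For the first line, I would first note that the adjoint of $N^{-1/2}\vEp(y) - \vEpc(y,t)$ is exactly $N^{-1/2}\vEm(y) - \vEmc(y,t) = \vEdm(y,t)$, since taking the adjoint swaps $a(k,\lambda) \leftrightarrow a^*(k,\lambda)$ and complex-conjugates $\alpha_t(k,\lambda)$; this makes the claimed equality immediate. Then I would expand the two operators in their momentum-space integral representations. The $y$-integration produces $\int d^3y\, e^{i(k'-k)y} = (2\pi)^3 \delta^{(3)}(k-k')$, while $(2\pi)^3 \abs{\tilde\kappa(k)}^2 = \id_{\abs{k}\leq\Lambda}(k)$ and the orthogonality $\vep_\lambda(k)\cdot\vep_{\lambda'}(k) = \delta_{\lambda\lambda'}$ collapse the double sum/integral, leaving
\begin{align*}
\sum_{\lambda=1,2}\int d^3k\,\frac{\abs{k}}{2}\,\norm{\bigl(N^{-1/2}a(k,\lambda) - \alpha_t(k,\lambda)\bigr)\Psi_{N,t}}^2 \;=\; \tfrac{1}{2}\beta^b(t) \;\leq\; \beta^b(t).
\end{align*}

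For the second bound, the key point is that $\vG(y,t)$ acts only on the photon Fock space, while $\nabla_1 p_1$ acts only on the coordinate $x_1$; the two operators commute. Thus
\begin{align*}
\norm{\vG(y,t)\nabla_1 p_1 \Psi_{N,t}}^2 \;=\; \SCP{\Psi_{N,t}}{p_1(-\Delta_1)p_1\, \vG^{*}(y,t)\vG(y,t)\Psi_{N,t}}.
\end{align*}
Because $p_1$ is rank one, $p_1(-\Delta_1)p_1 = \norm{\nabla\varphi_t}^2_{L^2} p_1$. Since $p_1$ commutes with the nonnegative operator $\vG^{*}(y,t)\vG(y,t)$ and $p_1 \leq \id$, one obtains the pointwise bound $\SCP{\Psi_{N,t}}{p_1 \vG^{*}\vG\,\Psi_{N,t}} \leq \SCP{\Psi_{N,t}}{\vG^{*}\vG\,\Psi_{N,t}}$, and integrating in $y$ yields the stated inequality (in fact with $C=1$).

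For the third bound, I would use the factorization $(p_1\Psi_{N,t})(x_1,\ldots,x_N) = \varphi_t(x_1)\,\chi(x_2,\ldots,x_N)$ with $\chi(x_2,\ldots,x_N) \coloneqq \int d^3y\,\varphi_t^{*}(y)\Psi_{N,t}(y,x_2,\ldots,x_N)$. Since $\vG(x_1,t)$ commutes with multiplication by $\varphi_t(x_1)$,
\begin{align*}
\norm{\vG(x_1,t)p_1\Psi_{N,t}}^2 \;\leq\; \norm{\varphi_t}^2_{L^\infty} \int d^{3N}x\,\norm{\vG(x_1,t)\chi(x_2,\ldots,x_N)}^2_{\mathcal{F}_p}.
\end{align*}
Cauchy--Schwarz applied to the defining $y$-integral of $\chi$, together with $\norm{\varphi_t}_{L^2}=1$, gives $\norm{\vG(x_1,t)\chi(x_2,\ldots,x_N)}^2 \leq \int d^3y\,\norm{\vG(x_1,t)\Psi_{N,t}(y,x_2,\ldots,x_N)}^2$. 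Renaming $y \leftrightarrow x_1$ in the remaining integral identifies the right-hand side with $\norm{\vG(y,t)\Psi_{N,t}}_{;y}^2$. None of the three estimates presents a substantive obstacle; the only care required is keeping track of which operators act on particle configuration space versus Fock space, and noticing that the adjoints of $\vAdp$ and $\vEdp$ are precisely their minus-type counterparts $\vAdm$ and $\vEdm$.
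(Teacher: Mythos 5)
Your proof is correct and follows essentially the same route as the paper's: the $y$-integration produces a $\delta(k-l)$ which, together with $\vep_\lambda(k)\cdot\vep_\mu(k)=\delta_{\lambda\mu}$ and $(2\pi)^3\abs{\tilde\kappa(k)}^2=\id_{\abs{k}\leq\Lambda}(k)$, diagonalizes the mode integrals for the first claim; $p_1(-\Delta_1)p_1=\norm{\nabla\varphi_t}_{L^2}^2\,p_1$ together with $\norm{p_1\vG(y,t)\Psi_{N,t}}\leq\norm{\vG(y,t)\Psi_{N,t}}$ gives the second; and averaging $\vG^*(x_1,t)\vG(x_1,t)$ against $\abs{\varphi_t}^2$ (your tensor factorization of $p_1\Psi_{N,t}$ is the same computation in different notation) gives the third. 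The only cosmetic slip is that, because of the cutoff, the first quantity equals $\tfrac12\sum_{\lambda}\int_{\abs{k}\leq\Lambda}d^3k\,\abs{k}\,\norm{(N^{-1/2}a(k,\lambda)-\alpha_t(k,\lambda))\Psi_{N,t}}^2$, which is only $\leq\tfrac12\beta^b(t)$ rather than exactly equal to it; the stated bound is of course unaffected.
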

\begin{proof}
The first inequality is proven by
\small
\begin{align}
\norm{\vEdp(y,t) \Psi_N}_{;y}^2 
& =  1/2
 \sum_{\lambda=1,2} \int d^3k \, \tilde{\kappa}(k) \abs{k}^{1/2} \vep_{\lambda}(k) 
 \sum_{\mu =1,2} \int d^3l \, \tilde{\kappa}(l)  \abs{l}^{1/2}   \vep_{\mu}(l)   \int d^3y \, e^{i(l-k)y}  
 \nonumber \\
&  \times  \SCP{\left( N^{-1/2} a(k,\lambda) - \alpha_t(k,\lambda) \right) \Psi_N}{\left(  N^{-1/2} a(l,\mu) - \alpha_t(l,\mu) \right) \Psi_N}
\nonumber \\
&=  (2 \pi)^3/2 \int d^3l \, \abs{\tilde{\kappa}(k)}^2 
\abs{k} \sum_{\lambda,\mu} \vep_{\lambda}(k) \vep_{\mu}(k)  \times
\nonumber \\
& \times  \SCP{\left( N^{-1/2} a(k,\lambda) - \alpha_t(k,\lambda) \right) \Psi_N}{\left(  N^{-1/2} a(k, \mu) - \alpha_t(k,\mu) \right) \Psi_N} 
\nonumber \\
&= 1/2 \sum_{\lambda=1,2} \int_{\abs{k} \leq \Lambda}  d^3k \, \abs{k}
\norm{\left( N^{-1/2} a(k,\lambda) - \alpha_t(k,\lambda) \right) \Psi_N}^2  \leq \beta^b  .
\end{align}
\normalsize
We continue with
\begin{align}
\norm{\vG(y,t) \nabla_1 p_1 \Psi_N}_{;y}^2  
&= \int d^3y \, \norm{\vG(y,t) \nabla_1 p_1 \Psi_N}^2
= \int d^3y \, \norm{\nabla_1 p_1 \vG(y,t)  \Psi_N}^2
\nonumber\\
&= \int d^3y \, \SCP{\vG(y,t)  \Psi_N}{p_1 (- \Delta_1) p_1 \vG(y,t)  \Psi_N}.
\end{align}
So if we use $p_1 (- \Delta_1) p_1 = p_1 \norm{\nabla \varphi}_2^2$ we get
\begin{align}
\norm{\vG(y,t) \nabla_1 p_1 \Psi_N}_{;y}^2 
&= \norm{\nabla \varphi}^2 \int d^3y \, \norm{p_1 \vG(y,t)  \Psi_N}^2 
\leq \norm{\nabla \varphi}^2 \int d^3y \, \norm{\vG(y,t)  \Psi_N}^2. 
\end{align}
In the same way, we apply
\begin{align}
p_1   \big( \vG(x_1,t) \big)^* \vG(x_1,t) p_1 &= p_1 \int d^3y \, \abs{\varphi(y)}^2 \big( \vG(y,t) \big)^* \vG(y,t)
\end{align}
to show the third inequality
\begin{align}
\norm{\vG(x_ 1,t) p_1 \Psi_N}^2
&= \int d^3y \, \abs{\varphi(y)}^2 \SCP{\Psi_N}{p_1  \big( \vG(y,t) \big)^* \vG(y,t) \Psi_N}
\nonumber \\
&= \int d^3y \, \abs{\varphi(y)}^2  \norm{p_1  \vG(y,t) \Psi_N}^2
\leq \norm{\varphi}_{\infty}^2 \norm{\vG(y,t) \Psi_N}_{;y}^2.
\end{align}

\end{proof}

\section*{Acknowledgments}

We thank Dirk Andr\'{e} Deckert, Jan Derezi\'{n}ski, Detlef D\"urr, Marco Falconi, Maximilian Jeblick, Vytautas Matulevi\v{c}ius,  and Alessandro Michelangeli for many helpful remarks. We are deeply grateful to Vytautas Matulevi\v{c}ius for valuable discussions at the early stage of this project and to Alessandro Michelangeli for helpful remarks concerning the Maxwell-Schr\"odinger system. 
N.L. gratefully acknowledges financial support by the Cusanuswerk and the European Research Council (ERC) under the European Union's Horizon 2020 research and innovation programme (grant agreement No 694227).
The article appeard in slightly different form in one of the author's (N.L.) PhD thesis \cite{leopold2}.

{}

\end{document}